\newcommand{\SL}{\mathsf{SL}}
\newcommand{\sign}{\operatorname{sign}}
\newcommand{\C}{\ensuremath{\mathbb{C}} }
\newcommand{\R}{\ensuremath{\mathbb{R}} }
\newcommand{\N}{\ensuremath{\mathbb{N}} }
\newcommand{\be}{\begin{equation}}
\newcommand{\ee}{\end{equation}}
\newcommand{\bes}{\begin{eqnarray}}
\newcommand{\ees}{\end{eqnarray}}
\newcommand{\beq}{\begin{eqnarray}}
\newcommand{\eeq}{\end{eqnarray}}
\def\f{\frac}
\def\nn{\nonumber}
\def\bw{\bar{w}}
\def\bz{\bar{z}}
\def\bpsi{\bar{\psi}}
\def\ceta{\bar{\eta}}
\def\eps{\epsilon}
\def\te{\tilde{e}}
\def\tZ{\tilde{Z}}
\def\trho{\tilde{\rho}}
\def\la{\langle}
\def\ra{\rangle}
\newcommand{\id}{\mathbb{I}}
\def\tP{\tilde{P}}
\def\pp{\partial}
\def\Om{\Omega}
\def\vV{\vec{V}}
\def\vsigma{\vec{\sigma}}
\newcommand{\mat} [2] {\left ( \begin{array}{#1}#2\end{array} \right ) }
\newcommand{\cL}{{\mathcal L}}
\newcommand{\cM}{{\mathcal M}}
\newcommand{\cC}{{\mathcal C}}
\newcommand{\cW}{{\mathcal W}}
\newcommand{\cZ}{{\mathcal Z}}
\newcommand{\cP}{{\mathcal P}}
\newcommand{\cU}{{\mathcal U}}
\newcommand{\SU}{\mathrm{SU}}
\newcommand{\G}{\Gamma}
\newcommand{\Edges}{\mathsf{E}}
\newcommand{\Ver}{\mathsf{V}}
\newcommand{\EdgesO}{\vec{\mathsf{E}}}
\newcommand{\Angles}{\mathsf{A}}
\newcommand{\Pf}{\mathsf{Pf}}
\newcommand{\col}{col}
\newcommand{\e}{Y}
\newcommand{\an}{X}
\newtheorem{theo}{Theorem}[section]
\newtheorem{lemma}[theo]{Lemma}
\newtheorem{prop}[theo]{Proposition}
\theoremstyle{definition}
\newtheorem{defi}[theo]{Definition}
\newtheorem{rem}[theo]{Remark}
\theoremstyle{remark}
\begin{document}
\title[]{Duality between Spin networks and the 2D Ising model}

%

\author{Valentin Bonzom}\email{bonzom@lipn.univ-paris13.fr}
\affiliation{LIPN, UMR CNRS 7030, Institut Galil\'ee, Universit\'e Paris 13, Sorbonne Paris Cit\'e, 99, avenue Jean-Baptiste Cl\'ement, 93430 Villetaneuse, France, EU}
\author{Francesco Costantino}\email{francesco.costantino@math.univ-toulouse.fr}
\affiliation{Institut de Math\'ematiques de Toulouse,  Universit\'e de Toulouse III Paul Sabatier, 118 route de Narbonne, Toulouse 31062 France}
\author{Etera R. Livine}\email{etera.livine@ens-lyon.fr}
\affiliation{Laboratoire de Physique, ENS Lyon, CNRS UMR 5672, 46 All\'ee d'Italie, 69007 Lyon, France}
\affiliation{Korea Institute for Advanced Study, Seoul 130-722, Korea}

\date{\small \today}

\begin{abstract}

The goal of this paper is to exhibit a deep relation between the partition function of the Ising model on a planar trivalent graph and the generating series of the spin network evaluations on the same graph. We provide respectively a fermionic and a bosonic Gaussian integral formulation for each of these functions and we show that they are the inverse of each other (up to some explicit constants) by exhibiting a supersymmetry relating the two formulations. 

We investigate three aspects and applications of this duality. First, we propose higher order supersymmetric theories which couple the geometry of the spin networks to the Ising model and for which supersymmetric localization still holds. Secondly, after interpreting the generating function of spin network evaluations as the projection of a coherent state of loop quantum gravity onto the flat connection state, we find the probability distribution induced by that coherent state on the edge spins and study its stationary phase approximation. It is found that the stationary points correspond to the critical values of the couplings of the 2D Ising model, at least for isoradial graphs. Third, we analyze the mapping of the correlations of the Ising model to spin network observables, and describe the phase transition on those observables on the hexagonal lattice.

This opens the door to many new possibilities, especially for the study of the coarse-graining and continuum limit of spin networks in the context of quantum gravity.

\end{abstract}

\maketitle
\setcounter{tocdepth}{3}

\tableofcontents

\section{Introduction}

This paper is devoted to studying the relationship between two fundamental objects in physics, associated to finite graphs: the two-dimensional Ising model and the spin network evaluations on planar graphs. 

Given a graph $\G$ and a \emph{coloring} of the edges of $\Gamma$, i.e. a map $c:\Edges\to \mathbb{N}$ where $\Edges$ is the set of edges, the \emph{classical spin network evaluation} $\la \Gamma,c\ra$ is a rational number which is the result of  contracting some tensors over irreducible representations of $\SU(2)$. Spin networks arise in many areas, in particular related to physics. Since they come from the representation theory of $\SU(2)$, they are objects of prime interest in the theory of quantum angular momentum \cite{Ed} where they are often called Wigner symbols. As such, they have applications in atomic/molecular physics, chemistry, quantum information and so on \cite{Marzuoli}. More recent applications stem from quantum gravity as spin network equipped with holonomies are the states of loop quantum gravity \cite{SpinNetworksBaez, Thiemann} while their evaluations provide quantum gravity amplitudes, known as spin foams, \cite{PR, NouiPerez3D, SpinFoamBaez}. The latter are intimately related to lattice topological invariants, such as the Reidemeister torsion \cite{BarrettNaish, TwistedCohomology, CellularQuant} and the Turaev-Viro invariant of 3-manifolds.

Spin network evaluations can be computed in many different ways. The most famous is certainly the combinatorial definition due to R.~Penrose \cite{Pe}. In quantum gravity one often uses contractions of $\SU(2)$ intertwiners (notice that it requires an orientation on the edges while Penrose's definition does not -- we will prove the equivalence between those evaluations in the main text). In the last years it has been understood \cite{We, CoMa, Spin1/2Hamiltonian, bonzom, laurent}
that a good way to study spin network evaluations on a fixed graph $\Gamma$ is to organize it in a single generating series $Z^{spin}(\Gamma):=\sum_{c} \la \Gamma,c\ra {\bf Y}^c$ where the symbols ${\bf Y}^c$ stand for a suitable multivariate monomial in formal variables $Y_e, e\in \Edges$ (full details will be provided later).  

On a seemingly different side of physics (and mathematical physics), the 2D Ising model can be defined on the same graph $\G$. It probably is the most famous statistical model, based on the configuration space of maps from the set $\Ver$ of vertices of $\Gamma$ into $\{\pm 1\}$. The Hamiltonian (energy function) of the model is a sum of interactions between nearest-neighboring sites of $\G$, hence associated to the set $\Edges$ of edges of $\Gamma$, and weighted by couplings $y_e, e\in \Edges$. The partition function $Z^{Ising}(\Gamma)$ was proved by van den Waerden to be proportional to a sum over even subgraphs of $\Gamma$ weighted by some monomials in $\tanh(y_e), e\in \Edges(\Gamma)$ (see \cite{BaxterBook} for instance).

The present paper is motivated by the following observation made in \cite{co};
if $\Gamma$ is a planar trivalent graph and for each edge $e\in \Edges$ we set $Y_e:=\tanh(y_e)$ then the following equality holds:
\begin{equation} \label{FundamentalEq}
\left(Z^{Ising}(\Gamma)\right)^2Z^{spin}(\Gamma)=\left(\prod_{e\in \Edges}\cosh(y_e) \right)^22^{\# 2\Ver}.
\end{equation}
Such a  relation was independently noticed and shown to hold for the square 2D lattice with homogeneous couplings  in \cite{bianca}.
A straightforward proof of that equality can be given using van der Waerden high temperature expansion for the Ising model and Westbury's formula for the generating series of spin networks. However, this approach sheds no light on the intimate reason why the equality holds.

In the present paper we explain this phenomenon by identifying a supersymmetry which relates the Ising model on a trivalent planar graph $\Gamma$ to the generating series of spin networks on it. In order to achieve this, we:
\begin{enumerate}
\item represent $Z^{Ising}$ and its square via fermionic Gaussian integrals on a space of fermions indexed by the half-edges of $\Gamma$ (Propositions \ref{prop:GrassmannIsing} and \ref{prop:GrassmannIsingC}),
\item compute $Z^{spin}$ as a standard (bosonic) Gaussian integral on $\C^{2\#\Edges}$ (Theorem \ref{teo:complexgaussian}),
\item provide a supersymmetry relating the fermions and bosons associated to each half-edge (Section \ref{sec:supersymmetry}).
\end{enumerate}
Formulas for $Z^{Ising}$ based on fermionic (or Grassmanian) integrals have existed since the beginning of the 80's (see \cite{isingfermion1}, \cite{isingfermion2}, \cite{isingfermion3}). Our contribution here is to provide a new formula in which the integrand is a Gaussian whose bilinear form is defined using a Kasteleyn orientation on $\Gamma$ (which always exists by \cite{CR1}, \cite{CR2} as $\Gamma$ has an even number of vertices). Our formula is inspired by the integral representation of \cite{Sportiello} on the hexagonal lattice (where Kasteleyn orientations are not required though).

On the spin network side, a first Gaussian integral representation for $Z^{spin}$ was found in \cite{CoMa} (see also \cite{laurent, bonzom} for another Gaussian integral approach applicable for non-trivalent graphs), where a regularization process was needed to compute the integral. Here we provide another (very similar) formula via a convergent integral but based on a space whose dimension is twice as large. Another observation here is that the definition of $Z^{spin}$ we use depends on an orientation on $\Gamma$ while standard spin networks do not. In Theorem \ref{teo:comparison} we prove that our evaluation still coincides with the standard one, provided the orientation is a Kasteleyn orientation. 

The main point in the above formulas is that they really ``look similar''. This is formalized in Section  \ref{sec:supersymmetry} by means of a supersymmetry between the fermionic and bosonic degrees of freedom. We show that a Berezin integral whose argument is the product of the above two integrands equals a constant, by a simple supersymmetric localization argument. 

This unveils a deep relation between two precedently  unrelated objects. We start exploring aspects and consequences of this in Subsection \ref{sec:nontrivialcoupled} and then in Section \ref{sec:isingcorrelations}.
In Subsection \ref{sec:nontrivialcoupled} we introduce a supersymmetry-preserving generalization (whose extensive study will be pursued elsewhere) where the Ising model and the spin networks on $\Gamma$ are coupled non-trivially. In particular, the Berezin integral is not Gaussian anymore. We argue that the supersymmetric localization we have found will be useful to compute such modified versions of the generating function of the spin networks evaluations.

In Section \ref{sec:isingcorrelations}, we start by recalling the origin of the generating series $Z^{spin}$ in loop quantum gravity, as a coherent state (see also \cite{bonzom} and \cite{Spin1/2Hamiltonian}) whose background geometry is set by the couplings $Y_e$ and where the spins on the edges are quantum numbers of length. The expectations of products of length operators in this coherent state are moments of a probability distribution on the spins which depends on the couplings $Y_e$, $e\in\Edges$. The stationary phase approximation of this distribution, at large spins, leads to relations between the couplings $Y_e$ and the values of the spins at the stationary points. Using the physical meaning of the spins as quantum numbers of length, those relations can be written in geometric terms. Remarkably, they are found to be the same as those which define the critical couplings of the 2D Ising model on isoradial graphs \cite{isoradial, BoutillierDeTiliereSurvey}. In particular, the spins only determine the geometrical shape at the stationary points and can be arbitrary rescaled.


We further use, in Section \ref{sec:correlations}, the fundamental equality \eqref{FundamentalEq} between $Z^{spin}$ and $Z^{Ising}$ to relate the observables of the Ising model, i.e. Ising spin correlations, to length observables in loop quantum gravity, in particular to matrix elements of products of length operators between the coherent state and the physical state (i.e. the flat connection state).
In Section \ref{sec:distributionspin} we push this further to compute the generating series of the matrix elements of the length operator of a single edge to an arbitrary power: it takes a simple closed form as a function of the nearest-neighbor correlation function of the Ising model.


Finally in Section \ref{vertexint}, another integral representation for coherent spin network states is introduced. It is based on integrals over spinors (variables on $\C^2$) associated to the vertices of the graph. This way, we show that one can define other generating functions which differ from $Z^{spin}$ by choices of combinatorial factors whilst still admitting Gaussian integral representations. We find in particular an example of such generating functions for which the stationary points of the distribution on the spins for the matrix elements of length operators between that coherent state and the flat connection state depend on the scale of the geometry, in contrast to the result obtained for the ordinary generating function $Z^{spin}$. This generalizes the comparison between different choices of generating functions of spin network evaluations started in \cite{bonzom} on 2-vertex graphs.


The questions opened by the present work are multiple and we hope that the supersymmetry we here introduce between the 2D Ising model and the spin network evaluations will allow to exchange and cross-fertilize results in Statistical Mechanics and Loop Quantum Gravity. We outline some of those perspectives in Section \ref{sec:conclusions}.


\medskip
{\bf Acknowledgements}
The three authors profited of a PEPS funding from CNRS for the project ``Spin-Ising''. 
{\bf Notations.}
In all the paper we will let $\G$ be a finite graph, $\Ver$ be the set of its vertices, $\Edges$ the set of its edges, $\EdgesO$ the set of its half-edges or, equivalently of its oriented edges ($\EdgesO$ has a natural $2\to1$ map to $\Edges$), and $\Angles$ the set of its \emph{angles} i.e. pairs of distinct half edges having the same endpoint.
For each oriented edge  ${e}:s(e)\to t(e)$, we write $s(e),t(e)\in \Ver$ for its \emph{source} and \emph{target} vertices.

Unless explicitly stated the contrary we will assume that $\G$ is planar, i.e. embedded (up to isotopy) in $S^2$. This automatically equips $\G$ with the datum of a cyclic, counter-clockwise ordering of the edges around each vertex. For each angle $\alpha$ around a vertex, following the cyclic ordering around that vertex, we call $s(\alpha)$ the source half-edge of the angle and $t(\alpha)$ its target half-edge.

Reciprocally, a cyclic ordering of the edges around each vertex allows to canonically thicken $\G$ to an oriented surface with boundary which we will denote $S(\G)$: since $\G$ is assumed to be planar, $S(\G)$ is homeomorphic to the complement of a collection open discs in $S^2$. A connected component of the complement of $S(\G)$ is called a \emph{face} of $\Gamma$ and it is naturally equipped with the counter-clockwise orientation.

$\G$ will further be assumed to be connected and bridgeless (i.e. 1-particle irreducible, or, equivalently no edge disconnects $\G$). Taking bridges into account is quite simple but requires to extend some definitions (like the equivalence class of Kasteleyn orientations) and it does not bring much to the theory. Recall indeed that from recoupling theory the spin of a bridge in a spin network has to vanish, which thus factorizes the spin network evaluation into two parts associated to disjoint graphs. Note that being bridgeless is equivalent to saying that each edge is incident to exactly two faces.

In the following we will associate functions to $\G$ which depend on one of two kinds of parameters: the $\e_{e}$ variables, indexed by $e\in \Edges$ and the $\an_{\alpha}$ variables, indexed by $\alpha\in \Angles$.

\section{2D Ising Model}

\subsection{Loop Expansion of the Ising Model}

\begin{defi}[Ising Model]
 An \emph{Ising spin configuration} on $\G$ is a map $\sigma:\Ver\to \{-1,+1\}$ associating $\pm1$ to each vertex of the graph. 
 The partition function of the Ising model on $\G$ is a function of couplings $y_{e}$ along each edge:
 $$Z^{Ising}(\G,\{y_e\})=\sum_{\sigma} \exp\left(\sum_{e\in\Edges } y_{e}\sigma_{s(e)}\sigma_{t(e)}\right).$$
 \end{defi}

The van der Waerden identity $\exp(y\sigma_u\sigma_v)=\cosh(\e)(1+\tanh(y)\sigma_u\sigma_v)$, for all $\sigma_{u,v}=\pm1$, allows to re-express the partition function as
\begin{equation*}
\begin{aligned}
Z^{Ising}(\G,\{y_e\})
&=
\big(\prod_{e\in \Edges}\cosh(y_e) \big)\sum_{\sigma}\prod_{e} (1+\tanh(y_e)\sigma_{s(e)}\sigma_{t(e)}) \\
&=
2^{\# \Ver} \big( \prod_{e\in\Edges} \cosh(y_e) \big)\sum_{\gamma \in \mathcal{G}}\prod_{e\in \gamma} \tanh(y_e)\nn
=
2^{\# \Ver} \big( \prod_{e\in\Edges} \cosh(y_e) \big)\sum_{\gamma \in \mathcal{G}}\prod_{e\in \gamma}\e_{e},
\end{aligned}
\end{equation*}
where we write $\e_{e}= \tanh(y_e)$ and  we sum over the set $\mathcal{G}$ of \emph{even subgraphs} of $\G$ (also known as \emph{Eulerian} subgraphs) i.e. subgraphs $\gamma\subset \G$ such that every vertex of $G$ is incident to an even number of edges of $\gamma$.

In this paper, we will focus on 3-valent graphs, i.e such that every vertex has exactly 3 edges attached to it. If $\Gamma$ is a 3-valent graph, then all even subgraphs are unions of disjoint loops. This is the high-temperature loop expansion of the Ising model, which we will match against the evaluation of spin networks.


\subsection{Grassmannians for the 2D Planar Ising Model}

\begin{defi}[Kasteleyn orientation]
A Kasteleyn orientation on $\G$ is an orientation of the edges such that each face has an odd number of edges whose orientations do not match the one induced by the face.
\end{defi}

When drawing $\G$ on the plane, it means each face has an odd number of clockwise edges, see Fig. \ref{fig:Kasteleyn}. If $\G$ is embedded in the plane (and therefore the outerface, i.e. the connected component of its complement that is infinite, is not considered as a face), then Kasteleyn orientations exist (Kasteleyn's theorem). If $\G$ is embedded in $S^2$, then the notion of outerface is meaningless as which face is the outerface in the drawing depends on the choice of projection on the plane. If one insists on drawing $\G$ in the plane, the outerface then has to have an odd number of counter-clockwise edges. It is known that Kasteleyn orientations on cellular decompositions of oriented compact surfaces exist if and and only if the number of vertices is even \cite{CR1}, \cite{CR2}. A regular graph of degree 3 has an even number of vertices.


\smallskip

We introduce a set of Grassmann variables attached to half-edges, $\{\psi_{s(e)},\psi_{t(e)}\}_e$, which all anti-commute with one another. We define an edge action for each edge $e$ and a corner action for each angle $\alpha$:
\begin{equation}
I_e(\psi_{s(e)},\psi_{t(e)}) = \psi_{s(e)}\,\psi_{t(e)},
\qquad
I_\alpha(\psi_{s(\alpha)},\psi_{t(\alpha)}) = \psi_{s(\alpha)}\,\psi_{t(\alpha)}.
\end{equation}
In what follows we shall also use the notation $\psi^v_e$ to denote the Grassmann variable associated to the half-edge contained in $e$ and incident to $v$ (if it exists it is unique as $\Gamma$ is bridgeless). 

\begin{prop}[Grassmannian expression of the Ising model] \label{prop:GrassmannIsing}
The partition function of the Ising model on  a 3-valent planar $\Gamma$ (equipped with a Kasteleyn orientation) reads
\be
Z^{Ising}(\Gamma,\{y_e\})
= \left(\prod_{e\in \Edges}\cosh(y_e) \right)2^{\# \Ver}\,
Z_{f}(\Gamma,\{X_{\alpha}\})\,,
\ee
$$
Z_{f}(\Gamma,\{X_{\alpha}\})\,=\,
\int \prod_e d\psi_{t(e)} d\psi_{s(e)}\ \exp\left(\sum_e I_e(\psi_{s(e)},\psi_{t(e)}) +\sum_\alpha X_\alpha\,I_\alpha(\psi_{s(\alpha)},\psi_{t(\alpha)}) \right),
$$
where $X_\alpha = (\e_{s(\alpha)}\,\e_{t(\alpha)})^{1/2}$.
%
The Grassmannian integral is normalized as $\int \prod_e d\psi_{t(e)} d\psi_{s(e)} \prod_e \psi_{t(e)} \psi_{s(e)} = 1$.
\end{prop}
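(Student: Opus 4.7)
The plan is to expand the Grassmann integrand, put its non-vanishing contributions in bijection with the even subgraphs of $\Gamma$ appearing in the van der Waerden expansion recalled at the start of the section, and then check that all signs equal $+1$ thanks to the Kasteleyn orientation of $\G$.

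First, since each Grassmann variable is nilpotent, both $I_e$ and $I_\alpha$ square to zero, so the exponential truncates as
$$\exp\Bigl(\sum_e I_e + \sum_\alpha X_\alpha I_\alpha\Bigr)=\prod_e (1+I_e)\prod_\alpha (1+X_\alpha I_\alpha).$$
Expanding yields a sum indexed by a pair $(E',A')$ of ``selected'' edges and angles. The Berezin integral kills every term except those in which each half-edge variable $\psi_e^v$ appears exactly once. At each vertex $v$, the three incident half-edges must be partitioned into singletons (one per selected edge at $v$) and pairs (one per selected angle at $v$); since $\Gamma$ is 3-valent, this forces either three selected edges at $v$ and no angle, or one selected edge and a single selected angle pairing the remaining two half-edges. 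Consequently $\gamma:=\Edges\setminus E'$ has even degree at every vertex, $A'$ is the unique matching of the half-edges of $\gamma$ around each vertex of $\gamma$, and the map $E'\mapsto\gamma$ is a bijection between admissible configurations and the set $\mathcal{G}$ of even subgraphs.

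Next I would match the weights. At a vertex $v$ lying on two edges $e_1,e_2\in\gamma$, the unique selected angle at $v$ contributes $X_\alpha=(Y_{e_1}Y_{e_2})^{1/2}$, so every $e\in\gamma$ collects $Y_e^{1/2}\cdot Y_e^{1/2}=Y_e$ from its two endpoints. The bosonic weight of each admissible configuration therefore equals $\prod_{e\in\gamma}Y_e$, and comparing with the van der Waerden formula
$$Z^{Ising}(\G,\{y_e\})=2^{\#\Ver}\Bigl(\prod_e\cosh y_e\Bigr)\sum_{\gamma\in\mathcal{G}}\prod_{e\in\gamma}Y_e$$
reduces the proposition to showing that each admissible configuration contributes a Grassmann monomial with sign $+1$ relative to the reference monomial $\prod_e \psi_{t(e)}\psi_{s(e)}$ fixed by the normalization of the measure.

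The main obstacle is precisely this sign analysis, and it is where the Kasteleyn hypothesis is essential. I would use the fact that a 3-valent even subgraph $\gamma$ is a disjoint union of simple cycles $C_1,\dots,C_k$, so that the total sign factorizes as a product of contributions over the $C_j$. For each cycle $C$ I would choose a traversal, rearrange the Grassmann variables attached to $C$ into a cyclic ordering along that traversal, and compute the resulting sign as a function of the parity of the number of edges of $C$ whose Kasteleyn orientation disagrees with the traversal, up to a vertex reordering contribution absorbed by the angle conventions. The defining property of the Kasteleyn orientation---each face bounds an odd number of edges disagreeing with its induced counter-clockwise orientation---then forces this parity, by the classical Kasteleyn argument applied inductively on the faces enclosed by $C$, to be exactly the one needed for $C$ to contribute $+1$. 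Assembling the signs over all cycles of $\gamma$ concludes the proof.
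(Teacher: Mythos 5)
Your overall strategy is the same as the paper's: truncate the exponentials, identify the surviving Berezin monomials with even subgraphs (disjoint unions of cycles, since $\G$ is 3-valent), match the weights via $X_\alpha=(Y_{s(\alpha)}Y_{t(\alpha)})^{1/2}$, and reduce the proposition to a per-cycle sign count controlled by the Kasteleyn orientation. The bijection between admissible edge/angle selections and even subgraphs, and the weight matching, are correct and essentially identical to the paper's argument.

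The gap is in the sign analysis, exactly where you locate the main obstacle. The sign of a cycle $c$ is not a function of the edge-orientation parity alone, and the vertex reordering contribution is \emph{not} ``absorbed by the angle conventions.'' Reordering each factor $\psi_{s(\alpha)}\psi_{t(\alpha)}$ to agree with the chosen traversal of $c$ produces a factor $(-1)$ exactly for the \emph{small} angles, so the total angle contribution is $(-1)^{\#\Edges(c)-a(c)}$, where $a(c)$ is the number of large angles of $c$ (those whose third half-edge points into the disc bounded by $c$); this quantity depends on the planar embedding and on $c$, not on any convention. The paper needs a separate combinatorial identity, $(-1)^{a(c)}=(-1)^{\#\Ver_{int}(c)}$, proven from Euler's formula for the disc bounded by $c$ together with 3-valence, and only its combination with the Kasteleyn parity $(-1)^{\#\Edges_{cl}(c)}=(-1)^{\#\Ver_{int}(c)+1}$ (plus the extra $(-1)$ from cyclically closing the chain of variables and the flips needed to match the measure $\prod_e d\psi_{t(e)}\,d\psi_{s(e)}$) yields an overall $+1$ for every cycle. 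Note in particular that $(-1)^{\#\Edges_{cl}(c)}$ is not constant over cycles --- it alternates with the number of interior vertices --- so an argument tracking only the parity of edges disagreeing with the traversal cannot produce a uniform sign; the large-angle count is precisely what cancels this dependence, and your proof needs to establish and use it.
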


The proof is based on the following lemma on Kasteleyn orientations. Let $c$ be a cycle of $\Gamma$. When $\G$ and $c$ are drawn in $\R^2$, there is a well-defined inside and outside of the cycle. The disc inside $c$ induces a counter-clockwise orientation on $c$. A \emph{large angle} of $c$ is a pair of half-edges of the cycle both incident to the same vertex and such that the third half-edge incident to this vertex (and not contained in $c$) lies on the inside, as illustrated in fig.\ref{fig:Kasteleyn}.

\begin{figure}
\includegraphics[width=7cm]{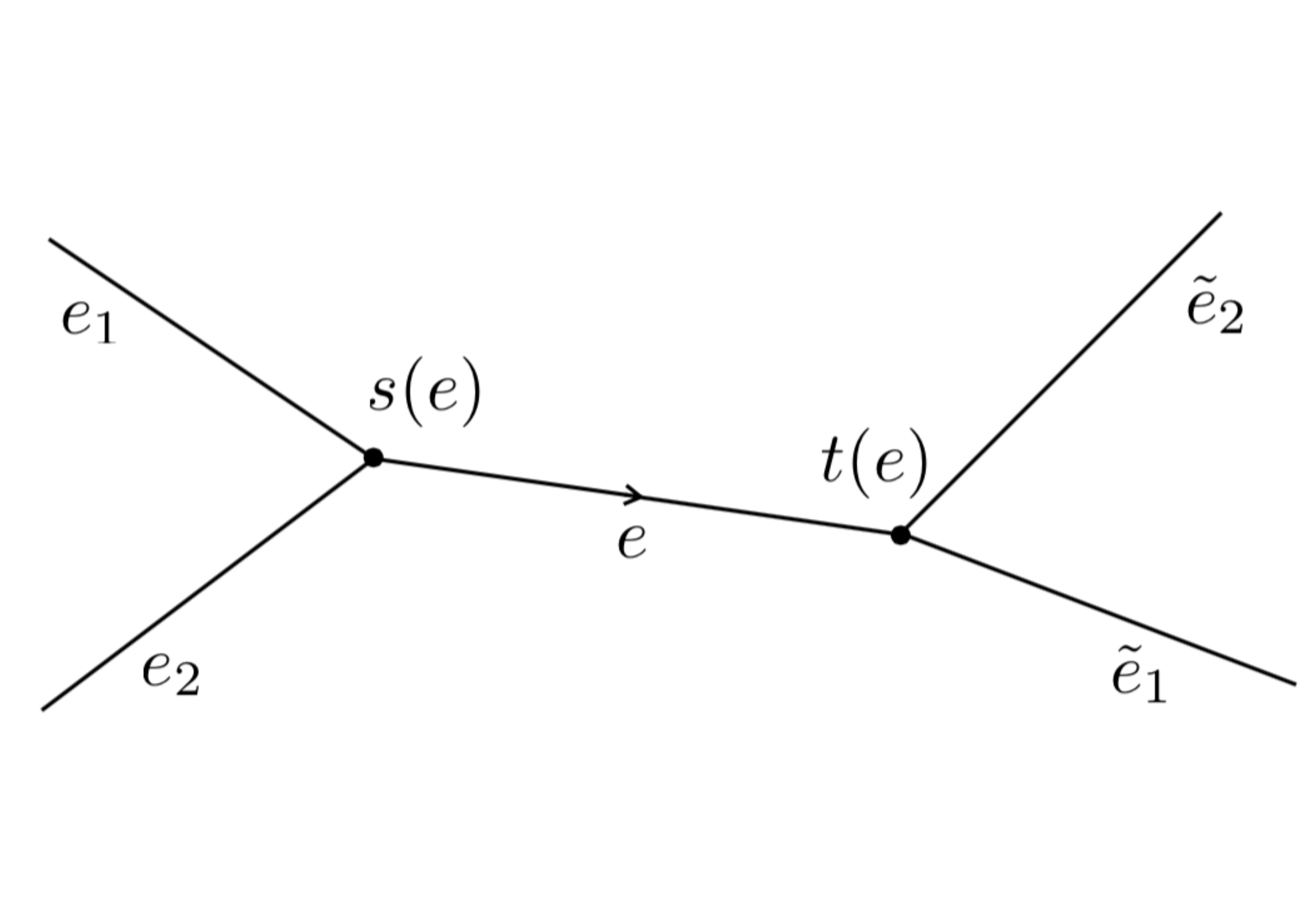}
\includegraphics[width=7cm]{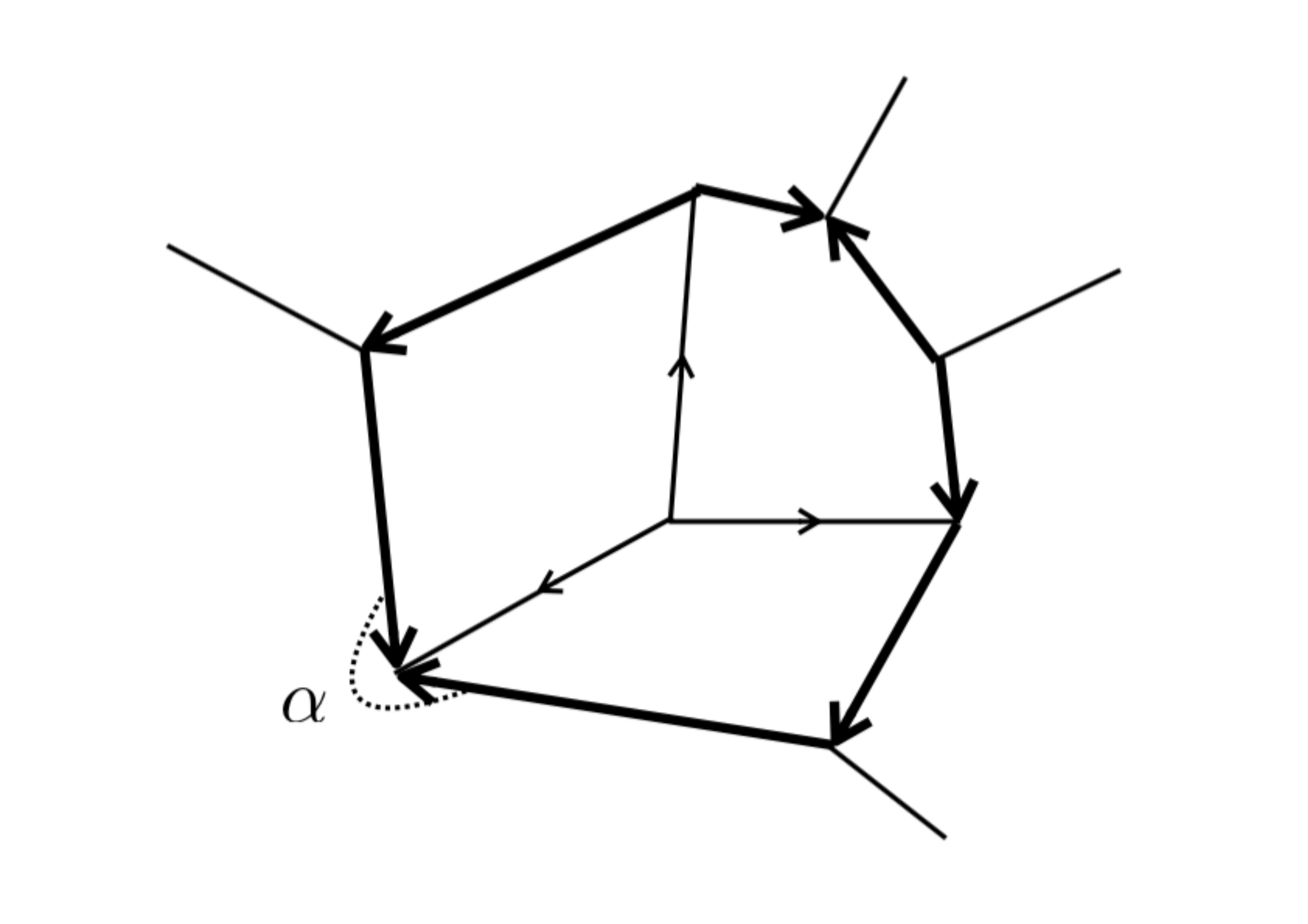}
\caption{Example of a Kasteleyn orientation around and inside a cycle $c$ (in
bold): each face of the planar graph contains a odd number of
clockwise oriented edges on its boundary, which implies that the
number of clockwise oriented edges around an arbitrary cycle is related
to the number of interior vertices; here $F=3$, $V_{int}=1$,
$E_{int}=3$, $a(c)=3$ ($\alpha$ is a large angle).}\label{fig:Kasteleyn}
\end{figure}

\begin{lemma} \label{lemma:KasteleynCycle}
Let $c$ be a cycle, $a(c)$ its number of large angles, $\#\Edges_{cl}(c)$ its number of clockwise edges, and $\#\Ver_{int}(c)$ the number of vertices on the inside. Then
\beq
&&(-1)^{a(c)} = (-1)^{\#\Ver_{int}(c)}, \label{LargeAngles}
\\
&&(-1)^{\#\Edges_{cl}(c)} = (-1)^{\Ver_{int}(c)+1}. \label{CCEdges}
\eeq
\end{lemma}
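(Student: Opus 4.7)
The plan is to prove both congruences by double counting, first using only the degree condition (trivalence) for \eqref{LargeAngles}, and then combining Euler's formula with the Kasteleyn condition applied to the faces inside $c$ for \eqref{CCEdges}. Throughout, I would fix once and for all the following refinement of the interior data: let $V_{\text{int}}$ be the vertices strictly inside $c$, $V_c$ the vertices lying on $c$, $E_{\text{ii}}$ the edges with both endpoints strictly inside, $E_{\text{ic}}$ the edges with one endpoint on $c$ and one strictly inside, and $F_{\text{int}}$ the faces of $\Gamma$ contained in the closed disc bounded by $c$. Since each vertex of $c$ is trivalent and has two half-edges along $c$, the large angles of $c$ are in bijection with edges going from $c$ into the interior, so $a(c) = E_{\text{ic}}$.

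First I would prove \eqref{LargeAngles} purely combinatorially. Summing degrees over interior vertices gives
\[
3\,V_{\text{int}} \;=\; 2\,E_{\text{ii}} + E_{\text{ic}} \;=\; 2\,E_{\text{ii}} + a(c),
\]
because each interior–interior edge is counted twice and each interior–cycle edge once. Reducing mod $2$ immediately yields $V_{\text{int}} \equiv a(c) \pmod 2$, which is \eqref{LargeAngles}.

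For \eqref{CCEdges}, I would sum the Kasteleyn count over all faces in $F_{\text{int}}$. For each such face $f$, the number of edges on $\partial f$ whose Kasteleyn orientation disagrees with the counterclockwise orientation of $f$ is odd, so the total sum is $\equiv \#F_{\text{int}} \pmod 2$. On the other hand, I would evaluate that same sum edge-by-edge: every edge in $E_{\text{ii}} \cup E_{\text{ic}}$ lies between two interior faces which induce opposite orientations on it, so it contributes exactly $1$; each edge of $c$ lies on exactly one interior face (whose boundary orientation restricted to $c$ is the counterclockwise orientation of $c$) and contributes $1$ iff it is clockwise along $c$. Therefore
\[
F_{\text{int}} \;\equiv\; E_{\text{ii}} + E_{\text{ic}} + \#\Edges_{cl}(c) \pmod 2.
\]

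The final step is to eliminate $F_{\text{int}}$ using Euler's formula applied to the planar subgraph consisting of $c$ together with its interior: with $V = V_c + V_{\text{int}}$, $E = V_c + E_{\text{ii}} + E_{\text{ic}}$ and $F_{\text{int}} + 1$ faces in the sphere, one gets $F_{\text{int}} = 1 + E_{\text{ii}} + E_{\text{ic}} - V_{\text{int}}$. Substituting in the previous congruence collapses $E_{\text{ii}} + E_{\text{ic}}$ and leaves $\#\Edges_{cl}(c) \equiv 1 + V_{\text{int}} \pmod 2$, which is \eqref{CCEdges}. The main point to be careful about is the edge-by-edge accounting in the Kasteleyn sum, and in particular checking that the two interior faces adjacent to an edge in $E_{\text{ii}} \cup E_{\text{ic}}$ induce opposite orientations (using the oriented-surface structure of $S(\Gamma)$), which is where the bridgeless assumption ensures every interior edge does indeed border two distinct interior faces.
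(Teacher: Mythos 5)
Your proof is correct and follows essentially the same route as the paper's: a handshake/double-counting argument for the large angles, and a mod-$2$ count of the Kasteleyn condition over the interior faces combined with Euler's formula for the clockwise edges. The only (minor, and pleasant) difference is that you obtain \eqref{LargeAngles} directly from the degree sum over the strictly interior vertices without invoking Euler's formula, whereas the paper combines Euler's relation with a trivalence count; your edge-by-edge accounting for \eqref{CCEdges}, including the role of bridgelessness, matches the paper's multiplicative-sign version exactly.
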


\begin{proof}

We begin with the equation \eqref{LargeAngles}. Let $F(c)$ be the number of faces composing the disc bounded by $c$ and $\#\Edges_{int}$ be the number of internal edges in such disc. The subgraph made of the region inside the cycle $c$ and bounded by it (erasing all vertices in $c$ where $c$ does not form a large angle) is planar. Computing the Euler characteristic of $S^2$ cellularized via this planar subgraph we get:
\be
F(c)+1+(\#\Ver_{int}(c)+a(c))-(\#\Edges_{int}(c)+a(c))=2\,.
\label{EulerOnCycle}
\ee
The extra $+1$ counts the external face to the cycle $c$. Moreover, the number of vertices on the boundary cycle $c$ obviously equals the number of boundary edges, thus we are left with counting only internal vertices and internal edges.
On the other hand, we translate the fact that the graph is 3-valent into an equation relating the number of internal vertices and edges:
\be
3\#(\Ver_{int}(c)+a(c))=2(\#\Edges_{int}(c)+a(c))\,.
\ee
Combining these two equations yields:
\be
a(c)=2(F(c)-1)-\#\Ver_{int}(c)\,,
\ee
which implies the desired result \eqref{LargeAngles}.

We now prove the equation \eqref{CCEdges}. The quantity on the left hand side is obtained by assigning the weight $-1$ to clockwise edges and $1$ to counter-clockwise edges of $c$ and taking the product of those weights. We describe another way to do that. One proceeds the same way on each of the $F(c)$ faces in the interior, to get weights $\epsilon_f$. The product of those weights is not exactly what we want: each edge comes once with a $+1$ and once with a $-1$. Thus, it has to be corrected via a factor $(-1)^{\#\Edges_{int}(c)}$. Thus we get 
\begin{equation}
(-1)^{\#\Edges_{cl}(c)} = \prod_f \epsilon_f \times (-1)^{\#\Edges_{int}(c)}.
\end{equation}
Moreover, since the Kasteleyn orientation ensures an odd number of clockwise edges to each face, $\prod_f \epsilon_f = (-1)^{F(c)}$. Using again Euler's relation \eqref{EulerOnCycle}, one gets
\begin{equation}
(-1)^{\#\Edges_{cl}(c)} = (-1)^{F(c)}\ (-1)^{F(c)-1 -\#\Ver_{int}(c)},
\end{equation}
which simplifies to \eqref{CCEdges}.
\end{proof}

\emph{Proof of the Proposition \ref{prop:GrassmannIsing}.} We just have to show that the fermionic integral equals the expansion onto even subgraphs of $\Gamma$,
\begin{equation}
\int \prod_e d\psi_{t(e)} d\psi_{s(e)}\ \exp\left(\sum_e \psi_{s(e)}\psi_{t(e)} + \sum_\alpha X_\alpha \psi_{s(\alpha)}\psi_{t(\alpha)} \right) = \sum_{\gamma \in \mathcal{G}}\prod_{e\in \gamma} \e_e
\end{equation}
for $X_\alpha = (\e_{s(\alpha)}\,\e_{t(\alpha)})^{1/2}$. Notice that $\gamma\in\mathcal{G}$ being even, and $\Gamma$ being regular of degree 3, $\gamma$ is a disjoint union of cycles.

We simply expand the exponential of each edge and corner actions within the integral and commute the sums with the integral. Each exponential terminates at the linear order and only terms which saturate the number of $\psi$s survive.

If a term comes from $I_\alpha$ for some corner $\alpha$ made of the half-edges $h,g$, then $I_e$ cannot appear in this contribution if $e$ is an edge containing $h$ or $g$. Therefore the Grassmannian variable incident to $g$ (or $h$) and contained in the same edge as $g$ (or $h$) must come from a corner action and so on, until it closes to a cycle $c$. A typical contribution is thus labeled by a collection of disjoint cycles, i.e. an even subgraph $\gamma\in\mathcal{G}$, as expected. All half-edges not contained in $\gamma$ must come with their corresponding half-edges to form full edges and they get from $I_e$ the weight 1. Moreover each cycle receives the weight $\prod_{\alpha\in c} X_\alpha = \prod_{e\in c} \e_e$, up to a sign which is crucial to determine.

Consider a cycle $c$ and let $\#\Edges(c), \#\Edges_{cl}(c)$ be respectively the number of edges $c$ is formed of and the number of edges in $c$ which are oriented in the clockwise direction (here use use the fact that $c$ bounds a unique disc in $\R^2$). Since all the terms $I_\alpha$ commute with each other, we can reorder their product freely. We label the vertices of $c$ from 1 to $\ell$ following the counter-clockwise orientation and order the $I_{\alpha}$ accordingly; let us denote by $\psi^i_{j}$ the Grassmannian $\psi_h$ where $i$ denotes the vertex the half-edge $h$ is incident to and $j$ the vertex such that $h$ belongs to $e: i\leftrightarrow j$. The product of Grassmannians we get from expanding the exponentials is then 
\begin{equation}
(-1)^{\#E(c)-a(c)}\,\psi^1_{\ell} \psi^1_{2} \psi^2_{1} \psi^2_{3} \dotsm \psi^\ell_{\ell-1} \psi^\ell_{1} = (-1)^{\#E(c)-a(c)+1}\, \psi^\ell_{ 1} \psi^1_{\ell} \psi^1_{2} \psi^2_{1} \psi^2_{3} \dotsm \psi^\ell _{\ell-1}
\end{equation}
where we reordered the variables $\psi_{s(\alpha)}\psi_{t(\alpha)}$ exactly for the small angles i.e. where their order did not match that induced by the orientation of $c$ and  we have brought $\psi^\ell_{ 1}$ to the beginning so that the Grassmannians which lie on the same edge are next to each other, in the counter-clockwise order. Next, in order to match the measure $\prod_{e} d\psi_{t(e)}\psi_{s(e)}$ we flip the order of those Grassmannian variables associated to half-edges which belong to an edge oriented counter-clockwise. The remaining integral evaluates to 1. Therefore the sign of the cycle $c$ is
\begin{equation}
(-1)^{\#\Edges(c)-a(c)+1} \times (-1)^{\#\Edges(c)-\#\Edges_{cl}(c)} = 1,
\end{equation}
by Lemma \ref{lemma:KasteleynCycle}. As a conclusion, all cycles have a positive sign.
\qed

We can further push this path integral reformulation of the Ising model by using complex fermions instead of real fermions. This is a non-essential mathematical trick, doubling the number of variables associated to each half-edge of the graph (thus having other anticommuting variables $\bpsi_h$), but it will later allow to write explicitly the supersymmetry relating the Ising model to the spin network evaluations:

\begin{prop}[Complex Grassmannian expression of the Ising model] \label{prop:GrassmannIsingC}
The partition function of the Ising model on  a 3-valent planar $\Gamma$ (equipped with a Kasteleyn orientation) reads
\be
Z^{Ising}(\Gamma,\{y_e\})
= \left(\prod_{e\in \Edges}\cosh(y_e) \right)2^{\# \Ver}\,
Z^{\C}_{f}(\Gamma,\{X_{\alpha}\})\,,
\ee
$$
Z^{\C}_{f}(\Gamma,\{X_{\alpha}\})\,=\,
\int \prod_e d\psi_{t(e)}  d\psi_{s(e)} d\bpsi_{t(e)} d\bpsi_{s(e)}
\, \exp\left(\sum_{e,v} \psi_{e}^{v} \bpsi_{e}^{v}-\sum_e \bpsi_{s(e)}\bpsi_{t(e)}
+ \sum_\alpha X_\alpha\,\psi_{s(\alpha)}\psi_{t(\alpha)} \right)\,,
$$
where the Grassmannian integral is normalized as $\int \prod_e d\psi_{t(e)} d\psi_{s(e)} d\bpsi_{t(e)} d\bpsi_{s(e)} \prod_e \psi_{t(e)} \psi_{s(e)}\bpsi_{t(e)} \bpsi_{s(e)} = 1$, and we denote $\psi^v_e$ or $\bpsi^v_e$ a variable associated to the half-edge contained in an edge $e$ and incident to the vertex $v$. 

Furthermore it holds:
\be\label{eq:complexising}
Z_{f}^{\C}(\Gamma, \{X_{\alpha}\})^{2}
\,=\,
\int \prod_{h}
d\psi_h d\eta_h d\bpsi_h d\ceta_h\,
e^{  \sum_{e,v}\bigl(\psi_{e}^{v}\ceta_{e}^{v}+\bpsi_{e}^{v}\eta_{e}^{v}\bigr)- \sum_e \bigl(\bpsi_{s(e)}\bpsi_{t(e)} + \ceta_{s(e)}\ceta_{t(e)}\bigr)}
e^{\sum_\alpha X_\alpha \bigl(\psi_{s(\alpha)}\psi_{t(\alpha)} + \eta_{s(\alpha)}\eta_{t(\alpha)}\bigr)}\,.
\ee

\end{prop}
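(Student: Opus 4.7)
The plan is to derive both assertions of the proposition from Proposition~\ref{prop:GrassmannIsing} together with a short Berezin calculation and a local change of Grassmann variables.

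For the first formula, I will show that integrating out the $\bpsi$ generators from $Z^{\C}_{f}(\Gamma,\{X_\alpha\})$ reproduces exactly $Z_{f}(\Gamma,\{X_\alpha\})$, so that the announced expression for $Z^{Ising}$ follows at once from Proposition~\ref{prop:GrassmannIsing}. The $\bpsi$'s appear only in $\sum_{e,v}\psi_e^v\bpsi_e^v-\sum_e\bpsi_{s(e)}\bpsi_{t(e)}$, a sum of terms each supported on a single edge, so the $\bpsi$-integral factorizes over $\Edges$. A direct per-edge expansion, using only the nilpotency of the generators and the prescribed normalization of the Berezin measure, gives
\[
\int d\bpsi_{t(e)}\,d\bpsi_{s(e)}\,\exp\bigl(\psi_{s(e)}\bpsi_{s(e)}+\psi_{t(e)}\bpsi_{t(e)}-\bpsi_{s(e)}\bpsi_{t(e)}\bigr)=1+\psi_{s(e)}\psi_{t(e)}=\exp\bigl(\psi_{s(e)}\psi_{t(e)}\bigr),
\]
which is exactly the edge weight appearing in the integrand of $Z_{f}$. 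Taking the product over $\Edges$ leaves the corner part of the integrand untouched, so $Z^{\C}_{f}=Z_{f}$ and the first display reduces to Proposition~\ref{prop:GrassmannIsing}.

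For the squaring identity~\eqref{eq:complexising}, I will rewrite $(Z^{\C}_{f})^{2}$ as a single Berezin integral by introducing a second, independent copy of the generators, denoted $(\eta_h,\ceta_h)$. The resulting exponent already agrees with the right-hand side of~\eqref{eq:complexising} on the edge and corner pieces; the only discrepancy is the half-edge cross-coupling, which reads $\sum_{e,v}(\psi_e^v\bpsi_e^v+\eta_e^v\ceta_e^v)$ on the left and $\sum_{e,v}(\psi_e^v\ceta_e^v+\bpsi_e^v\eta_e^v)$ on the right. To close this gap I perform the local substitution $\bpsi_h=\ceta'_h$, $\ceta_h=-\bpsi'_h$ on each half-edge. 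Its $2\times 2$ Jacobian has determinant $+1$, so the product Berezin measure is preserved. Under the substitution the two edge bilinears $-\bpsi_{s(e)}\bpsi_{t(e)}$ and $-\ceta_{s(e)}\ceta_{t(e)}$ merely trade places and return with the same sign, the corner terms involving only $\psi$ and $\eta$ are untouched, and the half-edge term transforms as $\psi_e^v\bpsi_e^v+\eta_e^v\ceta_e^v\mapsto\psi_e^v\ceta'^v_e-\eta_e^v\bpsi'^v_e=\psi_e^v\ceta'^v_e+\bpsi'^v_e\eta_e^v$, matching exactly the right-hand side after renaming the primed generators.

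The main obstacle is the careful sign accounting: one must check that the reordering $-\eta\bpsi'=\bpsi'\eta$ produces the correct plus sign in the target cross-coupling, and that the Jacobian of the half-edge substitution is genuinely $+1$ for the specific ordering convention of $d\bpsi_h\,d\ceta_h$ adopted in the measure. Once this bookkeeping is completed, both identities of the proposition follow.
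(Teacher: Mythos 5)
Your proof is correct, and both halves take a genuinely different --- and shorter --- route than the paper's. For the first identity the paper notes that one could ``directly perform the Grassmannian integrals over the variables $\bpsi_e^v$'' but then chooses instead to redo the full cycle-by-cycle sign analysis of Proposition \ref{prop:GrassmannIsing} in the complex-fermion setting; you execute the shortcut. Your per-edge computation is right: the only terms of the expansion saturating $\bpsi_{s(e)},\bpsi_{t(e)}$ are $-\bpsi_{s(e)}\bpsi_{t(e)}$ and $(\psi_{s(e)}\bpsi_{s(e)})(\psi_{t(e)}\bpsi_{t(e)})$, which integrate to $1$ and $\psi_{s(e)}\psi_{t(e)}$ under the stated normalization, so $Z^{\C}_{f}$ collapses to $Z_{f}$ and the first display is literally Proposition \ref{prop:GrassmannIsing}; the factorization over edges is legitimate because each edge block of the exponent is even and the pair $d\bpsi_{t(e)}d\bpsi_{s(e)}$ moves through the measure without sign. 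For the squaring identity the paper identifies both sides with Pfaffians ($Z^{\C}_{f}=\pm\sqrt{\det(\id+AB)}$ and the doubled integral as $\Pf(M)$ for an explicit $8\#\Edges\times 8\#\Edges$ block matrix) and evaluates $\det M$ by block elimination; your substitution $\bpsi_h=\ceta'_h$, $\ceta_h=-\bpsi'_h$ is more elementary and sidesteps the $\pm$ ambiguities the paper carries along. The bookkeeping you defer does close: $\eta_e^v\ceta_e^v\mapsto -\eta_e^v\bpsi'^v_e=\bpsi'^v_e\eta_e^v$; the top form $\bpsi_h\ceta_h$ is sent to $\bpsi'_h\ceta'_h$, so $d\bpsi_h\,d\ceta_h=d\bpsi'_h\,d\ceta'_h$; and interleaving the two copies' measures into $\prod_h d\psi_h d\eta_h d\bpsi_h d\ceta_h$ is an even permutation per edge (eight symbols, eight inversions), hence costs no sign. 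What the paper's longer computations buy are the intermediate determinant formulas, which connect to the Kasteleyn/dimer machinery used elsewhere; your argument does not produce those but fully establishes the proposition as stated.
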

This can be proven by either repeating the same steps as above with the detail of orientations and signs, or by directly performing the Grassmannian integrals over the variables $\bpsi_{e}^{v}$. We detail below the explicit computation of the integral tracking all the signs.

\begin{proof} 
The proof of the formula for $Z_f^{\C}$ is similar to that of Proposition \ref{prop:GrassmannIsing}: we need to find all the ways to obtain the top monomials $\prod_{e,v} \psi_{e}^v\bpsi_{e}^v$ by taking products of monomials of the forms $\psi_{e}^v\bpsi_{e}^v$, $(-\bpsi_{s(e)}\bpsi_{t(e)})$ and $\psi_{s(\alpha)}\psi_{t(\alpha)}$ and for each such way we need to check that the overall sign of the integral is $1$. 

Observe that in each such way of factorizing the top monomial, each time a factor $\psi_{s(\alpha)}\psi_{t(\alpha)} $ is present then necessarily also a factor $(-\bpsi_{s(e)}\bpsi_{t(e)})$ is present with $e$ being the edge containing $t(\alpha)$. Similarly, each time  a factor $(-\bpsi_{s(e)}\bpsi_{t(e)})$ is present then so is a factor of the form $\psi_{s(\alpha)}\psi_{t(\alpha)}$ where $\alpha$ is one of the two angles at the end of $e$.  This implies that each factorization of the top monomial corresponds to a disjoint union of cycles $c$ (corresponding to closed loops of factors $\prod_{\alpha\in c}\psi_{s(\alpha)}\psi_{t(\alpha)} \prod_{e\in c} \bpsi_{s(e)}\bpsi_{t(e)}$) and a monomial of the form $\prod_{e\notin c} \psi_{s(e)}\bpsi_{s(e)} \psi_{t(e)}\bpsi_{t(e)}$. 

Now orient each component of $c$ as induced by the disc it bounds in $\R^2$ and if $e_1,\ldots e_k$ are the edges encountered while circulating in $c$ and $v_i=e_{i-1}\cap e_i$ ($v_1=e_k\cap e_1$)  reorder the terms $ \psi_{e}^v$ and $\bpsi_{e}^v$ to get 
$$\pm \psi_{e_1}^{v_1}\psi_{e_1}^{v_2}\bpsi_{e_1}^{v_1}\bpsi_{e_1}^{v_2}\cdots \psi_{e_k}^{v_k}\psi_{e_k}^{v_1}\bpsi_{e_k}^{v_k} \bpsi_{e_k}^{v_1} \prod_{e\notin c} \psi_{s(e)}\bpsi_{s(e)} \psi_{t(e)}\bpsi_{t(e)}=\pm (\psi_{e_1}^{v_1}\psi_{e_1}^{v_2}\cdots \psi_{e_k}^{v_k}\psi_{e_k}^{v_1})(\bpsi_{e_1}^{v_1}\bpsi_{e_1}^{v_2}\cdots\bpsi_{e_k}^{v_k} \bpsi_{e_k}^{v_1}) \prod_{e\notin c} \psi_{s(e)}\bpsi_{s(e)} \psi_{t(e)}\bpsi_{t(e)}.$$

Let us analyze the sign we get from the right hand side. We permute the variable $\psi_h$ exactly as in the proof of Proposition \ref{prop:GrassmannIsing}: we first reorder internally all the degree two monomials $\psi_{s(\alpha)}\psi_{t(\alpha)}$ according to the order induced by $c$, thereby acquiring a $-1$ sign if and only if $\alpha$ is a small angle.  Then we permute $\psi^1_k$ with all the other terms $\psi^i_j$, thus acquiring a $-1$ sign. At this stage we get the monomial:
\begin{multline*}
(-1)^{\#\Edges(c)-a(c)+1}(\psi^1_1\psi^2_1\cdots \psi^k_{k}\psi^1_k)(\prod_{e\in c} -\bpsi_{s(e)}\bpsi_{t(e)})(\prod_{h\notin c}\psi_h\bpsi_h)= (-1)^{\#\Edges(c)-a(c)+1}(\psi^1_1\psi^2_1\cdots \psi^k_{k}\psi^1_k)(\prod_{e\in c} \bpsi_{t(e)}\bpsi_{s(e)})(\prod_{h\notin c}\psi_h\bpsi_h)
\\
=(-1)^{2\#\Edges(c)-\#\Edges_{cl}+a(c)+1}(\prod_{e\in c} \psi_{t(e)}\psi_{s(e)})(\prod_{e\in c} \bpsi_{t(e)}\bpsi_{s(e)})(\prod_{h\notin c}\psi_h\bpsi_h)=(\prod_{e\in c} \psi_{t(e)}\psi_{s(e)}\bpsi_{t(e)}\bpsi_{s(e)})(\prod_{h\notin c}\psi_h\bpsi_h)
\end{multline*}
where in the second equality we reordered the terms $\psi^i_i\psi^{i+1}_i$ if and only if the edge was oriented counter-clockwise (i.e. if and only if $e_i:v_i\to v_{i+1}$), and in the third we used Lemma \ref{lemma:KasteleynCycle}.
So the Grassmannian integral of this term is $+\prod_{\alpha\subset c} X_{\alpha}=\prod_{e\in c} Y_e$ as claimed. 

%

Before moving to the proof of the second statement, let us first define antisymmetric matrices $A$ and $B$ of size $2\#\Edges\times 2\#\Edges$ as follows:
$$\sum_e -\bpsi_{s(e)}\bpsi_{t(e)} =\frac{1}{2} \vec{\bpsi}^tA\vec{\bpsi},\qquad \sum_\alpha X_\alpha \psi_{s(\alpha)}\psi_{t(\alpha)}= \frac{1}{2} \vec{\psi}^tB\vec{\psi}.$$
Then the first statement implies the following:
$$
Z^{\C}_{f}(\Gamma,\{X_{\alpha}\})\,=\pm \Pf \left(\begin{array}{c|c} B & Id \\ \hline -Id & A\end{array}\right)=\pm\sqrt{\det(I+AB)}
$$
where $\Pf(X)=\sqrt{\det(X)}$ is the Pfaffian of the antisymmetric matrix $X$, in the last passage we used the fact that the determinant of a $2\times 2$-block matrix $\left( \begin{smallmatrix} M_1 & M_2\\ M_3 & M_4\end{smallmatrix}\right)$ is $\det(M_1-M_2M_4^{-1}M_3)\det(M_4)$ if $M_4$ is invertible and here $M_4=A$ is such that $A^2=-Id$. 

We now remark that the integral considered in the second statement equals the Pfaffian of the antisymmetric matrix $M$ of size $8\#\Edges\times 8\#\Edges$ which in the basis given by $\{\psi_e\},\{\eta_e\},\{\bpsi_e\},\{\overline{\eta}_e\}$ is:
\begin{equation}
M=\left(\begin{array}{c|c|c|c}
B & 0 & 0 & -Id\\
\hline
0 & B & -Id & 0\\
\hline
0 & Id & A & 0\\
\hline
Id & 0 & 0 & A
\end{array}\right)\end{equation}
Observing that $A^2=-Id$ and using  again the fact that the determinant of a $2\times 2$-block matrix $\left( \begin{smallmatrix} M_1 & M_2\\ M_3 & M_4\end{smallmatrix}\right)$ is $\det(M_1-M_2M_4^{-1}M_3)\det(M_4)$ if $M_4$ is invertible, we get:
\begin{equation}
\det(M)=\det(A)^2\det(A+B)^2\implies \Pf(M)=\det(A+B)=Z^{\C}_{f}(\Gamma,\{X_{\alpha}\})^2.
\end{equation}
 \end{proof}

\subsection{From Edge Variables to Angle Variables and Back}\label{sec:mappings}

Although we started with the Ising partition function $Z^{Ising}(\Gamma,\{y_e\})$ as a function of the edge variables $y_{e}$ (or $Y_{e}=\tanh y_{e}$), it was convenient to switch to angle variables $X_{\alpha}$ to define and compute its reformulation as an odd-Grassmannian integral. This relies on the following fact:
\begin{lemma}
Given variables $Y_{e}$ on the edges of a graph $\Gamma$, assumed to be planar, connected and bridgeless, if we define angle variables $X_{\alpha}=(Y_{s(\alpha)}Y_{t(\alpha)})^{\f12}$, then we have the following equality for any closed loop $\cL$:
\be
\prod_{\alpha\in\cL}X_{\alpha}=\prod_{e\in\cL}Y_{e}\,.
\ee
\end{lemma}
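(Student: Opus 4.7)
The plan is a straightforward bookkeeping argument based on the observation that along a closed loop each edge contributes to exactly two angles, with each such contribution being the square-root weight attached to that edge.

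First I would parametrise the loop $\cL$ as a cyclic sequence of edges $e_1,e_2,\ldots,e_k$ (with the convention $e_{k+1}:=e_1$), where consecutive edges meet at a vertex $v_i:=e_{i-1}\cap e_i$ (indices mod $k$). Since $\Gamma$ is bridgeless and planar, at each $v_i$ the two half-edges of $e_{i-1}$ and $e_i$ incident to $v_i$ are distinct and determine a unique angle $\alpha_i\in\Angles$. The set of angles of $\cL$ is then $\{\alpha_1,\ldots,\alpha_k\}$.

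Next, I would invoke the definition of the angle variables directly: since the source and target half-edges of $\alpha_i$ are contained in $e_{i-1}$ and $e_i$ respectively, one has
$$X_{\alpha_i}=\bigl(Y_{e_{i-1}}Y_{e_i}\bigr)^{1/2}.$$
Taking the product over all $i$ gives
$$\prod_{\alpha\in\cL}X_{\alpha}=\prod_{i=1}^{k}\bigl(Y_{e_{i-1}}Y_{e_i}\bigr)^{1/2}.$$
Each edge $Y_{e_i}$ appears in exactly two factors on the right, namely $X_{\alpha_i}$ and $X_{\alpha_{i+1}}$, each contributing $Y_{e_i}^{1/2}$. Hence the product telescopes to $\prod_{i=1}^{k}Y_{e_i}=\prod_{e\in\cL}Y_e$, which is the desired identity.

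The only subtlety is to ensure the loop is simple, so that no edge appears in two distinct angles other than the two consecutive ones. In the applications this lemma is used for cycles in 3-valent graphs, for which simplicity is automatic (a cycle can use at most $2$ of the $3$ edges at any vertex, so cannot traverse a vertex twice). In full generality one either restricts to simple cycles or decomposes $\cL$ into such. No deep obstacle is expected; the statement is really a clean reformulation of the fact that the incidence map ``edge of $\cL$ $\mapsto$ pair of angles of $\cL$ containing it'' is two-to-one.
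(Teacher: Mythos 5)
Your argument is correct and is precisely the evident bookkeeping the paper has in mind: the paper states this lemma without any proof, treating it as immediate from the fact that each edge of the loop contributes a factor $Y_e^{1/2}$ to exactly two of the loop's angle weights. Your extra remark about simplicity of the cycle (automatic in the $3$-valent case) is a sensible precaution and does not signal any gap.
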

It is interesting to invert this mapping and check if the equality between the Ising partition function and the fermionic path integral holds for arbitrary angle couplings $X_{\alpha}$. Of course, the number of angle variables on a 3-valent graph is twice the number of edge variables. It is nevertheless possible to reverse the above mapping:
\begin{lemma}
Given variables $X_{\alpha}$ on the angles of a graph $\Gamma$, assumed to be 3-valent, planar, connected and bridgeless, if we define edge variables following the convention of the left part of Figure \ref{fig:Kasteleyn}.
\be
\label{angletoedge}
Y_{e}=
\left(
\f{X_{ee_{1}}X_{ee_{2}}}{X_{e_{1}e_{2}}}\f{X_{e\te_{1}}X_{e\te_{2}}}{X_{\te_{1}\te_{2}}}
\right)^{\f12}\,,
\ee
where $e_{1,2}$ are the two other edges incident to the source vertex $s(e)$ while $\te_{1,2}$ are the two other edges incident to the target vertex $t(e)$,
then we have the following equality that holds for any closed loop $\cL$:
$$
\prod_{e\in\cL}Y_{e}=\prod_{\alpha\in\cL}X_{\alpha}\,.
$$
\end{lemma}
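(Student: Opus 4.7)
The plan is to localize the statement at each vertex of the loop. Observe first that the defining formula for $Y_e$ factors as a product of a contribution from the source and one from the target: for any vertex $v$ incident to $e$, the three edges at $v$ are $e$ and two others, say $e',e''$, so the three angles at $v$ are $\alpha'=(e,e')$, $\alpha''=(e,e'')$ and $\alpha'''=(e',e'')$; the vertex contribution of $v$ to $Y_e$ is then
\[
C_v(e)\;:=\;\left(\frac{X_{\alpha'}X_{\alpha''}}{X_{\alpha'''}}\right)^{1/2},
\]
i.e.\ the two angles at $v$ containing $e$ appear in the numerator and the remaining angle at $v$ in the denominator. By \eqref{angletoedge} one has $Y_e=C_{s(e)}(e)\,C_{t(e)}(e)$.

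Next I would rewrite the product $\prod_{e\in\cL}Y_e$ by regrouping the factors vertex-by-vertex. Since $\Gamma$ is $3$-valent, any closed loop $\cL$ visits each of its vertices exactly once, using two of the three edges there; if $v\in\cL$ is traversed along the consecutive edges $e,e'\in\cL$, then the third edge $f\notin\cL$ at $v$ is uniquely determined, and the three angles at $v$ are $\alpha=(e,e')$, $\beta=(e,f)$, $\gamma=(e',f)$. The only contributions to $\prod_{e\in\cL}Y_e$ coming from the vertex $v$ are $C_v(e)$ and $C_v(e')$, and a direct computation gives
\[
C_v(e)\,C_v(e')\;=\;\left(\frac{X_\alpha X_\beta}{X_\gamma}\right)^{1/2}\left(\frac{X_\alpha X_\gamma}{X_\beta}\right)^{1/2}\;=\;X_\alpha,
\]
where $\alpha$ is precisely the angle of the loop at $v$.

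Taking the product over all vertices of $\cL$ then yields
\[
\prod_{e\in\cL}Y_e\;=\;\prod_{v\in\cL}C_v(e)\,C_v(e')\;=\;\prod_{v\in\cL}X_{\alpha(v)}\;=\;\prod_{\alpha\in\cL}X_\alpha,
\]
which is the claimed identity. There is really no serious obstacle; the only point requiring mild care is the choice of square roots in \eqref{angletoedge}, but since each $X_\beta$ and $X_\gamma$ appearing at a given vertex cancels between the numerator of one factor and the denominator of the other before one extracts the root, the identity holds on the nose once a consistent branch of the square root is fixed edge by edge.
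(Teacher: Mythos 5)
Your proof is correct. The paper in fact states this lemma without supplying a proof, so there is nothing to compare against; your vertex-by-vertex regrouping is exactly the natural argument one would write down. The key identity $C_v(e)\,C_v(e')=X_{\alpha}$ at each vertex of the loop, combined with the fact that in a $3$-valent graph a cycle uses exactly two of the three edges at each of its vertices (hence visits each vertex once and singles out one angle there), gives the claim immediately. Two minor points worth noting: the splitting $Y_e=C_{s(e)}(e)\,C_{t(e)}(e)$ and the cancellation $\bigl(\tfrac{X_\alpha X_\beta}{X_\gamma}\bigr)^{1/2}\bigl(\tfrac{X_\alpha X_\gamma}{X_\beta}\bigr)^{1/2}=X_\alpha$ both rely on multiplicativity of the square root, which is unproblematic here since the couplings are treated as positive (or formal) quantities --- one can also sidestep it entirely by squaring both sides of the claimed identity first; and if one allows $\cL$ to be a general even subgraph (a disjoint union of cycles), the same argument applies component by component.
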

These two mappings allow to consider as primary variables either the edge variables $Y_{e}$ or the angle variables $X_{\alpha}$.

We further notice that if we start with edge variables $Y_{e}$ and we apply the relation $X_{\alpha}=(Y_{s(\alpha)}Y_{t(\alpha)})^{\f12}$ to the mapping \eqref{angletoedge}, then we recover the initial variables $Y_{e}$. Clearly the reverse is not true, due to the fact that these mappings are not one-to-one.

\section{The Generating Function of Spin Network Evaluations}

We now turn to the realm of quantum geometry and 3d (Euclidean) quantum gravity, where the quantum states of 2D geometry are constructed from the 3nj-symbols of spin recoupling \cite{3DHamiltonian, Spin1/2Hamiltonian}, and so are transition amplitudes for instance in the Ponzano-Regge model (see \cite{PR}, and \cite{BarrettNaish, TwistedCohomology, SortingOut, CellularQuant} for more recent developments).

\subsection{The Hilbert Space of Spin Networks and their Evaluations}

We consider a planar, 3-valent  and connected graph $\Gamma$.
We equip the graph with a counter-clockwise cyclic ordering of the edges incident to each vertex.
Let us also choose an orientation $o$ of the edges of $\G$.
For the correspondence with the Ising model (which requires to get a sign $+$ for all loops), this orientation will be chosen to be of the Kasteleyn type.
%

We will discuss the (in)dependence of the spin network evaluation with respect to the chosen orientation and the relation between our (tensorial) definition of evaluation and the other standard definitions of 3nj-symbols in the next section.

\smallskip

In the context of loop quantum gravity and spinfoam models \cite{SpinNetworksBaez, Thiemann}, we define the Hilbert space of wave-functions for the 2D quantum geometry on the graph $\Gamma$ as the space of gauge-invariant functions of $\SU(2)$ group elements living along the oriented edges of the graph:
\be
f(\{g_{e}\}_{e\in\Gamma})
\,=\,
f(\{h_{t(e)}^{-1}g_{e}h_{s(e)}\}_{e\in\Gamma}),
\qquad \forall
\{h_{v}\}_{v\in\Ver}\in\SU(2)^{\# \Ver}\,,
\ee
whose scalar products are defined by the Haar measure on $\SU(2)$:
\be
\la \psi | \phi\ra
\,=\,
\int_{\SU(2)^{\# \Edges}}
[dg_{e}]\, \overline{\psi(\{g_{e}\})}\,\phi(\{g_{e}\})\,.
\ee

\smallskip

A basis of this Hilbert space $L^{2}(\SU(2)^{\#\Edges}/\SU(2)^{\#\Ver})$ of square integrable functions over $\SU(2)^{\#\Edges}$ is provided by the spin network functions. 
For an oriented 3-valent graph $\Gamma$, spin network states are labeled by a graph coloring.
%
%
\begin{defi}
A \emph{coloring} of $\G$ is a map $\col:E\to \N$; we will denote the value of the coloring on an edge $e$ by $2j_e$, and we will call $j_e$ the \emph{spin} of the edge for the given coloring. 
\end{defi}
The spin characterizes an irreducible representation of $\SU(2)$ of dimension $(2j_e+1)$ whose basis vectors are indexed as $e^j_m$ (also denoted $|j,m\rangle$), for $m\in \{-j_e,-j_e+1,\ldots ,j_e\}$; these vectors form the magnetic number basis. 
The spin network function is then defined as:
\be
\label{spinbasis}
\varphi_{\{j_e\}}^{\Gamma}(g_{e})
\,\equiv\,
\sum_{\{m_e^{v}\}}
\prod_e (-1)^{j_e-m_e^{t}}D^{j_{e}}_{-m_{e}^{t},m_{e}^{s}}(g_{e})\,
\prod_v \begin{pmatrix} j_{e_{1}^{v}} & j_{e_{2}^{v}} & j_{e_{3}^{v}}\\ m_{e_{1}}^{v} &  m_{e_{2}}^{v}&  m_{e_{3}}^{v}\end{pmatrix}\,\,,
\ee
where the edges around the vertex $v$ are cyclically ordered  as $j_{e_{1}^{v}}, j_{e_{2}^{v}}, j_{e_{3}^{v}}$ with the magnetic moment labels $m_{e}^{v}$ living on every half-edge around every vertex. Finally $D^{j}_{m,m'}(g)=\la j,m|g|j,m'\ra$ is the Wigner matrix representing the $\SU(2)$ group element $g$ in the irreducible representation of spin $j$.
The symbol $\left(\begin{smallmatrix} j_1 &j_2 &j_3\\ m_1 &m_2 &m_3 \end{smallmatrix}\right)$ is the Wigner 3j-symbol (see \cite{Ed}, Chapter 3.7). It is invariant under rotations, meaning that it is an intertwiner between the tensor product of the three representations and the trivial one. In other words, the vector $\sum_{m_1,m_2,m_3} \left(\begin{smallmatrix} j_1 &j_2 &j_3\\ m_1 &m_2 &m_3 \end{smallmatrix}\right)\, e^{j_1}_{m_1}\otimes e^{j_2}_{m_2}\otimes e^{j_3}_{m_3}$ is a basis of the (1-dimensional) space of $\SU(2)$-invariant vectors in the tensor product of the three representations. The entries of the projector onto the invariant space can be written as
\begin{equation*}
\int_{\SU(2)} dg\ D^{(j_1)}_{m_1 n_1}(g)\,D^{(j_2)}_{m_2 n_2}(g)\,D^{(j_3)}_{m_3 n_3}(g) = \begin{pmatrix} j_1 &j_2 &j_3\\ m_1 &m_2 &m_3 \end{pmatrix}\,\begin{pmatrix} j_1 &j_2 &j_3\\ n_1 &n_2 &n_3 \end{pmatrix}.
\end{equation*}
The orthogonality and completeness relations read
\begin{equation*}
\sum_{m_{1},m_{2}}
\begin{pmatrix} j_{1} & j_{2} & j \\ m_1& m_2 & m \end{pmatrix}
\begin{pmatrix} j_{1} & j_{2} & j' \\ m_1& m_2 & m' \end{pmatrix}
\,=\,
\f1{2j+1}\,\delta_{jj'}\,\delta_{mm'},
\qquad
\sum_{j,m} (2j+1)\,
\begin{pmatrix} j_{1} & j_{2} & j \\ m_1& m_2 & m \end{pmatrix}
\begin{pmatrix} j_{1} & j_{2} & j \\ m'_1& m'_2 & m \end{pmatrix}
\,=\,
\delta_{m_{1}m'_{1}}\,\delta_{m_{2}m'_{2}}.
\end{equation*}
Wigner 3j-symbols are invariant under cyclic permutations of their three entries and pick up a sign upon reversing the cyclic orientation as well as upon reversing the orientations of the incident edges:
\begin{equation*}
\begin{pmatrix} j_1 &j_2 &j_3\\ m_1 &m_2 &m_3 \end{pmatrix} = (-1)^{j_1+j_2+j_3} \begin{pmatrix} j_1 &j_3 &j_2\\ m_1 &m_3 &m_2 \end{pmatrix} = (-1)^{j_1+j_2+j_3} \begin{pmatrix} j_1 &j_2 &j_3\\ -m_1 &-m_2 &-m_3 \end{pmatrix} = \delta_{m_1+m_2+m_3,0} \begin{pmatrix} j_1 &j_2 &j_3\\ m_1 &m_2 &m_3 \end{pmatrix}.
\end{equation*}

The central object of our paper is the spin network evaluation, defined as the evaluation of the spin network function at the identity on all edges, $g_{e}=\id$.

\begin{defi}
For a colored 3-valent  graph $\Gamma$, equipped with an edge orientation $o$, we define its  ``unitary (up to sign) spin network evaluation" as
\begin{equation}
\label{eq:tensorsn}
s^{\Gamma}(\{j_e\},o)
\,\equiv\,
\varphi_{\{j_e\}}^{\Gamma}(\id)
\,=\,
\sum_{\{m_e\}} \prod_e (-1)^{j_e-m_e} \prod_v \begin{pmatrix}  j_{e_{1}^{v}} & j_{e_{2}^{v}} & j_{e_{3}^{v}}\\ \epsilon_{e_{1}} ^{v}m_{e_{1}^{v}} & \epsilon_{e_{2}}^{v} m_{e_{2}^{v}}& \epsilon_{e_{3}}^{v} m_{e_{3}^{v}} \end{pmatrix},
\end{equation}
with the orientation sign $\epsilon_{e_{i}}^{v}=-1$ resp. $1$ recording if the edge is oriented inwards ($v=t(e)$) resp. outwards ($v=s(e)$).
\end{defi}

For instance, Wigner's 6j-symbol can be defined in this way, on the (appropriately oriented) tetrahedral graph,
\begin{equation} \label{6jDef}
\begin{Bmatrix} j_1 &j_2 &j_3 \\ j_4 &j_5 &j_6\end{Bmatrix} = \sum_{\substack{m_1, m_2, m_3,\\ m_4, m_5, m_6}} (-1)^{\sum_{i=1}^6 j_i-m_i} \begin{pmatrix} j_1 &j_2 &j_3\\ m_1 &m_2 &m_3\end{pmatrix} \begin{pmatrix} j_1 &j_5 &j_6\\ -m_1 &-m_5 &m_6\end{pmatrix} \begin{pmatrix} j_3 &j_4 &j_5\\ -m_3 &-m_4 &m_5\end{pmatrix} \begin{pmatrix} j_2 &j_6 &j_4\\ -m_2 &-m_6 &m_4\end{pmatrix}.
\end{equation}

From the definition, it can be seen that flipping the orientation of an edge with color $2j_e$ changes the spin network evaluation by a sign $(-1)^{2j_e}$. Our definition thus really seems to depend on the choice of an orientation $o$. However, we will show in the next section that for a fixed graph and a fixed coloring, the evaluation is constant on the class of Kasteleyn orientations (and can be explicitly related to other definitions of spin network evaluations). Then for the sake of simplicity, we will drop the $o$ from the notation and simply write $s^{\Gamma}(\{j_e\})$ when there can not be any confusion.


\smallskip

In the context of quantum gravity, the spin network evaluation is interpreted as the scalar product of the spin network wave-function with the flat connection state. This actually gives the physical solution for 3d quantum gravity (on $S^2\times [0,1]$) and more generally for topological BF theory.
%
Considering an arbitrary gauge-invariant function $\psi$, we can decompose it on the spin network basis states $\varphi_{\{j_e\}}^{\Gamma}$ upon integrating it against the $\SU(2)$ characters:
$$
\psi(\{g_{e}\})
\,=\,
\sum_{\{j_{e}\}}\psi_{\{j_{e}\}}\,\varphi_{\{j_e\}}^{\Gamma}(g_{e})\,,
\quad
\psi(\id)
\,=\,
\sum_{\{j_{e}\}}\psi_{\{j_{e}\}}s^{\Gamma}(\{j_e\})\,,
\quad
\int 
\prod_{e}[dg_{e}]\,\chi_{j_{e}}(g_{e})\,\psi(\{g_{e}\})
\,=\,
\f1{\prod_{e}(2j_{e}+1)}\,\psi_{\{j_{e}\}}\,s^{\Gamma}(\{j_e\})\,.
$$
Due to the orthogonality of the 3j-symbols, the scalar product between two states $\psi$ and $\phi$ simply reads in this basis as:
$$
\la \psi | \phi\ra
\,=\,
\sum_{\{j_{e}\}}\f1{\prod_{e}(2j_{e}+1)}\,
\overline{\psi_{\{j_{e}\}}}\,\phi_{\{j_{e}\}}\,.
$$
We now introduce the gauge-invariant flat state, physical solution for the topological BF theory:
\begin{equation} \label{FlatState}
\Omega(\{g_{e}\})
\,\equiv\,
\int [dh_{v}]\,
\prod_{e}\delta(h_{s(e)}^{-1}g_{e}h_{t(e)})\,.
\end{equation}
Carefully tracking the inverse group elements and the signs by the following identity on Wigner matrices:
$$
D^{j}_{m,n}(h^{-1})
=\left(D^{j}(h)\right)^{\dagger}_{m,n}
=\overline{D^{j}_{n,m}(h)}
=(-1)^{(n-m)}D^{j}_{-n,-m}(h)\,,
$$
we get the simple decomposition of the flat state and scalar product property:
$$
\Omega(\{g_{e}\})
\,=\,
\sum_{\{j_{e}\}}\prod_{e}(2j_{e}+1)\,s^{\Gamma}(\{j_e\})\,\varphi_{\{j_e\}}^{\Gamma}(g_{e})\,,
\quad
\la \Omega | \Omega\ra
\,=\,
\sum_{\{j_{e}\}}\prod_{e}(2j_{e}+1)\,s^{\Gamma}(\{j_e\})^{2}\,,
\quad
\la \Omega | \phi\ra
\,=\,
\sum_{\{j_{e}\}} s^{\Gamma}(\{j_e\})\,\,\phi_{\{j_{e}\}}
\,=\,
\phi(\id)\,.
$$
%
Spinfoam transition amplitudes between  spin network states are then typically constructed from such spin network evaluations or projections on the flat state.

\smallskip


\subsection{Choice of Normalizations and Orientations}

In order to relate our definition of ``unitary up to a sign'' evaluations to other standard definitions, a couple of recoupling identities are required.

The following equality is equivalent to the definition \eqref{6jDef} of 6j-symbols (the equivalence can be proved using the properties of 3j-symbols listed in the previous section -- see also formula 6.2.6 in  \cite{Ed}),
\begin{multline} \label{eq:whitehead}
\sum_{m_{12}} (-1)^{j_{12}-m_{12}}\begin{pmatrix} j_1& j_{12}& j_2\\ m_1& m_{12}& m_2\end{pmatrix} \begin{pmatrix} j_{12}& j& j_3\\ -m_{12}& m& m_3\end{pmatrix}\\
= (-1)^{2j_1} \sum_{j_{23}}
(2j_{23}+1) \begin{Bmatrix} j_1& j_2& j_{12}\\ j_3& j& j_{23}\end{Bmatrix} (-1)^{j_1+j_2+j_3+j} \sum_{m_{23}} \begin{pmatrix} j_1& j& j_{23}\\ m_1& m& -m_{23}\end{pmatrix} \begin{pmatrix} j_2& j_{23}& j_3\\ m_2& m_{23}& m_3\end{pmatrix} (-1)^{j_{23}-m_{23}}.
\end{multline}
Note that the sums on both sides are trivial (the only non-vanishing contributions come from $m_{12} = -m_1 - m_2$ and $m_{23} = m_1 + m$). One can also flip the signs of $m_1, m_2, m_3$ and/or $m$ and add factors like $(-1)^{j_1-m_1}$ on both sides of the equation without any more changes. We will also need the well known orthogonality relation,
\begin{align}\label{eq:orthogonality}
\sum_{m_1,m_2}(-1)^{j_1-m_1+j_2-m_2+j-m}(-1)^{j_1+j_2+j'}\left(\begin{matrix} j_2 & j_1 &j\\ -m_2 & -m_1 & -m\end{matrix}\right)\left(\begin{matrix} j_1 & j_2 &j'\\ m_1 & m_2 & m'\end{matrix}\right)=\frac{\delta_{j,j'}\delta_{m,m'}}{2j+1}.
\end{align}

The above equalities can be used to modify the summand in the evaluations \eqref{eq:tensorsn}. Equation \eqref{eq:whitehead} applies to any edge (with spin $j_{12}$ in the notation of \eqref{eq:whitehead}), while \eqref{eq:orthogonality} gets rid of faces of degree two. They translate graphically to
\begin{equation}\label{eq:whiteheadgraphical}
\raisebox{-1cm}{\put(-10,50){$j_1$}\put(24,50){$j_2$}\put(52,50){$j_3$}\put(32,8){$j$}\put(12,27){$j_{12}$}\includegraphics[width=2cm]{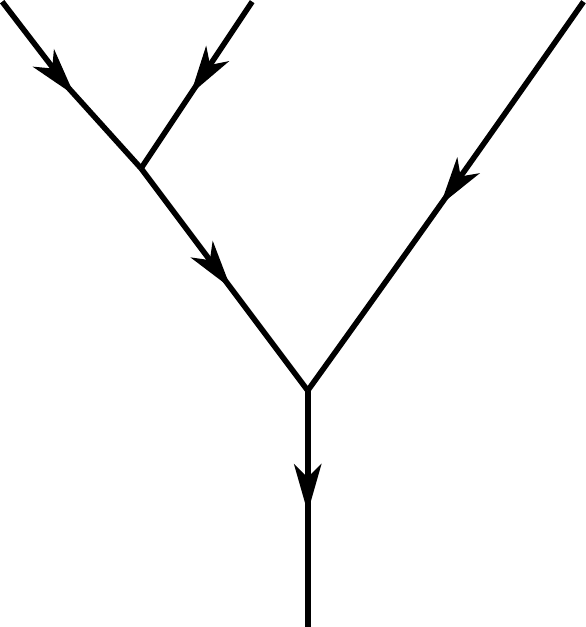}}=(-1)^{2j_1}\sum_{j_{23}} (2j_{23}+1) (-1)^{j_1+j_2+j_3+j}\left\{\begin{matrix} j_1 & j_2 &j_{12}\\ j_3 & j & j_{23} \end{matrix}\right\}\raisebox{-1cm}{\put(-7,50){$j_1$}\put(27,50){$j_2$}\put(52,50){$j_3$}\put(32,8){$j$}\put(34,27){$j_{23}$}\includegraphics[width=2cm]{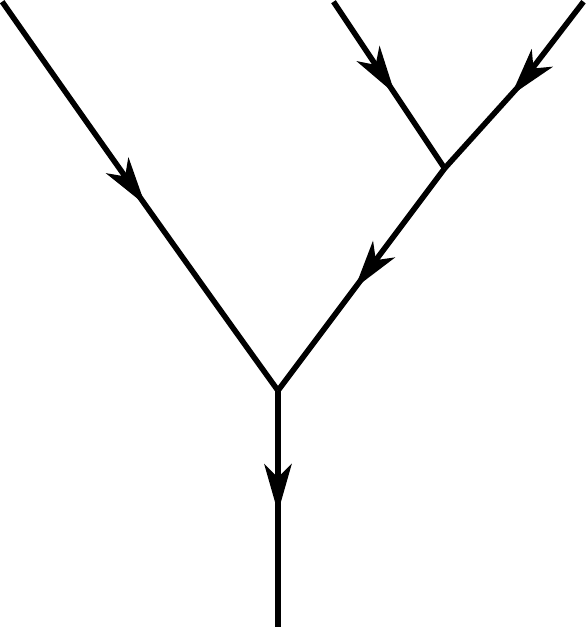}}
\end{equation}
\begin{equation}\label{eq:unzip}
\raisebox{-1cm}{\put(20,50){$j$}\put(30,6){$j'$}\put(8,27){$j_{2}$}\put(40,27){$j_{1}$}\includegraphics[width=2cm]{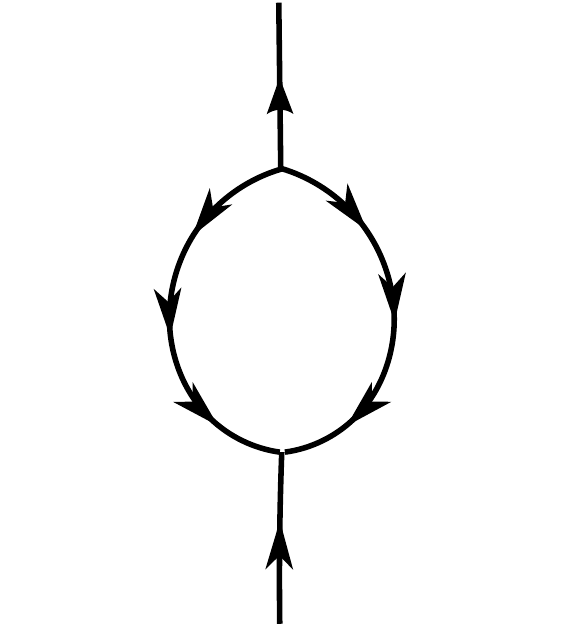}}=\frac{(-1)^{j_1+j_2+j}\delta_{j,j'} }{2j+1}\raisebox{-1cm}{\put(32,8){$j$}\includegraphics[width=2cm]{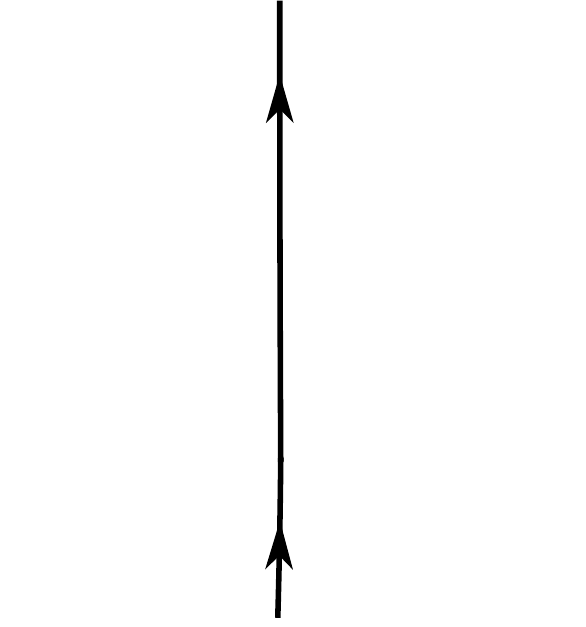}}.
\end{equation}
These equations are unchanged if one flips the orientation of a half-edge on both sides.

Notice that we have singled out the sign $(-1)^{2j_1}$ on purpose in the right hand side of \eqref{eq:whitehead} and \eqref{eq:whiteheadgraphical}, as this is a manifestation of the reason why we call our definition \eqref{eq:tensorsn} of spin network evaluations ``unitary up to a sign''.

The standard definition of spin network evaluations, which we call the \emph{integral normalization} and write $\langle \G,\{j_e\}\rangle^{Int}$, is due to Penrose \cite{Pe} and is based on associating to each colored graph a linear combination of union of curves lying in $S(\G)$ and then evaluating each union of $k$ curves to $(-2)^k$. Some renormalizations of Penrose's definition known as \emph{skein theoretical} evaluation and \emph{unitary} evaluation also exist. As we will prove in Theorem \ref{teo:comparison}, the relation of the above ``unitary up to a sign'' evaluation with the integral spin network evaluation is the following, \emph{if $o$ is a Kasteleyn orientation on $\G$}: 
\begin{equation}\label{eq:integralsn}
\langle \Gamma,\{j_e\}\rangle^{Int} =s(\G,\{j_e\},o)  \sqrt{\frac{\prod_v(J_v+1)!}{\prod_{ev} (J_v-2j_e)!}}=\langle \Gamma,\{j_e\}\rangle^{skein} \frac{\prod_{\alpha} j_\alpha!}{\prod_e (2j_e)!}.
\end{equation}
$J_v$ denotes the sum of the three spins incident to $v$. Furthermore, the unitary evaluation is such that the value of any theta graph is $1$; the normalization we use turns out to coincide with the unitary normalization up to a sign which, if $o$ is a Kasteleyn orientation on $\G$ is computed explicitly:
\begin{equation*} 
s(\G,\{j_e\},o)=\langle \Gamma,\{j_e\}\rangle^U (-1)^{\frac{1}{2}\sum_v J_v}.
\end{equation*}

Our main object of interest in the present article is the generating function of integral evaluations of spin networks with edge variables $\{Y_e\}_{e\in\Gamma}$. Provided $o$ is a Kasteleyn orientation so that \eqref{eq:integralsn} holds, it can be written
\be
Z^{Spin}(\Gamma,\{Y_e\})=\sum_{\{j_e\}} \langle \Gamma,\{j_e\}\rangle^{Int}  \prod_e Y_e^{2j_e}
=
\sum_{\{j_e\}} \sqrt{\frac{\prod_v (J_v+1)!}{\prod_{ev} (J_v-2j_e)!}} s(\{j_e\},o) \prod_e Y_e^{2j_e}\,.
\ee

\subsection{About the Tensorial Definition of Unitary Spin Networks}

Observe that if $o$ and $o'$ are orientations on $\G$ and $o'$ differs from $o$ on a single edge $e$ then $s(\G,\{j_e\},o')=s(\G,\{j_e\},o)(-1)^{2j_e}$. In contrast, the standard spin network evaluation does not depend on any orientation of the edges of the graph. So the tensorial definition \eqref{eq:tensorsn} does not coincide in general with the standard one (or it does only up to a sign) if the orientation of $\Gamma$ is chosen randomly.

In \cite{CoMa} (section 2.1) a different tensorial definition was provided based on supersymmetric vector spaces (instead of even ones) which allowed for a definition independent on the choice of an orientation of the edges of $\Gamma$. 
Roughly speaking, one stipulates that the vectors of the representation with spin $j_e$ have parity $2j_e$. 
Then to each vertex one associates the (even) tensor: $i_{j_1 j_2 j_3}$ (as defined above) and to each edge colored by $j_e$ the (even) tensor 
$\omega_e:= \sum_{m_e=-j_e}^{j_e} (-1)^{j_e-m_e}\delta^{j_e}_{m_e}\otimes \delta^{j_e}_{-m_e}$ 
(where $\delta^{j_e}_{m_e}(e^{j_e}_k)=1$ if $k=m_e$ and $0$ else). Observe that because of the parity assumption in truth, in order to define $\omega_e$ we did not use an orientation: indeed applying the flip (exchanging the two tensors in $\omega$) gives $\omega'_e=\sum_{m_e}(-1)^{2j_e} (-1)^{j_e-m_e}\delta^{j_e}_{-m_e}\otimes \delta^{j_e}_{m_e}=\sum_{m'_e} (-1)^{2j_e+2m'_e+j_e-m'_e}\delta^{j_e}_{m'_e}\otimes \delta^{j_e}_{-m'_e}=\omega_e$ (where in the second equality we just reindexed the sum and observed that $2j_e-2m'_e$ is even). 
Then one defines the spin network evaluation as the total supersymmetric contraction of the tensor $\bigotimes_e \omega_{j_e}\otimes \bigotimes_v i_{j_1^v  j_2^v  j_3^v}$ where, again, in the contraction one should take care of introducing signs $(-1)^{ab}$ whenever permuting tensors of degrees $a$ and $b$. In particular, observe that $i_{j_1 j_2 j_3}=(-1)^{2j_1(2j_2+2j_3)}i_{j_2 j_3 j_1}$ but this apparent asymmetry is compensated during the contractions.

In our case we deal with the orientation problem by choosing a Kasteleyn orientation on $\G$ (which is supposed to be planar).  
Observe that by the admissibility condition of the spins (i.e. the fact that for each vertex $v$ the sum $J_v$ of all the spins of the edges surrounding $v$ is integer) and by the above observation on the behavior under the switch of the orientation of an edge, formula \eqref{eq:tensorsn} provides a well defined function on the space of orientation classes which are defined as follows:

\begin{defi}[Orientation class]\label{def:orient}
Two orientations on $\G$ are \emph{equivalent} if they can be obtained from one another by a finite sequence of moves consisting in switching all the orientations of the edges incident to a vertex. We will denote the equivalence class of an orientation $o$ by $[o]$. 
\end{defi}

In our case $\G$ is supposed to be planar and we can equip it with a Kasteleyn orientation; there may be more than one such orientation but it is easy to see that all of them are equivalent in the sense of Definition \ref{def:orient} hence the spin network evaluation on $\G$ is well defined. 

\begin{theo}\label{teo:comparison}
Let $\G\subset \R^2$ be a planar connected trivalent graph and $o$ be an orientation on $\G$. Then, the following holds for each coloring $\{j_e\}$ of the edges of $\G$ if and only if $o$ is a Kasteleyn orientation:
\begin{equation*}
\langle \Gamma,\{j_e\}\rangle^{Int} = \sqrt{\frac{\prod_v(J_v+1)!}{\prod_{ev} (J_v-2j_e)!}}\ s(\G,\{j_e\},o).
\end{equation*}
\end{theo}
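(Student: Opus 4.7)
The plan is to prove both directions by reducing $\Gamma$ to a disjoint union of theta graphs via local planar recoupling moves, while tracking the sign produced by the orientation.

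\emph{Step 1: Reduction to the unitary normalization.} The ratio between the integral and the unitary evaluations of any planar trivalent spin network is, by definition of the two normalizations and a direct computation on the theta graph, exactly $\sqrt{\prod_v(J_v+1)!/\prod_{ev}(J_v-2j_e)!}$. Hence it suffices to show that, for every coloring,
\begin{equation*}
s(\Gamma,\{j_e\},o) = (-1)^{\frac{1}{2}\sum_v J_v}\,\langle \Gamma,\{j_e\}\rangle^{U}
\end{equation*}
holds if and only if $o$ is Kasteleyn (this is the sign stated just before the theorem).

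\emph{Step 2: Recoupling moves and base case.} Every planar connected trivalent graph can be reduced to a disjoint union of theta graphs by a finite sequence of Whitehead moves \eqref{eq:whiteheadgraphical} and bigon removals \eqref{eq:unzip}. Each such move is a local identity on the summand defining $s(\Gamma,\{j_e\},o)$, producing, respectively, the sign $(-1)^{2j_1}$ or $(-1)^{j_1+j_2+j}$ together with $6j$- and ``theta''-factors. The same moves apply to $\langle\cdot\rangle^{U}$ yielding the same $6j$-factors but without those parenthesized signs. For the base case of the theta graph, a direct computation from \eqref{eq:tensorsn} using the orthogonality \eqref{eq:orthogonality} gives $s(\theta,\{j_1,j_2,j_3\},o_\theta) = (-1)^{j_1+j_2+j_3}$ for any Kasteleyn orientation $o_\theta$ on $\theta$, which is exactly $(-1)^{\frac{1}{2}\sum_v J_v}$ times the unitary value $\langle\theta,\{j_e\}\rangle^U = 1$.

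\emph{Step 3: ``If'' direction by induction.} Suppose $o$ is Kasteleyn on $\Gamma$. After a recoupling move to a graph $\Gamma'$, choose an orientation $o'$ on $\Gamma'$ which, up to vertex flips (which by admissibility leave $s$ invariant, as noted in Definition \ref{def:orient}), can be taken Kasteleyn. One must verify that the sign $(-1)^{2j_1}$ (for the Whitehead move) or $(-1)^{j_1+j_2+j}$ (for the bigon move) matches $(-1)^{\frac{1}{2}(\sum_v J_v - \sum_{v'} J_{v'}')}$ times the sign acquired while vertex-flipping the naturally induced orientation back into the Kasteleyn class on $\Gamma'$. This identity follows from Lemma \ref{lemma:KasteleynCycle} applied to the small cycle (triangle or bigon) involved in the move: the odd-parity condition on the face it bounds fixes the parity of clockwise edges, which in turn controls both the sign of the Grassmann-like reordering and the change in $\sum_v J_v$ via the admissibility constraints at the vertices of that cycle. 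Inductively, the identity then descends to all theta-graph components.

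\emph{Step 4: ``Only if'' direction.} If $o$ is not Kasteleyn, there is a face $f_0$ violating the odd-parity condition. Consider a coloring with spin $\tfrac12$ on a cycle $c$ bounding $f_0$ and spin $0$ elsewhere (with small local adjustments to respect admissibility at the vertices where $c$ meets the rest of $\Gamma$). Comparing $s(\Gamma,\{j_e\},o)$ with $s(\Gamma,\{j_e\},o_K)$ for a nearby Kasteleyn orientation $o_K$ and applying Lemma \ref{lemma:KasteleynCycle} to $c$ produces a sign discrepancy of opposite parity, while $\langle \Gamma,\{j_e\}\rangle^{Int}$ and the combinatorial prefactor are unchanged. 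This forces $o$ to be Kasteleyn for the universal identity to hold.

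The main obstacle is the sign bookkeeping in Step 3: the $(-1)^{2j_1}$ appearing in \eqref{eq:whiteheadgraphical} is not a priori accounted for by the change in $\tfrac12\sum_v J_v$ or by the $\sqrt{\prod_v(J_v+1)!/\prod_{ev}(J_v-2j_e)!}$ factor, and matching it genuinely requires the Kasteleyn odd-parity condition on the face collapsed by the move, applied as in Lemma \ref{lemma:KasteleynCycle}. Once this is carried out for both elementary moves, the rest of the argument reduces to the elementary theta-graph computation.
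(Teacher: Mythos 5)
Your proposal follows essentially the same route as the paper: uniqueness of the admissible orientation class is detected on spin-$\tfrac12$ curve colorings supported on face boundaries, and sufficiency is proved by reducing $\G$ to theta graphs via Whitehead moves and bigon removals, with the theta graph as base case. Step 4 is fine as stated (for a face $f_0$ with an even number of clockwise edges one has $a(\partial f_0)=0$, so the evaluation of the spin-$\tfrac12$ coloring on $\partial f_0$ comes out $+2$ instead of $-2$, and no ``local adjustments'' are even needed since that coloring is already admissible). However, the step you yourself flag as the main obstacle --- the sign matching in Step 3 --- is precisely where the proof does its real work, and the mechanism you propose for it is not the right one. Lemma \ref{lemma:KasteleynCycle} applied to ``the small cycle (triangle or bigon) involved in the move'' does not apply: the Whitehead move does not collapse a triangle, and the relevant bookkeeping is purely face-local. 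The move removes one clockwise edge from the face running along $(j,j_{12},j_1)$ and adds one to the face along $(j_1,j_2)$, so the induced orientation on the new graph fails to be Kasteleyn at exactly those two faces; this is repaired by reversing the edge carrying $j_1$, which multiplies $s$ by $(-1)^{2j_1}$ and cancels exactly the sign singled out in \eqref{eq:whiteheadgraphical} (similarly a factor $(-1)^{2j}$ appears for the bigon move \eqref{eq:unzip}).

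A second, smaller issue is that your Step 1 detour through the unitary normalization obliges you to track how $\tfrac12\sum_v J_v$ changes under each move (it changes by $j_{12}-j_{23}$ under a Whitehead move), on top of the orientation signs. The paper sidesteps this by staying with the integral normalization throughout: one checks that $\sqrt{\prod_v(J_v+1)!/\prod_{ev}(J_v-2j_e)!}\; s(\G,\{j_e\},o)$ and $\langle\G,\{j_e\}\rangle^{Int}$ satisfy \emph{identical} recoupling identities once the $\Delta(a,b,c)$ factors, Racah's formula for the $6j$-symbol, and the value $(-1)^{j_1+j+j_{23}}\Delta(j_1,j,j_{23})$ of the integral theta evaluation are taken into account. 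With the corrected sign mechanism and either normalization handled consistently, your outline becomes the paper's proof.
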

\begin{proof}
We will first prove that among all the orientation classes on $\G$ there is at most one, the Kasteleyn orientation class, which can have the required property. The idea is to compare Equation \eqref{eq:integralsn} with the standard evaluations only on ``curve colorings'' i.e. colorings with values in $\{0,\frac{1}{2}\}$; these colorings are naturally identified with the set of curves embedded in $\G$ (the curve associated to a coloring being the set of edges with spin $\frac{1}{2}$). Recall that the standard evaluation of such a coloring is $(-2)^k$, where $k$ is the number of connected components of the curve associated to the coloring. 

The proof is actually similar to the (constructive) proof of the existence of Kasteleyn orientations. Pick up a spanning tree $T\subset \G$, whose orientation is the restriction of $o$ to the edges of $T$. Then we show that requiring the curve coloring evaluation to be $-2$ around each face of $\G$ enforces an orientation on the edges of $\G\setminus T$, and it is such that one gets a Kasteleyn orientation on $\G$.

We denote $\G^*$ the dual graph to $\G$, whose vertices represent the faces of $\G$ and an edge connects two vertices dual to the faces $f_1, f_2$ whenever there is an edge of $\G$ which belongs to the boundary of both faces. Therefore an edge of $\G^*$ uniquely identifies an edge in $\G$ and the other way around. We denote $(\G\setminus T)^*$ the dual to $\G\setminus T$. It is a tree: it is connected because $T$ has no cycle and it has no cycle since $T$ is connected. Since it goes through every face of $\G$, $(\G\setminus T)^*$ is a spanning tree of $\G^*$. We choose a root vertex (for instance dual to the outerface) and consider a leaf $f^*$, i.e. a vertex of degree 1. It is dual to a face $f$ whose boundary edges all belong in $T$ but one, say $e_f$ (which is dual to the edge incident to the leaf). We denote the boundary edges of $f$ by 
\begin{equation*}
\partial f = (\partial f \cap T) \cup e_f,
\end{equation*}
which corresponds to the partition of the edges $\G = (\G\cap T)\cup (\G\setminus T)$ restricted to $\partial f$.
We want to check that imposing $s(\partial f, o)=-2$ implies for $e_f$ an orientation such that there is an odd number of clockwise edges in $\partial f$. By switching the orientations as in Definition \ref{def:orient}, we may assume that all edges of $\partial f\cap T$ are counter-clockwise around $f$. 

Let us start observing that 
$$i_{\frac{1}{2}\frac{1}{2}0}=\frac{1}{\sqrt{2}}(e^\frac{1}{2}_\frac{1}{2}\otimes e^\frac{1}{2}_{-\frac{1}{2}}\otimes e^0_0-e^\frac{1}{2}_{-\frac{1}{2}}\otimes e^\frac{1}{2}_{\frac{1}{2}}\otimes e^0_0), \ {\rm and}\ i_{\frac{1}{2}0\frac{1}{2}}=\frac{-1}{\sqrt{2}}(e^\frac{1}{2}_\frac{1}{2}\otimes e^0_0\otimes e^\frac{1}{2}_{-\frac{1}{2}}-e^\frac{1}{2}_{-\frac{1}{2}}\otimes e^0_0\otimes  e^\frac{1}{2}_{\frac{1}{2}});$$ 
then if $n$ is the number of vertices (or edges) contained in $\partial f$ then the evaluation is
\begin{equation*} 
s(\partial f,o) = \pm \left(\frac{1}{\sqrt{2}}\right) ^{n}(1+1).
\end{equation*}
Indeed there are exactly two non-zero summands in Formula \eqref{eq:tensorsn} and they can easily be checked to have the same sign. Since the normalization factor $\sqrt{\frac{\prod_v(J_v+1)!}{\prod_{ev} (J_v-2j_e)!}}$ equals $\sqrt{2}^{n}$, it is clear that we get $-2$ exactly for one of the two possible orientations on $e_f$ and $2$ for the other. Then a direct inspection in Formula \eqref{eq:tensorsn} and in the tensors $i_{\frac{1}{2}\frac{1}{2}0}$ and $i_{\frac{1}{2}0\frac{1}{2}}$ shows that 
\begin{equation*} 
s(\partial f,o)=2\,\sign(e_f)\,(-1)^{a(\partial f)}
\end{equation*}
where $\sign(e_f)$ is $1$ if $e_f$ is oriented positively by $\partial f$ and $-1$ else and $a(\partial f)$ is the number of ``large angles'' in $\partial f$ as defined in Lemma \ref{lemma:KasteleynCycle}. By Lemma \ref{lemma:KasteleynCycle}, since $\partial f$ contains no internal vertices, we have $(-1)^{a(\partial f)}=1$, hence $s(\partial f,o)=-2$ if and only if $e_f$ is oriented negatively with respect to $f$. Switching back the orientations on $\partial f\cap T$ to $o$ in the sense of Definition \ref{def:orient}, we preserve an odd number of clockwise edges, as claimed. A straightforward induction from the leaves of $(\G\setminus T)^*$ to its root concludes this part: only the Kasteleyn class can provide the standard evaluation for all colorings of $\G$. 

To prove that it actually does, for every coloring $\{j_e\}$ on $\G$, we start by observing that the claim is true if $\G$ is the theta-graph. Moreover if $\G$ is generic and equipped with a Kasteleyn orientation $o$ and a coloring $\{j_e\}$, then we can transform it into a graph which is formed by a connected sum of theta-graphs. This is done by means of a finite sequence of \emph{Whitehead moves} similar to Equation \eqref{eq:whitehead}. To see this one argues by induction on the number of vertices of $\G$: choose a region of $\R^2\setminus \G$  with, say, $n$ edges and pick one of them up, say $e$. Applying $n-2$ Whitehead moves one can ``slide one endpoint of $e$'' along the boundary of the face until one gets a new graph containing two arcs (one of which is the image of $e$ under the sequence of moves) having the same endpoints. Then the so-obtained graph is a connected sum of a theta-graph with a ``simpler'' one. Figure \ref{fig:sequencewhitehead} is an exemplification of the case $n=4$ of this process.
\begin{figure}
$\raisebox{-1.5cm}{\put(-10,50){$$}\includegraphics[width=3cm]{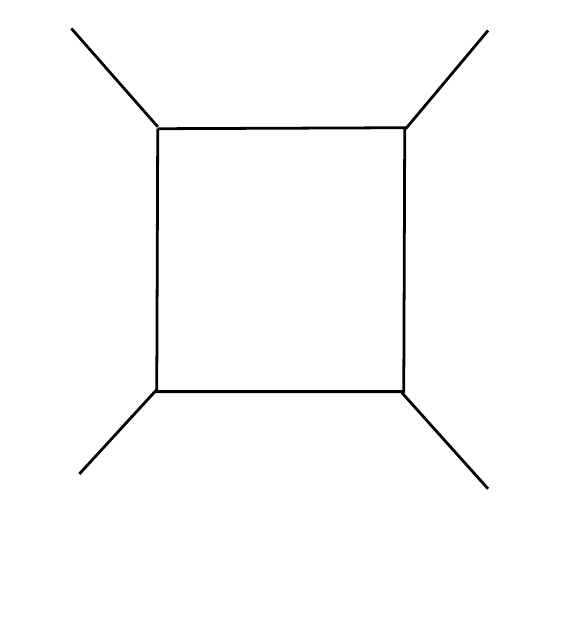}}\to \raisebox{-1.5cm}{\put(-10,50){$$}\includegraphics[width=3cm]{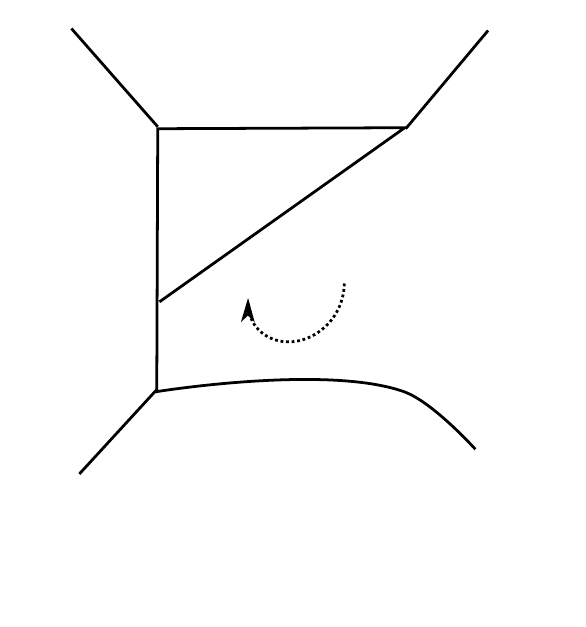}}\to \raisebox{-1.5cm}{\put(-10,50){$$}\includegraphics[width=3cm]{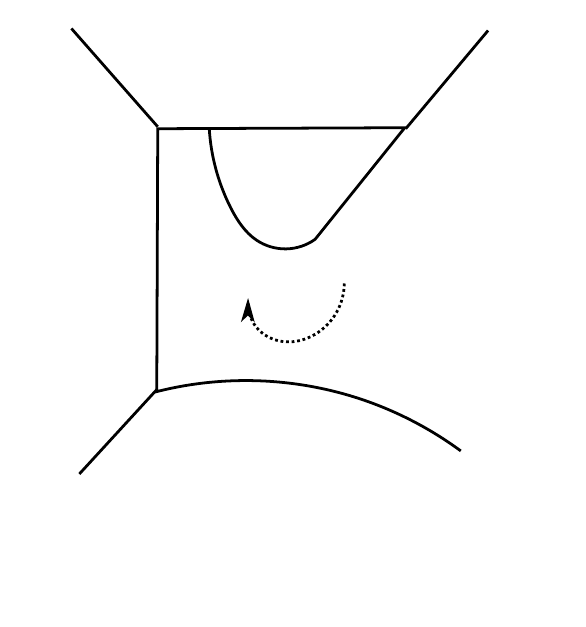}}$
\caption{An example of how to reduce $\G$ to a connected sum of theta-graphs ($n=4$).} \label{fig:sequencewhitehead}
\end{figure}


The algebraic translation of a Whitehead move in terms of spin networks, for our unitary up to a sign definition, is precisely Equation \eqref{eq:whiteheadgraphical}. Observe that indeed, one can always orient the edge carrying $j_{12}$ as in the left hand side (l.h.s.) of \eqref{eq:whiteheadgraphical}, by switching the orientations of the edges incident to one of its endpoints, if necessary. However, the Whitehead moves does not preserve the Kasteleyn orientation: the drawing on the right hand side (r.h.s.) of \eqref{eq:whiteheadgraphical} is not equipped with a Kasteleyn orientation anymore. The face going along $(j, j_{12}, j_1)$ on the l.h.s. loses a clockwise edge in the process, while the face along $(j_1, j_2)$ gains one. Therefore this can be corrected by switching the orientation of the edge carrying $j_1$ on the r.h.s. This brings up an extra factor $(-1)^{2j_1}$ which cancels the one we already had in \eqref{eq:whiteheadgraphical}. The Whitehead move respecting the Kasteleyn orientation class is thus
\begin{equation}
\left.\raisebox{-1cm}{\put(-10,50){$j_1$}\put(24,50){$j_2$}\put(52,50){$j_3$}\put(32,8){$j$}\put(12,27){$j_{12}$}\includegraphics[width=2cm]{leftclebschgordan.pdf}}\right.
= \sum_{j_{23}} (2j_{23}+1) (-1)^{j_1+j_2+j_{3}+j} \begin{Bmatrix} j_1 & j_2 &j_{12}\\ j_3 & j & j_{23} \end{Bmatrix} \left.\raisebox{-1cm}{\put(-7,50){$j_1$}\put(27,50){$j_2$}\put(52,50){$j_3$}\put(32,8){$j$}\put(34,27){$j_{23}$}\includegraphics[width=2cm]{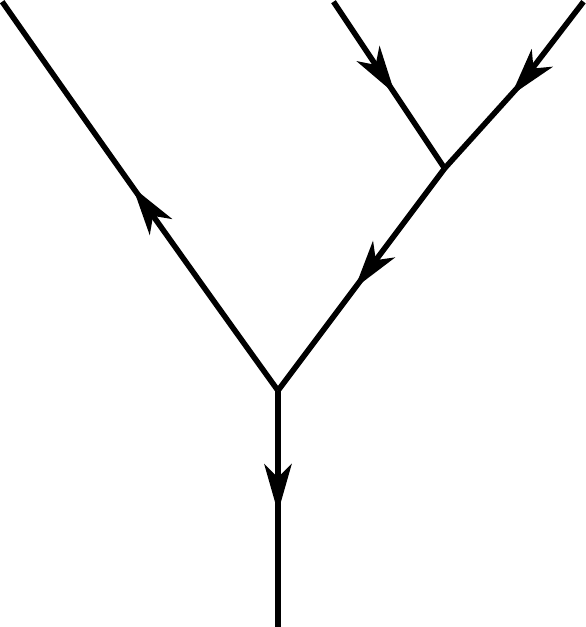}}\right.
\end{equation}
In this equation, the orientation of the half-edges can be flipped, as long as it is done on both sides.

Similarly observe that if $o$ in the left hand side of Equation \eqref{eq:unzip} is a Kasteleyn orientation then $o'$, on the r.h.s. is not. To correct this, we need to switch the orientation of the only edge (colored by $j$) thus introducing an additional factor $(-1)^{2j}$.

Now the crucial observation is that the standard integral evaluation satisfies the same equations, up to the factors $\Delta(a,b,c)=\frac{(a+b+c+1)!}{(a+b-c)!(a+c-b)!(b+c-a)!}$ which we have not taken into account yet. Indeed letting $\Delta(a,b,c)$ re-normalize $s(\G,\{j_e\},o)$ to the standard normalization of $\langle \G,\{j_e\}\rangle^{Int}$ we have to multiply the left hand side by $\sqrt{\Delta(j_1,j_2,j_{12})\Delta(j_3,j,j_{12})}$ and the r.h.s. by $\sqrt{\Delta(j_1,j,j_{23})\Delta(j_2,j_3,j_{23})}$. But, taking into account Racah's formula for Wigner 6j-symbols and the fact that the standard evaluation of a theta-graph colored say by $j_1,j,j_{23}$ is $(-1)^{j_1+j+j_{23}}\Delta(j_1,j,j_{23})$ we see that the standard evaluation $\langle \G,\{j_e\}\rangle^{Int}$ satisfies the following recoupling identities:
\begin{equation}
\raisebox{-1cm}{\put(-10,50){$j_1$}\put(24,50){$j_2$}\put(52,50){$j_3$}\put(32,8){$j$}\put(12,27){$j_{12}$}\includegraphics[width=2cm]{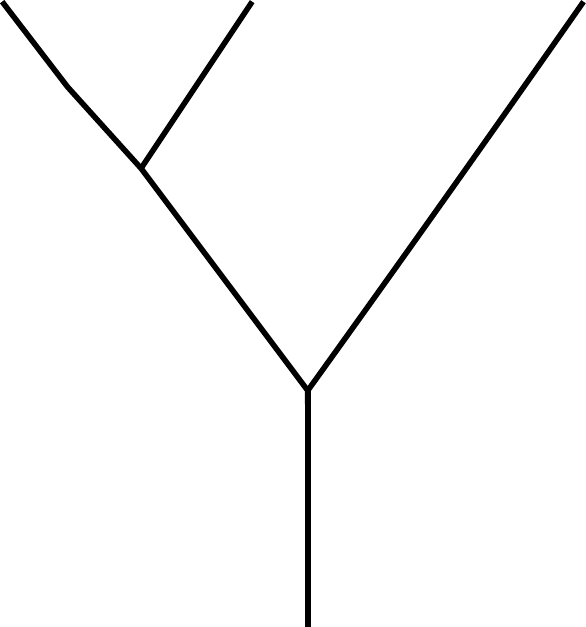}}=\sum_{j_{23}} (2j_{23}+1) (-1)^{j_1+j_2+j_{3}+j} \sqrt{\frac{\Delta(j_1,j_2,j_{12})\Delta(j_{12},j_3,j)}{\Delta(j_2,j_3,j_{23})\Delta(j_1,j_{23},j)}}\left\{\begin{matrix} j_1 & j_2 &j_{12}\\ j_3 & j & j_{23} \end{matrix}\right\}\raisebox{-1cm}{\put(-7,50){$j_1$}\put(27,50){$j_2$}\put(52,50){$j_3$}\put(32,8){$j$}\put(34,27){$j_{23}$}\includegraphics[width=2cm]{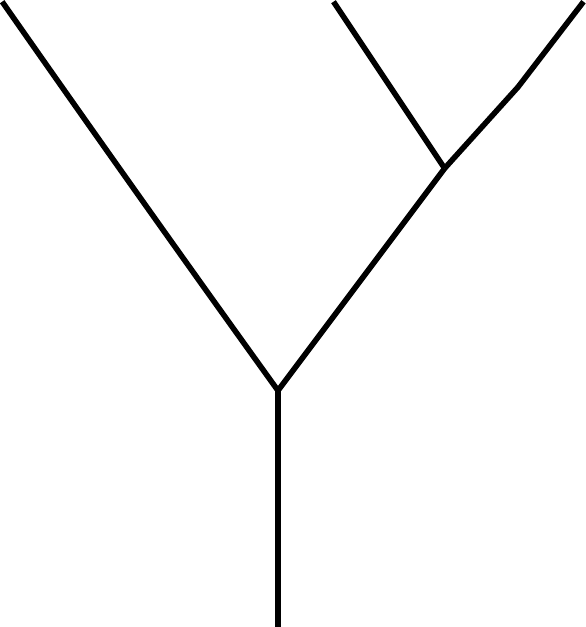}}
\end{equation}
\begin{equation}
\raisebox{-1cm}{\put(20,50){$j$}\put(30,6){$j'$}\put(8,27){$j_{2}$}\put(40,27){$j_{1}$}\includegraphics[width=2cm]{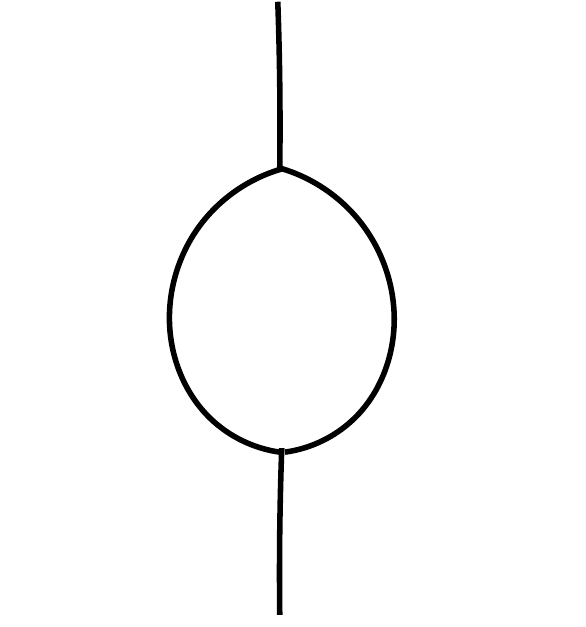}}=\frac{(-1)^{j_1+j_2+j}(-1)^{2j}\delta_{j,j'} \Delta(j,j_1,j_2)}{2j+1}\raisebox{-1cm}{\put(32,8){$j$}\includegraphics[width=2cm]{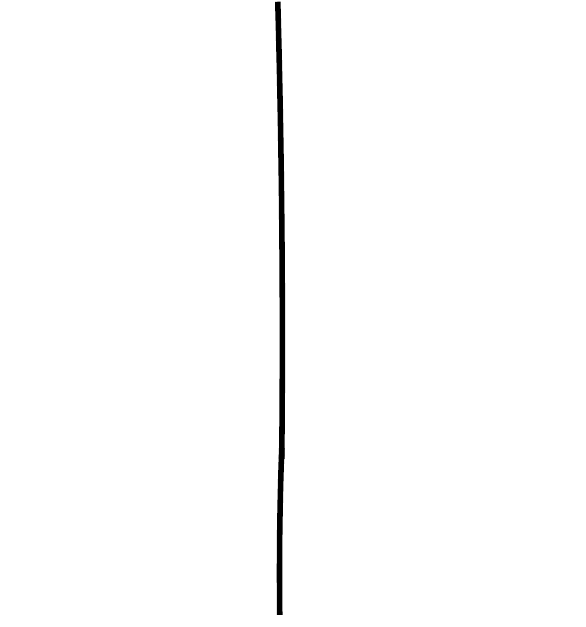}}
\end{equation}
 
Since the values of $\sqrt{\prod_v(J_v+1)!/\prod_{ev} (J_v-2j_e)!} s(\G,\{j_e\},o)$ with $o$ a Kasteleyn orientation satisfy the same recoupling identities as the values of the standard evaluation $\langle \G,\{j_e\}\rangle^{Int}$, the theorem is proved.
\end{proof}

\subsection{Gaussian Representation of the Generating Function and its Loop Expansion}

Similarly to the Ising model, we will provide a representation of this generating function for spin network evaluations as a Gaussian integral, but this time in terms of usual even-Grassmannian integration variables.

Let $\G$ be a planar graph, $o$ a Kasteleyn orientation on $\G$ and 
\begin{equation}
\label{eq:genser}
Z^{Spin}(\Gamma,\{Y_e\})=\sum_{\{j_e\}} \sqrt{\frac{\prod_v (J_v+1)!}{\prod_{ev} (J_v-2j_e)!}} s(\{j_e\},o) \prod_e Y_e^{2j_e}=\sum_{\{j_e\}} \langle \G,\{j_e\}\rangle^{Int} \prod_e Y_e^{2j_e}.
\end{equation}
By Theorem \ref{teo:comparison} we can omit the notation $o$.
In this section we provide a Gaussian integral whose value is $Z^{Spin}(\Gamma,\{Y_e\})$.

A similar integral was provided first in \cite{CoMa}. The difference between the approach in the present paper and that in \cite{CoMa} is that here we use complex-valued integrals and we deal with a Gaussian integral which is convergent as it is, while in \cite{CoMa} a suitable regularization procedure was used to make sense of the integrals. On the downside the integral considered here is on a space whose dimension is twice that of the space used in \cite{CoMa}.
Another Gaussian integral expressions was given in \cite{laurent,bonzom}, but it involves integrating variables living on the vertices of $\G$ instead of its half-edges as done here. The relation between these two formulas and their equivalence will be examined later in Section \ref{vertexint}.

One of the interesting points of having such Gaussian integration formulas is that they allow to re-prove Westbury's theorem \cite{We} asserting that for a planar graph $\G$,
$$
Z^{Spin}(\G,\{Y_e\})=P_\G^{-2}
\qquad\qquad\textrm{where}\qquad
P_\G=\sum_{c \ {\rm curves\ in \ } \G} \,\,\,\prod_{e\in c} Y_e\,,
$$
where ``curve embedded in $\G$" means a (possibly empty) union of edges of $\G$ homeomorphic to a union of circles. It is worth noting that Westbury's initial proof was based on a totally different approach, namely computing spin networks via chromatic evaluations.
 
Let us first fix some notation. Recall that for an angle $\alpha\in A$ of a planar graph $\G$, we let $s(\alpha)$ and $t(\alpha)$ be the half edges at the source and target of $\alpha$ (using the cyclic ordering of the edges around the vertices) and $X_\alpha=\sqrt{Y_{s(\alpha)}Y_{t(\alpha)}}$. For each half-edge $h$ contained in an edge $e$ and incident to a vertex $v$, let $z_{ev},w_{ev}$ be complex variables and  $\overline{z}_{ev},\overline{w}_{ev}$ be their complex conjugates. 

\begin{theo} \label{teo:complexgaussian}
With the above notation it holds:
$$ Z^{Spin}(\Gamma,\{Y_e\})=\int_{(\mathbb{C}^2)^{2\#\Edges}} \prod_{ev} \f{dz_{ev}d\overline{z}_{ev}dw_{ev}d\overline{w}_{ev}}{\pi^{2}}\,
e^{-\sum_{ev}(|z_{ev}|^{2}+|w_{ev}|^{2})}\,
e^{-\sum_e (\bz_{s(e)} \bw_{t(e)} - \bw_{s(e)} \bz_{t(e)}) + \sum_\alpha X_\alpha (z_{s(\alpha)} w_{t(\alpha)} - w_{s(\alpha)} z_{t(\alpha)})}$$
where the r.h.s. is a well defined integral on $\C^{4\#E}$. Computing the value of the integral one also gets $Z^{Spin}(\G,\{Y_e\})=P_\G^{-2}$.
\end{theo}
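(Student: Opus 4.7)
The plan is to Taylor-expand both non-quadratic exponentials and interpret the remaining Gaussian factor as a Bargmann inner product at each half-edge. Concretely, writing
\begin{equation*}
e^{\sum_\alpha X_\alpha (z_{s(\alpha)} w_{t(\alpha)} - w_{s(\alpha)} z_{t(\alpha)})}\, e^{-\sum_e (\bar z_{s(e)} \bar w_{t(e)} - \bar w_{s(e)} \bar z_{t(e)})} = \sum_{\{a_\alpha\},\{b_e\}} \prod_\alpha \frac{X_\alpha^{a_\alpha}}{a_\alpha!} \prod_e \frac{(-1)^{b_e}}{b_e!}(\cdots)^{a_\alpha}(\cdots)^{b_e},
\end{equation*}
the basic Bargmann pairing $\int \frac{dz\,d\bar z}{\pi} e^{-|z|^2} z^p \bar z^q = p!\,\delta_{p,q}$ at each half-edge forces matching of the $z$- and $w$-degrees on the holomorphic and antiholomorphic sides. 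At a trivalent vertex with cyclically ordered edges $e_1,e_2,e_3$ and angles $\alpha_i$ between $e_i$ and $e_{i+1}$, this yields $a_{\alpha_{i-1}} + a_{\alpha_i} = b_{e_i}$, whose non-negative integer solutions are parametrized by admissible spin colorings via $b_e = 2j_e$ and $a_{\alpha_i} = j_{e_{i-1}} + j_{e_i} - j_{e_{i+1}}$. Using $X_\alpha = (Y_{s(\alpha)}Y_{t(\alpha)})^{1/2}$, the monomial $\prod_\alpha X_\alpha^{a_\alpha}$ reorganizes into $\prod_e Y_e^{2j_e}$, as expected from \eqref{eq:genser}.

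Next I would invoke the spinor generating identity for the Wigner 3j-symbols: up to explicit prefactors involving $(2j_e)!$ and $a_\alpha!$, the angle polynomial at a trivalent vertex $v$ is a linear combination of monomials $\prod_i z_{e_i,v}^{j_{e_i}+m_i} w_{e_i,v}^{j_{e_i}-m_i}$ whose coefficients are the 3j-symbols $\left(\begin{smallmatrix} j_{e_1} & j_{e_2} & j_{e_3} \\ m_1 & m_2 & m_3 \end{smallmatrix}\right)$. The antiholomorphic edge expansion plays the analogous role on the $\bar z, \bar w$ side, producing the phase $(-1)^{j_e - m_e}$ and contracting the magnetic indices of the two vertices of $e$. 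Collecting all the half-edge Gaussian integrations together with their factorial prefactors reconstructs precisely $\sqrt{\prod_v (J_v+1)!/\prod_{ev}(J_v-2j_e)!}\,s^\Gamma(\{j_e\},o)$, which by Theorem~\ref{teo:comparison} equals the integral evaluation $\langle \Gamma,\{j_e\}\rangle^{Int}$; summing over $\{j_e\}$ yields $Z^{Spin}(\Gamma,\{Y_e\})$.

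The main obstacle is sign bookkeeping. The factor $(-1)^{b_e}$ from the edge expansion, the antisymmetry of the brackets $(z_{s(\alpha)}w_{t(\alpha)} - w_{s(\alpha)}z_{t(\alpha)})$ under swapping their two arguments, and the orientation signs $\epsilon^v_e$ appearing in \eqref{eq:tensorsn} must all combine to reproduce the $(-1)^{j_e - m_e}$ phases of $s^\Gamma$. Exactly as in the fermionic sign analysis of Proposition~\ref{prop:GrassmannIsing}, these signs accumulate along the cycles of $\Gamma$ and, via Lemma~\ref{lemma:KasteleynCycle}, telescope to $+1$ precisely when $o$ is a Kasteleyn orientation. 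Finally, the second assertion $Z^{Spin}(\Gamma,\{Y_e\}) = P_\Gamma^{-2}$ I would take from Westbury's theorem \cite{We} recalled just before the statement; alternatively, viewing the integrand globally as a bosonic Gaussian in the $4\#\Edges$ complex variables, the integral evaluates to the inverse of the determinant of its quadratic form, and this determinant admits an expansion as a sum over disjoint loop covers of $\Gamma$, in direct analogy with the Pfaffian expansion of Proposition~\ref{prop:GrassmannIsingC}, yielding the factorization $P_\Gamma^2$ and hence $P_\Gamma^{-2}$ for the integral.
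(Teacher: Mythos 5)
Your plan for the main identity is essentially the paper's own proof: the paper starts from Bargmann's generating function for the 3j-symbols, equation \eqref{eq:genserangle}, which is exactly your Taylor expansion of the angle exponentials at each vertex, and then glues vertices together with one explicit Gaussian integral per edge whose value is $\delta_{j_t,j_s}\,\delta_{m_t,-m_s}\,(-1)^{j_t-m_t}(j_t+m_t)!(j_t-m_t)!$ --- your Bargmann pairing --- which enforces the spin/magnetic-index matching and produces the phases $(-1)^{j_e-m_e}$ and the factorial cancellations in one stroke. One clarification on your sign paragraph: no cycle-by-cycle sign analysis via Lemma \ref{lemma:KasteleynCycle} is needed inside the bosonic computation; the edge integral already reproduces $s^{\Gamma}(\{j_e\},o)$ verbatim for whatever orientation $o$ defines the edge terms, and the Kasteleyn condition enters only through Theorem \ref{teo:comparison} (which you invoke anyway) to convert $\sqrt{\prod_v(J_v+1)!/\prod_{ev}(J_v-2j_e)!}\;s^{\Gamma}(\{j_e\},o)$ into $\langle\Gamma,\{j_e\}\rangle^{Int}$, so that part of your bookkeeping is already packaged elsewhere. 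Where you genuinely diverge is the second assertion: citing Westbury \cite{We} is logically admissible but defeats the stated purpose of the theorem, which is to \emph{re-prove} $Z^{Spin}=P_\G^{-2}$ from the Gaussian formula. The paper's route reduces the determinant of the quadratic form to $2^{8\#\Edges}\det(A^{\G}+B^{\G})=2^{8\#\Edges}\,\Pf(A'+B')^{4}$ for an antisymmetric half-edge-indexed matrix $A'+B'$, and then identifies $\Pf(A'+B')$ with $P_\G$ via Kasteleyn's dimer theorem \cite{Kas} applied to the auxiliary graph obtained by blowing up each vertex of $\G$ into a counter-clockwise triangle; it is this Pfaffian, not the determinant itself, that expands as the sum over curves in $\G$, so your alternative sketch needs that intermediate Pfaffian reduction before the analogy with Proposition \ref{prop:GrassmannIsingC} can be carried out.
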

\begin{proof}
Let us prove the first statement.
We start with the generating function of Wigner 3j-symbols, which is known \cite{Bargmann} to be given by:
\begin{equation}\label{eq:genserangle}
\sum_{\substack{j_1, j_2, j_3\\ m_1,m_2,m_3}} \begin{pmatrix} j_1 &j_2 &j_3\\ m_1 &m_2 &m_3 \end{pmatrix} \sqrt{(J+1)!} \prod_{i=1}^3 \frac{Y_i^{j_i} z_i^{j_i+m_i} w_i^{j_i-m_i}}{\sqrt{(J-2j_i)!(j_i-m_i)!(j_i+m_i)!}} 
= \exp \sum_\alpha X_\alpha (z_{s(\alpha)} w_{t(\alpha)} - w_{s(\alpha)} z_{t(\alpha)}).
\end{equation}
where $J=j_1+j_2+j_3$ and $\alpha$ denotes an angle. In the notation if $\alpha$ is the angle between the edges $i$ and $i+1$, then $z_{s(\alpha)} w_{t(\alpha)} - w_{s(\alpha)} z_{t(\alpha)} = z_{i} w_{i+1} - w_{i} z_{i+1}$, for $i=1,2,3$.

\smallskip

Taking the product over all the vertices of $\G$ of the above generating sums we do not yet get the desired generating sum for two reasons:
\begin{enumerate}
\item\label{Point1} We need to consider only products where the spins and magnetic numbers of half-edges contained in the same edge match;
\item Even after solving point \ref{Point1}, the total weight of the state associated to a spin network with a choice of magnetic numbers per each edge would be off by a factor $\prod_{e} (-1)^{j_e-m_e} ((j_e-m_e)!(j_e+m_e)!)^{-1}$ (compare with Formula \eqref{eq:tensorsn}). 
\end{enumerate}

We solve both preceding problems by considering for each edge $e\in \Edges$ the following (well-defined) Gaussian integral:
\beq
&&\int_{\C^{4}} \f{dz_{t}d\overline{z}_tdz_{s}d\overline{z}_s dw_{t}d\overline{w}_t dw_{s}d\overline{w}_s}{\pi^{4}}\,
e^{-(|z_{t}|^{2}+|z_{s}|^{2}+|w_{t}|^{2}+|w_{s}|^{2})}
e^{-(\bw_{t} \bz_{s}-\bz_{t} \bw_{s})} z_{t}^{j_{t}+m_{t}} w_{s}^{j_{s}-m_{s}} z_{s}^{j_{s}+m_{s}} w_{t}^{j_{t}-m_{t}} \nn\\
&=&
\delta_{j_{t},j_{s}}\delta_{m_{t},-m_{s}}\,(-1)^{j_{t}-m_{t}}\,(j_{t}+m_{t})!(j_{t}-m_{t})!
\eeq

Thus taking the integral over all $\C^{4\#E}$ of the product over all the vertices of the generating series \eqref{eq:genserangle} and the product over all edges of the above Gaussian factors we get the desired result. 
Observe that the so-obtained Gaussian integral is convergent when $\{Y_e\}$ are near $0$.

\medskip

One could then try to switch back from complex variables to real variables, and avoid the doubling of the number of  integrals, by using the Segal-Bargmann transform, which maps complex holomorphic monomial $z^{n}$ onto the Hermite polynomials $H_{n}(x)$. This works and leads to a real Gaussian integral but finally produces a non-trivial and non-linear action in terms of the couplings $X_{\alpha}$. 

We now prove the last statement. The quadratic form associated to our complex Gaussian,
 is given by a $(8\#\Edges\times8\#\Edges)$ square matrix. In terms of the real parts of the $z$ and $w$ variables and their imaginary parts it reads as:
$$
\cM^{\Gamma}
\,\equiv\,
\mat{c|c}{ -A^{\G}+B^{\G}+2\id & -i(A^{\G}+B^{\G}) \\ \hline -i(A^{\G}+B^{\G})& A^{\G}-B^{\G}+2\id}\,
$$
where $A^{\G}$ and $B^{\G}$ are the symmetric real matrices of size $4\#\Edges$ defined as follows:
\be
{^{t}}ZA^{\Gamma}Z=-2\sum_e (z_{s(e)} w_{t(e)} - w_{s(e)} z_{t(e)}),
\quad
{^{t}}ZB^{\Gamma}Z=-2\sum_\alpha X_\alpha (z_{s(\alpha)} w_{t(\alpha)} - w_{s(\alpha)} z_{t(\alpha)})\,.
\ee

Using the facts that $(A^{\Gamma})^{2}=\id$ and $A^{\Gamma}B^{\Gamma}=-B^{\Gamma}A^{\Gamma}$ (which are easily checked by the definitions of $A^\G$ and $B^\G$), and summing first $-i$ times the last columns to the first ones and then $i$-times the new last lines to the first ones we compute:
$$
\det \cM^{\Gamma}
\,=\,
\det \mat{c|c}{ 4\id & 2i(\id-B^{\G}) \\ \hline -2i(\id+A^{\G})& A^{\G}-B^{\G}+2\id}
\,=\,
4^{4\#\Edges}\det(\id + A^{\G}B^{\G})
\,=\,
2^{8\#\Edges}\det(A^{\G}+B^{\G})
$$
where the second equality holds from the fact that the determinant of a $2\times 2$-block matrix $\left( \begin{smallmatrix} M_1 & M_2\\ M_3 & M_4\end{smallmatrix}\right)$ is $\det(M_4-M_3M_1^{-1}M_2)\det(M_1)$ if $M_1$ is invertible. 

To conclude we now follow the lines of the approach used in \cite{CoMa}: observe that in terms of the real parts of the $z$ variables and of the real parts of the $w$ variables both $A^\G$ and $B^\G$ are symmetric and of the form 
$A=\left( \begin{smallmatrix} 0 & A'^t\\ A' & 0\end{smallmatrix}\right)$ and $B=\left( \begin{smallmatrix} 0 & B'^t\\ B' & 0\end{smallmatrix}\right)$  with $A'^t=-A$ and $B'^t=-B$. Thus $\det(A^\G+B^\G)=\Pf(A'+B')^4$ where $\Pf$ denotes the Pfaffian. We claim that $\Pf(A'+B')=P_\G$. 
To interpret it we use Kasteleyn's approach \cite{Kas} by interpreting the matrix $A'+B'$ as the adjacency matrix of the oriented graph $\G'$ obtained by replacing each vertex of $\G$ with a triangle and orienting the new edges counter-clockwise, while keeping the edges coming from those of $\G$ oriented as in $\G$. It is straightforward to check that $\G'$ inherits then a Kasteleyn orientation.

Remark that the edges of $\G'$ are of two types: those corresponding to edges of $\G$, say of ``edge type'' and weighted $1$, and those corresponding to angles of $\G$, say of  ``angle type'' and weighted $X_\alpha$. 
By Kasteleyn's results the value of the Pfaffian is the generating sum of the dimer configurations on $\G'$ weighted with the weights of the edges contained in the dimers.
Dimer configurations on $\G'$ correspond bijectively to curves in $\G$: consider the angle-type edges contained in the dimer configurations; they can be completed in a unique way to a curve in $\G'$ by adding some edges of edge type; then retracting the triangles of $\G'$ to points we get the desired curve. 
Reciprocally each curve in $\G$ meets a vertex either $0$ times or $2$ times: in the latter case it identifies an edge of angle type in $\G'$, in the former we associate to it the three  edges of edge type of $\G'$ surrounding the triangle corresponding to the vertex.

Since the orientation of $\G'$ is a Kasteleyn orientation, we conclude by the results of \cite{Kas} that $\Pf(A'+B')$ is the generating series of dimer configurations on $\G'$ weighted as above, which in turn equals $\sum_{c\subset \G} \prod_{\alpha\subset c} X_\alpha$. Recalling that $X_\alpha=\sqrt{Y_{s(\alpha)}Y_{t(\alpha)}}$, we are done. 
\end{proof}

This theorem implies the following remark, central to our present work, on the duality between the 2D Ising model and spin network evaluations:

\begin{rem}
Let $\G$ be a planar, connected and 3-valent graph. Then if we match the Ising couplings $y_{e}$ to the parameters of the spin network generating function by $Y_{e}=\tanh y_{e}$, the following equalities hold exactly:
\be
\label{point}
Z_{f}^{\C}(\G,\{X_{\alpha}\})^2\,
Z^{Spin}(\G,Y_{e})
\,=\,
1\,,
\qquad
Z^{Ising}(\G,\{y_e\})^2\, Z^{Spin}(\Gamma,\{Y_e\})
\,=\,
\left(2^{2\#\Ver}\prod_{e\in \Edges} \cosh(y_e)^2\right) 
\,,
\ee
where we keep the matching $X_{\alpha}=\sqrt{Y_{s(\alpha)}Y_{t(\alpha)}}$ between angle and edge variables ensuring that $\prod_{\alpha\in\cL}X_{\alpha}=\prod_{e\in\cL}Y_{e}$ around any loop $\cL$ in the graph $\G$.
\end{rem}

\section{Duality through Supersymmetry}\label{subsec:supersymmetry}

In order to understand the relation between the Ising model and spin network evaluations, one can play a little game. On the one hand, the Ising partition function on a graph $\Gamma$ has a simple loop expansion as $P_{\Gamma}=1+\tP_{\G}=1+\sum_{\gamma\subset\Gamma} \cW_{\gamma}$ (up to pre-factors) where we sum over all non-empty even subgraphs $\gamma$. These even subgraphs are simply identified as unions of disjoint loops when $\Gamma$ is 3-valent. The amplitudes $\cW_{\gamma}$ are the weights $\prod_{e\in\gamma} Y_{e}$. On the other hand, the generating function of the spin network evaluation is related to the squared inverse of $P_{\gamma}$, which can be expanded as a power series:
\be
P_{\Gamma}^{-2}
=
\f1{(1+\tP_{\G})^{2}}
=
\sum_{n\in\N} (-1)^{n}n\,\tP_{\G}^{n}
=
\sum_{n\in\N} (-1)^{n}n\,\sum_{\gamma_{1}..\gamma_{n}}  \cW_{\gamma_{1}}.. \cW_{\gamma_{n}} .
\ee
%
%
Such union of even subgraphs can be considered as an arbitrary subgraph $G$ with extra integer labels $c_{e}$ on each edge indicating how many times an edge $e$ belongs to one of the even subgraphs $\gamma_{i}$. We have the obvious constraint that the sum of $c_{e}$'s around each vertex $v$ is even, plus the less obvious constraints that they must satisfy the triangular inequalities.  These colors $c_{e}$ are actually to  be thought of as twice the spins $j_{e}$.
This allows to write the power series above as:
\be
P_{\Gamma}^{-2}
=
1+\sum_{G\subset\Gamma}\sum_{\{c_{e}\}_{e\in G}} \cC_{G}(c_{e})\,\cW_{G}^{\{c_{e}\}}\,, 
\qquad
\cW_{G}^{\{c_{e}\}}
\,=\,
\prod_{e\in G}Y_{e}^{c_{e}}
\,,
\ee
where the coefficients $\cC_{G}(c_{e})$ count, up to the pre-factor $(-1)^{n}n$, the number of ways we can decompose the subgraph $G$ with admissible coloring $\{c_{e}\}$ as a union of even subgraphs $\gamma_{i}$. This is actually the expansion in terms of spin network evaluations. But it explains, in some sense, how while the Ising model sums over configurations with the occupation number on its edge being 0 or 1, and can be thought as fermionic, the inverse of its partition function sums over configurations with arbitrary  integral occupation numbers (with some non-trivial combinatorial factors, which are actually the spin network evaluations) and can be thought as bosonic. We formulate in a rigorous way below defining a supersymmetry relating the Ising partition function and the generating function for spin network evaluations.

\subsection{A Supersymmetric Theory}\label{sec:supersymmetry}

We showed in Proposition \ref{prop:GrassmannIsing} that the Ising partition function (on the graph $\Gamma$) can be formulated as a Gaussian fermionic  integral. It directly evaluates the determinant of the quadratic form, which is simply the polynomial $P_\Gamma$. On the other side, the generating function of spin network evaluations is, by Westbury theorem,  the inverse squared  $1/P_\Gamma^2$. It is thus  given by a bosonic Gaussian integral to get the inverse determinant, together with a doubling of the variables and an anti-symmetrization, in order to reach the correct power of the determinant. This is exactly what we have achieved in the previous section.

Let us now start with the following ``meta-theory'', with both fermionic and bosonic degrees of freedom, with a copy of $\R^2\oplus (\bigwedge\R)^2$ on each half-edge,
$\cZ(\Gamma,\{Y_e\}) \equiv Z_{f}(\G,\{y_e\})^2\, Z^{\C}(\G,Y_{e})$:
\bes
\cZ(\Gamma,\{Y_e\}) &= \int_{\mathcal{S}} \prod_{\text{half-edges $h$}} dz_h dw_h d\psi_h d\eta_h\,&
\exp\left( \sum_e (z_{s(e)} w_{t(e)} - w_{s(e)} z_{t(e)}) + \psi_{s(e)}\psi_{t(e)} + \eta_{s(e)}\eta_{t(e)} \right)\nn\\
&&\exp \left(\sum_\alpha X_\alpha \Bigl(z_{s(\alpha)} w_{t(\alpha)} - w_{s(\alpha)} z_{t(\alpha)} + \psi_{s(\alpha)}\psi_{t(\alpha)} + \eta_{s(\alpha)}\eta_{t(\alpha)}\Bigr)\right)
\,,
\ees
with $X_{\alpha}=\sqrt{Y_{s(\alpha)}Y_{t(\alpha)}}$ as before.
The variables $z,w$ are bosonic and are the spin network degrees of freedom, while the odd-Grassmaniann variables $\eta,\psi$ are fermionic and are the Ising degrees of freedom. Obviously the two families of integral factorize. One recognizes the Grassmannian integral of Proposition \ref{prop:GrassmannIsing} doubled (integration over $\psi$ and $\eta$). As for the bosonic part, it is a real version of the formula of Theorem \ref{teo:complexgaussian}, which is however not well-defined (its quadratic form has negative eigenvalues). This Gaussian integral should in principle be constant, thus leading to the result \eqref{point}. Although the above integral is ill-defined, it is a good starting point to define the supersymmetry transformations and later extend them to the well-defined complexified action.

Notice that the argument of the exponential splits into contributions labeled by pairs of adjacent half-edges $i,j$, either on the same edge $e$ or on the same angle $\alpha$; in the former case we suppose $i=s(e), j=t(e)$ and in the latter $i=s(\alpha),j=t(\alpha)$.  The total action  thus reads $S = \sum_{<i,j>} X_{ij} S_{ij}$ where $<i,j>$ means that the half-edges $i$ and $j$ are adjacent, and
\begin{equation}
S_{ij} = (z_i w_j- w_i z_j) + \psi_i \psi_j + \eta_i \eta_j,
\end{equation}
with the couplings $X_{ij}=1$ on an edge and $X_{ij}=X_\alpha$ as before for an angle. 

We define the supersymmetric operator $Q$ of odd parity (i.e. an odd derivation) acting on every half-edge $i$ on $\mathcal{S}_{i} = \R^2_i\oplus (\bigwedge\R)^2_i=\R^{2|2}_i$  as
\begin{equation}
Q z_i = \psi_i,  \quad Q w_i = \eta_i,\quad Q\psi_i = w_i, \quad  Q \eta_i = -z_i.
\end{equation}
We naturally extend the action of $Q$ to the algebra of functions on $\oplus_i\mathcal{S}_i$ via the graded Leibniz rule. It then follows that for any adjacent half-edge $j$, we have:
\begin{equation}
Q S_{ij} = \psi_i w_j + z_i \eta_j - \psi_j w_i - z_j \eta_i + w_i \psi_j - \psi_i w_j - z_i \eta_j + \eta_i z_j
=0,
\end{equation}
thus the action $S = \sum_{<h,g>} X_{hg} S_{hg}$ is $Q$-closed for each half-edge $i$.

Moreover, the action is actually $Q$-exact (but $Q^{2}$ does not vanish),
\begin{equation}
S_{ij} = Q \Phi_{ij} = -Q \Phi_{ji},
\qquad \text{with}\quad
\Phi_{ij} = z_i\psi_j + w_i \eta_j.
\end{equation}
Viewing the integral $\cZ(\Gamma)=\int \exp(\sum_{<h,g>} X_{hg} S_{hg})$ as a function of the couplings $\{X_{ij}\}$ associated to pairs of adjacent half-edges, we write
\begin{equation}
\frac{\partial\mathcal{I}(\Gamma)}{\partial X_{ij}} = \int_{\mathcal{S}} \prod_h dz_h dw_h d\psi_h d\eta_h\ \exp\Bigl(\sum_{<h,g>} X_{hg} S_{hg} \Bigr)\ S_{ij}
= \int_{\mathcal{S}} \prod_h dz_h dw_h d\psi_h d\eta_h\ Q \biggl( \Phi_{ij}\,\exp\Bigl(\sum_{<h,g>} X_{hg} S_{hg} \Bigr) \biggr).
\end{equation}
Using the lemma \ref{Qexact}, proven below, stating that the integral of any $Q$-exact function vanishes, we can deduce that 
\begin{equation}
\frac{\partial\mathcal{Z}(\Gamma)}{\partial X_{ij}} = 0\,,
\end{equation}
This proves without explicit evaluation that the integral $\mathcal{Z}(\Gamma)$ is a constant. Since it is independent of the variables $\{X_{ij}\}$, one can then evaluate it on any set of values making the integral simple. 

\begin{lemma}
\label{Qexact}
Let $f\in \C^{\infty}(\R^{2|2}_i)$ be a function admitting a Berezin integral. 
Then it holds:
\begin{equation}
\int_{\mathcal{S}}dz_h dw_h d\psi_h d\eta_h\ Q f = 0.
\end{equation}
\end{lemma}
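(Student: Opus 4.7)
The plan is to exhibit $Q$ as a first-order odd differential operator on $\R^{2|2}_i$; since $Q$ is fixed by its values on the generators and the graded Leibniz rule, it must coincide with
\begin{equation*}
Q \,=\, \psi_i\,\frac{\partial}{\partial z_i} \,+\, \eta_i\,\frac{\partial}{\partial w_i} \,+\, w_i\,\frac{\partial}{\partial \psi_i} \,-\, z_i\,\frac{\partial}{\partial \eta_i},
\end{equation*}
as one checks directly by evaluating on $z_i, w_i, \psi_i, \eta_i$. Consequently $Qf$ splits as a sum of four contributions, each of the form (bosonic or Grassmann coefficient) times $\partial_\mu f$ for a single coordinate $\mu$. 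I would prove the lemma by showing that each of the four pieces integrates to zero separately.

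For the fermionic-derivative pieces $w_i\,\partial_{\psi_i}f$ and $-z_i\,\partial_{\eta_i}f$, I would expand $f$ as a polynomial in the Grassmann variables,
\begin{equation*}
f \,=\, f_0 \,+\, f_1\psi_i \,+\, f_2\eta_i \,+\, f_3\,\psi_i\eta_i,
\end{equation*}
with smooth coefficients $f_k(z_i,w_i)$. Then $\partial_{\psi_i}f$ has no $\psi_i$-dependence, so the Berezin integral $\int d\psi_i\,\partial_{\psi_i}f$ vanishes by the defining relation $\int d\psi_i\,1 = 0$. The bosonic prefactor $w_i$ pulls through the Berezin integral unchanged, so the whole contribution is zero; the analogous argument handles $-z_i\,\partial_{\eta_i}f$.

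For the bosonic-derivative pieces $\psi_i\,\partial_{z_i}f$ and $\eta_i\,\partial_{w_i}f$, the Grassmann prefactors pull outside the $\R^2$-integral, and each Grassmann coefficient of $f$ contributes $\int_{\R}dz_i\,\partial_{z_i}f_k = 0$ by the fundamental theorem of calculus, provided no boundary terms survive at infinity. I would read the hypothesis that $f$ ``admits a Berezin integral'' as precisely this decay assumption (Schwartz-class coefficients, or at least decay fast enough to kill boundary contributions). Summing the four vanishing pieces then gives $\int dz_i\,dw_i\,d\psi_i\,d\eta_i\,Qf = 0$, and the extension to the full half-edge product by Fubini is immediate.

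The algebraic content is thus very short; the delicate point, and the one I would expect to require care in the intended application, is the bosonic decay hypothesis. In the proof that $\partial_{X_{ij}}\cZ(\Gamma) = 0$ in Section \ref{sec:supersymmetry}, the lemma is applied with $f = \Phi_{ij}\,\exp(S)$, but the authors themselves note that the real version of $S$ has indefinite quadratic form, so $\exp(S)$ fails to be absolutely integrable on the real locus. The vanishing argument above is valid only after the complexification already carried out in Theorem \ref{teo:complexgaussian} (or an equivalent regularization) has made integration by parts legitimate; once that is in place, the lemma reduces to the elementary verification sketched above.
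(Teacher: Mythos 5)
Your proof is correct, but it takes a different route from the paper's. The paper does not decompose $Q$ in coordinates; instead it introduces an auxiliary odd parameter $\epsilon$ with $\epsilon^{2}=0$, performs the parity-preserving change of variables $z\mapsto(1+\epsilon Q)z$ (and likewise for $w,\psi,\eta$), checks that the Berezinian of the Jacobian equals $1$, and concludes that the integral of $f$ equals the integral of $f+\epsilon\,Qf$, whence $\int Qf=0$. Your argument --- writing $Q=\psi_i\partial_{z_i}+\eta_i\partial_{w_i}+w_i\partial_{\psi_i}-z_i\partial_{\eta_i}$ and killing the four terms separately, the odd-derivative terms by the defining property of the Berezin integral and the even-derivative terms by integration by parts --- is the same computation unpacked: the invariance of the Lebesgue measure under the nilpotent shift $z\mapsto z+\epsilon\psi$ that the paper invokes is justified by exactly the boundary-term vanishing you make explicit. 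What your version buys is transparency about the analytic hypothesis (decay of the even coefficients of $f$), which the paper leaves implicit in the phrase ``admitting a Berezin integral'' and in the change-of-variables step; what the paper's version buys is that it is the standard localization template, needing no coordinate expression for $Q$ and generalizing verbatim to the non-Gaussian actions of Section \ref{sec:nontrivialcoupled}. Your closing remark about the application --- that the real action has indefinite quadratic form so the lemma is only legitimately applied after complexification --- matches a caveat the authors themselves state just before defining $Q$, so it is a fair and accurate reading rather than an objection to the proof.
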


\begin{proof}
Let $\epsilon$ be an odd-Grassmannian  variable, $\epsilon^2=0$ and perform the parity-preserving change of variables: $z\mapsto (1+\epsilon Q) z$, and similarly on $w, \psi, \eta$. 
Ordering the variables as $z,w,\psi,\eta$, the Jacobian matrix of the transformation reads as:
$$
J=\mat{cc|cc}{1 & 0 & 0 & -\eps \\ 0 & 1 & \eps & 0 \\ \hline \eps & 0 & 1 & 0 \\ 0 & \eps & 0 & 1}\,.
$$
Its Berezinian (or super determinant) is simply evaluated
$$
\textrm{Sdet} J
\,=\,
\det\left[
\id- \mat{cc}{0 & -\eps \\ \eps & 0}\mat{cc}{\eps & 0 \\ 0 & \eps}
\right]
\,=\,
1\,.
$$
Applying this change of variables on  the integral of $f$ gives:
\begin{equation}
\int_{\mathcal{S}} \prod_h dz_h dw_h d\psi_h d\eta_h\ f = \int_{\mathcal{S}} \prod_h dz_h dw_h d\psi_h d\eta_h\ \left(f + \sum_{h} \epsilon (Qz_h \partial_{z_h}f + Qw_h \partial_{w_h}f + Q\psi_h \partial_{\psi_h}f + Q\eta_h \partial_{\eta_h}f) \right).
\end{equation}
To get this equation, one has to be careful with the parities of $\epsilon, Qz, Qw$ and similarly that the operators $\partial_\psi, \partial_\eta$ are graded. Then observing that
\begin{equation}
Qz_h \partial_{z_h}f + Qw_h \partial_{w_h}f + Q\psi_h \partial_{\psi_h}f + Q\eta_h \partial_{\eta_h}f = Q(f),
\end{equation}
the desired result follows.
\end{proof}

We now would like to follow this procedure for the well-defined complex Gaussian integral for the generating function of the spin network evaluation given in Theorem \ref{teo:complexgaussian}. To allow to match bosonic and fermionic variables, we also use the formulation of the Ising model in terms of complex fermions and we then use Formula \eqref{eq:complexising}.

Thus we consider the partition function:
\begin{multline}
\cZ^{\C}(\Gamma,\{Y_e\}) = \frac{1}{\pi^{4\#\Edges}}\int_{\mathcal{S}} \prod_{\text{half-edges $h$}}
dz_hd\overline{z}_h dw_h d\overline{w}_h d\psi_hd\overline{\psi}_h d\eta_hd\overline{\eta}_h\,
e^{ -\sum_{e,v} \bigl(|z_{e}^{v}|^{2}+|w_{e}^{v}|^{2}\bigr)-\sum_{e}\bigl(\bz_{s(e)} \bw_{t(e)} - \bw_{s(e)} \bz_{t(e)}\bigr)}\\
e^{  \sum_{e,v}\bigl(\psi_{e}^{v}\ceta_{e}^{v}+\bpsi_{e}^{v}\eta_{e}^{v}\bigr)- \sum_e \bigl(\bpsi_{s(e)}\bpsi_{t(e)} + \ceta_{s(e)}\ceta_{t(e)}\bigr)}
e^{\sum_\alpha X_\alpha \bigl(z_{s(\alpha)} w_{t(\alpha)} - w_{s(\alpha)} z_{t(\alpha)} + \psi_{s(\alpha)}\psi_{t(\alpha)} + \eta_{s(\alpha)}\eta_{t(\alpha)}\bigr)}
\,,
\end{multline}
We consider the same supersymmetry generator $Q$ as before and we extend its action on the complex variables, both bosonic and fermionic by assuming its compatibility with the complex conjugation, $Q\bar{f}=\overline{Qf}$.
Then we decompose the complex path integral distinguishing the half-edge amplitude, the edge action and the angle action:
\be\label{eq:supersymmetry}
\cZ^{\C}(\Gamma,\{\lambda_{e,v},\mu_{e},Y_e\}) =  \frac{1}{\pi^{4\#\Edges}}\int_{\mathcal{S}} \prod_{\text{half-edges $h$}}
dz_hd\overline{z}_h dw_h d\overline{w}_h d\psi_hd\overline{\psi}_h d\eta_hd\overline{\eta}_h\,
e^{-\sum_{e,v}\lambda_{e,v}K_{e,v}}e^{-\sum_{e}\mu_{e}S_{e}}e^{\sum_{\alpha}X_{\alpha}S_{\alpha}}\,,
\ee
where we have added the extra couplings $\lambda_{e,v}$ and $\mu_{e}$, which can be set to 1, and
\bes
&&K_{e,v}\,\equiv\,
|z_{e}^{v}|^{2}+|w_{e}^{v}|^{2}-\psi_{e}^{v}\ceta_{e}^{v}-\bpsi_{e}^{v}\eta_{e}^{v},\\
&&S_{e}\,\equiv\,
\bz_{s(e)} \bw_{t(e)} - \bw_{s(e)} \bz_{t(e)}+\bpsi_{s(e)}\bpsi_{t(e)} + \ceta_{s(e)}\ceta_{t(e)},\nn\\
&&S_{\alpha}\,\equiv\,
z_{s(\alpha)} w_{t(\alpha)} - w_{s(\alpha)} z_{t(\alpha)} + \psi_{s(\alpha)}\psi_{t(\alpha)} + \eta_{s(\alpha)}\eta_{t(\alpha)}\,.\nn
\ees
A quick calculation shows that all three types of terms are invariant under supersymmetry. They are $Q$-closed and $Q$ exact:
\be
QK_{e,v}=QS_{e}=QS_{\alpha}=0,
\qquad
\left|
\begin{array}{lll}
K_{e,v}&=Q \,\left(\psi_{e,v}\bw_{e,v}-\eta_{e,v}\bz_{e,v}\right)&\\
S_{e}&=Q\,\left(\overline{z}_{s(e)}\bpsi_{t(e)}+\overline{w}_{s(e)}\overline{\eta}_{t(e)}\right)&\\
S_{\alpha}&=Q\,\left(z_{s(\alpha)}\psi_{t(\alpha)}+w_{s(\alpha)}\eta_{t(\alpha)}\right)&
\end{array}
\right.
\ee 
Using Lemma \ref{Qexact}, the integral \eqref{eq:supersymmetry} does not depend on any of the coupling constants $\lambda$, $\mu$ or $X$:
$$
\f{\pp \cZ^{\C}(\Gamma)}{\pp \lambda_{e,v}}
=
\f{\pp \cZ^{\C}(\Gamma)}{\pp \mu_{e}}
=
\f{\pp \cZ^{\C}(\Gamma)}{\pp X_{\alpha}}
=0\,.
$$
For instance, we can set $\lambda=\mu=1$ as in the original integral and then set all $X_{\alpha}=0$ (or equivalently all $Y_{e}=0$). This shows that the path integral is equal to the case where the angle interactions are killed and we are left with a free system with decoupled edges. We can also send the couplings $\lambda_{e,v}$ to $+\infty$.  This is the idea of \emph{localization}: the integral localizes on $z=w=0$ and $\psi=\eta=0$ and has a trivial evaluation.

This opens an interesting direction for future investigation, on how this localization due to the supersymmetry constrains the value of the observables and correlation functions of both the Ising model and the spin network evaluation.

\subsection{Towards Non-trivial Coupled Supersymmetric Theories}\label{sec:nontrivialcoupled}

Of course, the above integrals representing the Ising partition function and the spin network generating function are Gaussian, and therefore localization is not really necessary since we know how to compute these integrals explicitly.  We can nevertheless push the logic further and propose non-linear extensions, beyond the quadratic action, which would still be supersymmetric and thus localizable. This provides a rather large class of non-Gaussian integrals, which we can now compute exactly using our new tools, coupling non-trivially the spin network evaluations and the 2D Ising models on a planar graph.

\smallskip

Since each terms of the action, $K_{e,v}$, $S_{e}$ and $S_{\alpha}$ are $Q$-closed and $Q$-exact, so are arbitrary powers of these terms. This means that we can add arbitrary powers of each of them to the action, keeping the integral invariant under the same supersymmetry as above and therefore localizable.
This leads to a whole class of explicitly computable integrals:
\be
\cZ^{\C}_{n,p,q}(\Gamma,\{\lambda_{e,v},\mu_{e},Y_e\})
\,=\,
\int_{\mathcal{S}}d\mu\,
e^{-\sum_{e,v}\lambda_{e,v}K_{e,v}-\sum_{e}\mu_{e}S_{e}+\sum_{\alpha}X_{\alpha}S_{\alpha}}
e^{-\sum_{e,v}\lambda_{e,v}^{(p)}K_{e,v}^p-\sum_{e}\mu_{e}^{(q)}S_{e}^q+\sum_{\alpha}X_{\alpha}^{(n)}S_{\alpha}^n}\,,
\ee
where $n,p,q$ are arbitrary integers, larger than or equal to 2, and the measure $d\mu$ is the product measure over all  half-edges $h$ of all the even and odd Grassmann variables $d\mu\,=\frac{1}{\pi^{4\#\Edges}}\prod_{h}dz_hd\overline{z}_h dw_h d\overline{w}_h d\psi_hd\overline{\psi}_h d\eta_hd\overline{\eta}_h$.

As before these new integrals are constant and do not depend on the specific couplings $\lambda_{e,v}^{(p)}$, $\mu_{e}^{(q)}$ and $X_{\alpha}^{(n)}$.
%
%
In the original Gaussian integral without the new higher order terms, the half-edge terms $K_{e,v}$ were the Gaussian measure factors, the edge terms $S_{e}$ were the anti-holomorphic part of the action, while the angle terms $S_{\alpha}$ were the holomorphic part of the action. In a sense, we were taking the scalar product between the edge action and the angle action, thus gluing the angles with the edges using the Gaussian measure. In this context, it does not seem useful to add non-linear terms to all three part of the action and we focus on adding higher order terms to, say, the angle action, thus considering:
\be
\cZ^{\C}_{n}(\Gamma,\{\lambda_{e,v},\mu_{e},Y_e\})
\,=\,
\int_{\mathcal{S}}d\mu\,
e^{-\sum_{e,v}\lambda_{e,v}K_{e,v}-\sum_{e}\mu_{e}S_{e}+\sum_{\alpha}X_{\alpha}S_{\alpha}}
e^{\sum_{\alpha}X_{\alpha}^{(n)}S_{\alpha}^n}\,,
\ee
where we can compute the arbitrary power $S_{\alpha}^n$ for $n\ge 2$ taking into account that the $\psi$'s and $\eta$'s are odd-Grassmannians:
\beq
S_{\alpha}^{n}
&=&
\left(z_{s(\alpha)} w_{t(\alpha)} - w_{s(\alpha)} z_{t(\alpha)} + \psi_{s(\alpha)}\psi_{t(\alpha)} + \eta_{s(\alpha)}\eta_{t(\alpha)}\right)^{n}
\nn\\
&=&
\left(z_{s(\alpha)} w_{t(\alpha)} - w_{s(\alpha)} z_{t(\alpha)}\right)^{n}
+ n\left(z_{s(\alpha)} w_{t(\alpha)} - w_{s(\alpha)} z_{t(\alpha)}\right)^{n-1}(\psi_{s(\alpha)}\psi_{t(\alpha)} + \eta_{s(\alpha)}\eta_{t(\alpha)}) \nn\\
&&+\f{n(n-1)}2\left(z_{s(\alpha)} w_{t(\alpha)} - w_{s(\alpha)} z_{t(\alpha)}\right)^{n-2}(\psi_{s(\alpha)}\psi_{t(\alpha)}\eta_{s(\alpha)}\eta_{t(\alpha)})\,.
\eeq
The first term corrects the weights of the generating function of the spin network evaluation. This will modify its stationary points, as explored in Section \ref{coherent}, and thus the geometrical background. The  second term then produces a geometry-dependent coupling for the Ising models, while the third term creates a coupling between the two Ising models.

Let us focus on the quartic case, $n=2$. Then the  coupling between Ising models is not geometry-dependent. We can write this integral explicitly:
\beq
\cZ^{\C}_{2}(\Gamma,\{\lambda_{e,v},\mu_{e},Y_e\})
&\,=\,
\int_{\mathcal{S}}d\mu\,&
e^{-\sum_{e,v}\lambda_{e,v}K_{e,v}-\sum_{e}\mu_{e}S_{e}+\sum_{\alpha}X_{\alpha}S_{\alpha}}
\\
&& e^{\sum_{\alpha}X_{\alpha}^{(2)}\big{[}
\left(z_{s(\alpha)} w_{t(\alpha)} - w_{s(\alpha)} z_{t(\alpha)}\right)^{2}
+ 2\left(z_{s(\alpha)} w_{t(\alpha)} - w_{s(\alpha)} z_{t(\alpha)}\right)(\psi_{s(\alpha)}\psi_{t(\alpha)} + \eta_{s(\alpha)}\eta_{t(\alpha)}) 
+\psi_{s(\alpha)}\psi_{t(\alpha)}\eta_{s(\alpha)}\eta_{t(\alpha)}
\big{]}}
\,.\nn
\eeq
Despite the new coupling $X_{\alpha}^{(2)}$, we know that this integral still evaluates to 1 as before. It would be enlightening to investigate how the generating function for spin networks is modified and what would be the physical effect of the new couplings for the Ising models, which seem to allow us to study the Ising models with dynamical, geometry-dependent couplings.

\section{The Interplay between Spin Networks and Ising Correlations}\label{sec:isingcorrelations}

In this section, we investigate the link induced by the duality developed above between the correlations of the 2D Ising model and the probability distribution defined by the spin network evaluations.

\subsection{Coherent Spin Network States, Geometric Interpretation and Criticality}
\label{state}
\label{stationary}

In the context of quantum gravity and quantum geometry, following the logic developed in \cite{bonzom}, coherent spin network states with good semi-classical properties can be interpreted and effectively seen as generating functions for spin networks in the spin basis. It is similar to the way coherent states for the harmonic oscillator define the exponential generating function for the eigenvectors of the number of quanta operator.

We introduce the wave-functions on the graph $\G$ depending on group elements $g_{e}\in\SU(2)$ living on the edges $e\in\Edges$:
\be\label{spin:coherent1}
\phi^{coh}_{\{X_{\alpha}\}}(g_{e})
\,\equiv\,
 \int_{\C^{4E}}\prod_{e,v}\f{e^{-\la z_{e}^{v}|z_{e}^{v}\ra\ }d^{4}z_{e}^{v}}{\pi^{2}}\,
e^{\sum_{e} \la z_{e}^{s}|g_{e}|z_{e}^{t}]}\,e^{\sum_{\alpha}X_{\alpha}[z_{s(\alpha)}|z_{t(\alpha)}\ra}\,,\ 
\ee
where the use the following convention for spinors $|z\ra$ and their duals $|z]$, as introduced in \cite{spinortwisted,spinor}:
$$
|z\ra =\mat{c}{z \\ w} \in \C^{2},\qquad
|z]=\mat{cc}{0 &1 \\-1 & 0}\mat{c}{\bz \\ \bw}=-\mat{c}{-\bw \\ \bz}\,\ {\rm and}  \ d^4z:=dzd\overline{z}dwd\overline{w}.
$$
This allows to encode both complex variables $z$ and $w$ in a single complex 2-vector $|z\ra$. The $\SU(2)$ group acts naturally on $\C^{2}$ via its fundamental representation, and the dual spinor transforms exactly the same way as the original one:
$$
g\,:\,|z\ra\,\mapsto\,
g \vartriangleright\,|z\ra
=g \,|z\ra
=\mat{cc}{\alpha & \beta \\ -\bar{\beta} &\bar{\alpha}}\,\mat{c}{z \\ w}\,,
\qquad
|g\,z]=g\,|z]\,.
$$
The scalar products between states and with their dual provide two bilinear forms invariant under the action of $\SU(2)$:
$$
\la z_{1}|z_{2}\ra = (\bz_{1}z_{2}-\bw_{1}w_{2})\,,
\qquad
[z_{1}|z_{2}\ra = (z_{1}w_{2}-z_{2}w_{1})\,.
$$
This implies that the wave-functions $\phi^{coh}_{\{X_{\alpha}\}}(g_{e})$ introduced above are gauge-invariant as expected under $\SU(2)$ transformations acting at the vertices of the graph:
$$
\forall \{h_{v}\}_{v\in\Ver}\in\SU(2)^{\#\Ver}\,,\quad
\phi^{coh}_{\{X_{\alpha}\}}(g_{e})\,=\,
\phi^{coh}_{\{X_{\alpha}\}}(h_{s(e)}^{-1}g_{e}h_{t(e)})
\,.
$$

This set of coherent states is specially interesting for our perspective because their evaluation at the identity $g_{e}=\id,\,\forall e$, or equivalently its projection on the flat connection state \eqref{FlatState} gives the generating function of spin network evaluations:
$$
\phi^{coh}_{\{X_{\alpha}\}}(\id)
=\la\Omega|\phi^{coh}_{\{X_{\alpha}\}}\ra
=Z^{Spin}(\{Y_{e}\})
\,=\,
\sum_{\{j_e\}} \sqrt{\frac{\prod_v (J_v+1)!}{\prod_{ev} (J_v-2j_e)!}} s(\{j_e\}) \prod_e Y_e^{2j_e}\,,
$$
still assuming the matching $X_{\alpha}=\sqrt{Y_{s(\alpha)}Y_{t(\alpha)}}$.
One can further get the whole decomposition of the coherent states $\phi^{coh}_{\{X_{\alpha}\}}$ in the spin network basis $ \varphi_{\{j_e\}}^{\Gamma} $ defined in \eqref{spinbasis}:
\be
\phi^{coh}_{\{X_{\alpha}\}}
\,=\,
\sum_{\{j_e\}} \sqrt{\frac{\prod_v (J_v+1)!}{\prod_{ev} (J_v-2j_e)!}}\,\prod_e Y_e^{2j_e}\,
\varphi_{\{j_e\}}^{\Gamma} \,.
\ee

For here, we can consider two types of probability distribution and averages  for the spins $\{j_{e}\}$. We can look at the averages defined by the amplitude $\phi^{coh}_{\{X_{\alpha}\}}(\id)$, that is the averages weighted by the spin network evaluations:
\be
\la j_{e_{1}}^{n_{1}}..j_{e_{p}}^{n_{p}}\ra
\,\equiv\,
\f1{Z^{Spin}(\{Y_{e}\})}\,
\sum_{\{j_e\}} j_{e_{1}}^{n_{1}}..j_{e_{p}}^{n_{p}}\,\sqrt{\frac{\prod_v (J_v+1)!}{\prod_{ev} (J_v-2j_e)!}} s(\{j_e\}) \prod_e Y_e^{2j_e}\,.
\ee
This is not strictly speaking a true mean value and a probability distribution since the spin network evaluation $s(\{j_e\})$ can be negative. Rather it should be interpreted as an operator insertion in the projection over the flat state:
\be
\la j_{e_{1}}^{n_{1}}..j_{e_{p}}^{n_{p}}\ra
\,=\,
\f{\la\Omega|\, \hat{\j}_{e_{1}}^{n_{1}}\,..\,\hat{\j}_{e_{p}}^{n_{p}}\,|\phi^{coh}_{\{X_{\alpha}\}}\ra}{\la\Omega|\phi^{coh}_{\{X_{\alpha}\}}\ra}\,,
\ee
where the operators $\hat{\j}$ acts by multiplication by the spin $j$ in the spin network basis.
These averages will be directly related to the Ising correlations, as we will investigate in more details in the next section. They are functions of the couplings, $Y_e$ which can be thought of as a background geometry in which quantum fluctuations of the spins $j_{e_i}$ take place.

On the other hand, we can consider the true expectation values for the spin operators in the quantum state $|\phi^{coh}_{\{X_{\alpha}\}}\ra$:
\be
\la j_{e_{1}}^{n_{1}}..j_{e_{p}}^{n_{p}}\ra_{coh}
\,\equiv\,
\f{\la\phi^{coh}_{\{X_{\alpha}\}}|\, \hat{\j}_{e_{1}}^{n_{1}}\,..\,\hat{\j}_{e_{p}}^{n_{p}}\,|\phi^{coh}_{\{X_{\alpha}\}}\ra}
{\la\phi^{coh}_{\{X_{\alpha}\}}|\phi^{coh}_{\{X_{\alpha}\}}\ra}
\,=\,
\f{\sum_{\{j_e\}}j_{e_{1}}^{n_{1}}..j_{e_{p}}^{n_{p}}\, {\frac{\prod_v (J_v+1)!}{\prod_{ev} (J_v-2j_e)!}} \prod_e \f{Y_e^{4j_e}}{(2j_{e}+1)}}
{\sum_{\{j_e\}} {\frac{\prod_v (J_v+1)!}{\prod_{ev} (J_v-2j_e)!}} \prod_e \f{Y_e^{4j_e}}{(2j_{e}+1)}}\,.
\ee
In practice, this is the expectation of the product of half-integers $j_{e_{1}}^{n_{1}}\cdots j_{e_{p}}^{n_{p}}$ with respect to the following probability distribution:
\be
\label{rho}
\hat{\rho}(\{j_e\}) = \frac{\rho(\{j_e\})}{\sum_{\{j_e\}} \rho(\{j_e\})},\qquad \text{with}\qquad
\rho(\{j_e\})
\,=\,
\prod_v{\frac{ (J_v+1)!}{\prod_{e\ni v} (J_v-2j_e)!}} \prod_e \f{Y_e^{4j_e}}{(2j_{e}+1)}\,.
\ee
Notice that it does not feature the spin network evaluation. Let us investigate in more details the shape of this probability profile. It will shed light on the properties of the chosen weights for the generating function of spin network evaluations and on the geometrical interpretation of these coherent states.
To this purpose, we will compute a limit by letting $j^{(n)}:\Edges\to \N$ be a sequence of colorings whose spins go to infinity linearly in $n$, i.e. such that there exists a real valued coloring $j^{\infty}:\Edges\to \N$ such that $\lim_{n\to \infty} \frac{j^{(n)}(e)}{nj^{(\infty)}(e)}=1,\ \forall e\in \Edges$. For simplicity,  in the following computations we will suppress the indices $^{(n)}$ and $^{(\infty)}$ and study the dominant regime for these weights for large values of $n$
%
using the Stirling approximation for the factorials:
$$
\rho(\{j_e\})
\,\approx\,
\prod_e \f{1}{(2j_{e}+1)}\,
\prod_{v} \f{\sqrt{J_v+1}}{2\pi\prod_{e\ni v}\sqrt{J_v-2j_e}}\,
\prod_e Y_e^{4j_e}
\prod_{v} \f{(J_v+1)^{(J_v+1)}}{\prod_{e\ni v}(J_v-2j_e)^{(J_v-2j_e)}}\,,
$$
where we have separated the exponential  contribution from the polynomial pre-factors and by $\approx$ we mean that the limit of the ratio of the two sides when $n$ tends to $\infty$ is $1$. Assuming that the exponential factor controls the main behavior of this distribution, similarly to a Poisson distribution, we focus on the exponent:
\be
\Phi=\sum_{e}4j_{e}\ln Y_{e}+ \sum_{v}\left[(J_v+1)\ln (J_v+1) -\sum_{e\ni v}(J_v-2j_e)\ln(J_v-2j_e)\right]\,.
\ee
Let us do a stationary phase approximation, leading to a Gaussian approximation for the distribution $\rho$ peaked around maxima given by the stationary points of $\Phi$. Still assuming that the $j$'s are large and neglecting sub-leading contributions, we get:
\be\label{eq:criticalY}
\f{\pp\Phi}{\pp j_{e}}=0
\Leftrightarrow
\qquad
Y_{e}^{4}\approx
\f{(J_{s(e)}-2j_{e_{1}})(J_{s(e)}-2j_{e_{2}})}{J_{s(e)}(J_{s(e)}-2j_{e})}\,
\f{(J_{t(e)}-2j_{\te_{1}})(J_{t(e)}-2j_{\te_{2}})}{J_{t(e)}(J_{t(e)}-2j_{e})}\,,
\ee
where $e_{1},e_2$ are the edges touching $e$ in $s(e)$ and $\te_{1},\te_{2}$ are those touching it in $t(e)$, as depicted in Figure \ref{fig:triangles}.
\begin{figure}[h!]
\raisebox{-0.4cm}{\put(-1,8){$\te_2$}\put(-1,65){$\te_1$}\put(152,60){$e_2$}\put(85,40){$e$}\put(85,10){$\ell_e$}\put(120,15){$\ell_{e_{1}}$}\put(120,57){$\ell_{e_{2}}$}\put(150,5){$e_{1}$}\put(40,15){$\ell_{\te_{1}}$}\put(40,60){$\ell_{\te_{2}}$}\put(138,37){$\gamma^{s(e)}_{e}$}\put(15,37){$\gamma^{t(e)}_{e}$} \includegraphics[width=6cm]{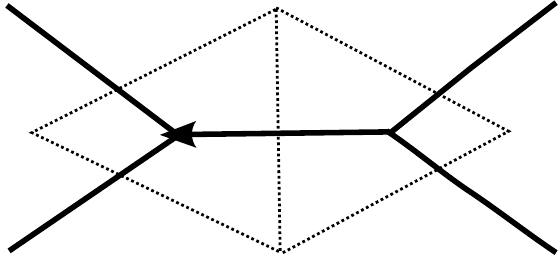}}\ .
\caption{The notation used on dual triangles to $\Gamma$.} \label{fig:triangles}\end{figure}
At this point let us do a little triangle geometry. Assuming that we have three edges of length $l$, $l_{1}$ and $l_{2}$, with total half-perimeter $L$, we have Heron formula for the area, a similar formula for the inner circle radius and expressions for the sine and cosine of the opposite angle $\gamma$ to $l$: 
$$
A^{2}=L(L-l)(L-l_{1})(L-l_{2}),
\quad
r^{2}=\f{(L-l)(L-l_{1})(L-l_{2})}{L},
\quad
1+\cos\gamma=\f{2L(L-l)}{l_{1}l_{2}}
\quad
\sin\gamma=\f{2A}{l_{1}l_{2}}\,.
$$
Combining all this, in order to describe the stationary point, it seems natural to define dual triangles to the (planar) graph $\Gamma$ by considering triangles around every vertex with edge lengths given by $l_{e}=2j_{e}$ (cf. Figure \ref{fig:triangles}). Then the right hand side of \eqref{eq:criticalY} can be re-interpreted in terms of the opposite angles to the edge dual to $e$ in both triangles dual to the source and target vertices $s(e)$ and $t(e)$. This gives the critical values of the couplings: 
\be
\label{Ygeom}
Y_{e}^{2}=
\left(\f{\sin\gamma_{e}^{s(e)}}{1+\cos\gamma_{e}^{s(e)}}\right)\,
\left(\f{\sin\gamma_{e}^{t(e)}}{1+\cos\gamma_{e}^{t(e)}}\right)
=
\tan\f{\gamma_{e}^{s(e)}}{2}\,
\tan\f{\gamma_{e}^{t(e)}}{2}
\,.
\ee
They correspond to a Euclidean structure on the plane obtained by gluing Euclidean triangles dual to the edges of $\Gamma$ whose edge lengths are $2j_{e}$, up to an arbitrary global rescaling.

\smallskip

We will not perform here the details of the resulting stationary phase approximation, but we will focus on the meaning of these ``geometric'' couplings from the point of view of the Ising model. In the special case of a honeycomb lattice (regular hexagonal), all the angles are $\gamma=\f\pi6$ and  this value of the geometric coupling matches exactly the value of the critical value of the Ising model, $Y^{c}=\f1{\sqrt{3}}$.

This actually holds for the much larger class of isoradial graphs, i.e. such that all faces are inscribable in a circle of a given radius and whose center is in the face. Let us consider such an isoradial embedding of our planar graph $\Gamma$. The dual vertex of a face, i.e the corresponding point of the triangulation, is chosen as the center of its circumscribed circle, as illustrated on Figure \ref{fig:isoradial}.
\begin{figure}
\includegraphics[width=10cm]{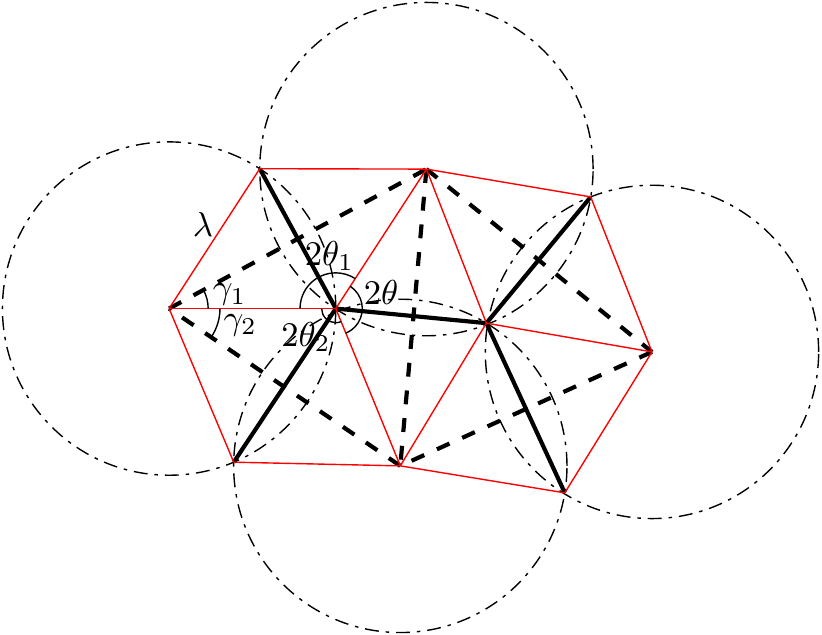}
\caption{In an isoradial graph, the half-rhombus angle $\theta_e$ is equal to both $\gamma^{s(e)}_e$ and $\gamma^{t(e)}_e$. Indeed in the above drawing it holds $\gamma_e=\gamma_1+\gamma_2$, the red triangles are isosceles and the dotted lines are perpendicular to the edges of $\Gamma$.}\label{fig:isoradial}
\end{figure}
 Each edge of the graph $\Gamma$ is at the intersection of two circles, while the dual triangulation edge links the centers of those two circles and is orthogonal to the graph edge. In this special setting, it turns out that the half-rhombus angle $\theta_{e}$ associated to the graph edge $e$ (or equivalently its dual triangulation edge) is equal to both opposite triangle angles $\gamma_{e}=\gamma_{e}^{s(e)}=\gamma_{e}^{t(e)}$ (which are then equal to each other). 
Then the critical Ising couplings read (see \cite{Ke},\cite{isoradial} and the survey \cite{BoutillierDeTiliereSurvey} for additional material):
\be
e^{2y_{e}^{c}}=\f{1+\sin\theta_{e}}{\cos\theta_{e}}
\qquad
\Longrightarrow\quad
Y_{e}^{c}=\tanh y_{e}^{c}=\f{e^{2y_{e}^{c}}-1}{e^{2y_{e}^{c}}+1}
=\f{1+\sin\theta_{e}-\cos\theta_{e}}{1+\sin\theta_{e}+\cos\theta_{e}}
=\tan\f{\theta_{e}}{2}
=Y_{e}^{geom}\,.
\ee

This provides a neat geometrical interpretation of both the generating function of spin network evaluations and of the critical regime of the 2D Ising model, at least in the context of isoradial graphs and their dual triangulation. It would be very interesting to investigate if this correspondence between stationary points in the spin networks and critical couplings of the Ising model is more general than this setting and can be generalized by our formula \eqref{Ygeom}, for instance for Delaunay triangulations (the graph then being its dual Vorono\"\i{} diagram).

\smallskip

This result further leads to a few very interesting questions:

\begin{enumerate}

\item What is the behavior of the generating function of spin network evaluations when the edge couplings do not admit stationary points, and the stationary approximation fails? Normally we would expect an exponential behavior. Then, is it related to some exponential decay (e.g. of the 2-point function) in the non-critical Ising model in low or high temperature regimes?

\item As we pointed out above, the conditions \eqref{Ygeom} relating the dual triangulation to the couplings are scale-independent. They only depend on the angles and we can rescale all the edge lengths by an arbitrary factor without affecting the couplings $Y_{e}$. Thus we actually have a whole \emph{line} of stationary points by rescaling arbitrarily the spins $j_{e}$'s. 
In \cite{bonzom} in the case of two-vertex graphs, it was found that this line of stationary points is related to the singularities of the generating function of spin network evaluations, i.e. to the set of zeroes of $P_\G = \sum_{c\subset \Gamma}\prod_{e\in c}Y_{e}$. Can we expect also such a relation here? This requires a stationary phase evaluation of the generating function $Z^{spin}$. Note that the zeroes of $P_\G$ are those of $Z^{Ising}$. Since $P_\G$ is polynomial (in $Y_e$) for all finite graphs $\G$, the zeroes completely determine the partition function. In the context of statistical mechanics and in particular for the Ising model, they are called Fisher's zeroes and their distribution in the thermodynamic limit determines the critical properties \cite{Fisher}. Could it then be that critical properties of the Ising model can be extracted from the stationary phase approximation of the spin network generating function?


\item We could modify the generating function of spin network evaluations in order to get a single non-scalable stationary point, which would depend on the edge coupling $Y_{e}$. This is actually the behavior of coherent spin network states introduced in \cite{coherent}  in the context of loop quantum gravity. As we will describe in section \ref{coherent} below, these define a new generating function for spin network evaluations with a slightly different statistical weight depending on the spins $j_{e}$. Although these other coherent states have a very nice geometric interpretation as semi-classical geometries, what would be their counter-part in terms of Ising model?

\item Finally there is a clash with the correspondence with the Ising model. Indeed here the couplings $Y_{e}$ are allowed to run over all real (positive) values, while the correspondence with the Ising model requires bounded values for the $Y$'s given by $Y_{e}=\tanh y_{e}$. This will be addressed in  section \ref{O(n)} below by generalizing the Ising model to O$(n)$ models.

\end{enumerate}

\subsection{Mapping Ising Correlations to Spin Averages}\label{sec:correlations}

The relationship between the partition function of the 2D Ising model and the generating function of spin networks induces a correspondence between the observables of both models. Those observables are on the one hand the Ising spin correlation functions,
\begin{equation}
\langle \sigma_{v_1}\,\sigma_{v_2} \dotsm \sigma_{v_n} \rangle = \frac1{Z^{Ising}} \sum_\sigma \sigma_{v_1}\,\sigma_{v_2} \dotsm \sigma_{v_n}\ e^{\sum_e y_e \sigma_s(e) \sigma_t(e)},
\end{equation}
and on the other hand the expectations of products of colors,
\begin{equation}
\langle j_{e_1}^{n_1} j_{e_2}^{n_2} \dotsm j_{e_k}^{n_k} \rangle
=
\frac1{Z^{Spin}}
\sum_{col} j_{e_1}^{n_1} j_{e_2}^{n_2} \dotsm j_{e_k}^{n_k}
\,s(\G,\{j_{e}\})\,
\prod_e (\tanh y_e)^{2j_e},
\end{equation}
where $2j_e\in\mathbb{N}$ is the color of the edge $e$. The correspondence is obtained by taking derivatives of the fundamental equality,
\begin{equation}
Z^{Spin}\, (Z^{Ising})^2 = 4^{\#\Ver}\,\prod_e \cosh^2 y_e,
\qquad
\ln Z^{Spin} +2 \ln Z^{Ising} =\sum_{e} 2\ln\cosh y_{e} + 2(\#\Ver)\ln 2\,.
\end{equation}
The two subtleties are the non-zero term on the right hand side, but it will disappear upon differentiating twice with respect two different edge couplings $y_{e}$, and the dependence of the generating function $Z^{Spin}$ on $\tanh y_{e}$ instead of simply $y_{e}$ like the Ising partition function, but this is easily accounted for by:
$$
Y=\tanh y \quad\Rightarrow\quad
\pp_{Y}f=\cosh^{2}y\pp_{y} f\,.
$$

\smallskip

Taking a first derivative with respect to a given edge coupling $y_e$ gives:
\begin{equation}
\label{FirstDerivative}
\frac1{2\,\cosh^2 y_e} \partial_{\tanh y_e} \ln Z^{spin} = \tanh y_e - \partial_{y_e} \ln Z^{Ising}.
\end{equation}
This is easily interpreted by introducing the nearest-neighbor correlations, i.e. correlations between the two Ising spins incident to the edge $e$,
\begin{equation}\label{eq:isingcorr}
g_e\equiv \langle \sigma_{s(e)} \sigma_{t(e)} \rangle = \partial_{y_e} \ln Z^{Ising},
\end{equation}
and the expectation of the color on the edge $e$,
\begin{equation} \label{DefMeanSpin}
\langle 2j_e\rangle = \tanh y_e\ \partial_{\tanh y_e} \ln Z^{spin}.
\end{equation}
Then equation \eqref{FirstDerivative} simply relates $g_e$ to $\langle j_e\rangle$,
\begin{equation} \label{MeanSpin}
\langle j_e \rangle = \sinh y_e \bigl(\sinh y_e - \cosh y_e\,g_e\bigr),
\qquad
\langle \sigma_{s(e)} \sigma_{t(e)} \rangle
\,=\,
\tanh y_{e} - \f1{\sinh 2 y_{e}}\,\langle 2j_e\rangle\,.
\end{equation}
providing an expression for the mean color of an edge at fixed couplings, in terms of the spin-spin correlation of this edge in the Ising model.

We then differentiate successively with respect to different edge couplings $y_{e}$ along a path of edges between two vertices in order to obtain longer range Ising correlations. For instance, let us start with three vertices $v_{0,1,2}$ linked successively by the edges $e_{1}$ and $e_{2}$. We differentiate with respect to $y_{1}$ and $y_{2}$ and get:
\be
\partial_{y_e} \partial_{y_f} \left(\ln(Z^{Ising})\right)=\frac{\partial_{y_e}\partial_{y_f}Z^{Ising}}{Z^{Ising}}-\frac{\partial_{y_e}Z^{Ising}\partial_{y_f}Z^{Ising}}{(Z^{Ising})^2}=\langle \sigma_{0} \sigma_{2} \rangle -\langle \sigma_{0} \sigma_{1} \rangle \langle \sigma_{1} \sigma_{2} \rangle 
\ee
\be
Y_eY_f\partial_{Y_e} \partial_{Y_f} \left(\ln(Z^{spin})\right)=\frac{Y_eY_f\partial_{Y_e}\partial_{Y_f}Z^{spin}}{Z^{spin}}-\frac{Y_e\partial_{Y_e}Z^{spin}Y_f\partial_{Y_f}Z^{spin}}{(Z^{spin})^2}=\langle (2j_{1})( 2j_{2})\rangle-\langle 2j_{1}\rangle\langle 2j_{2}\rangle
\ee

\be
\langle \sigma_{0} \sigma_{2} \rangle -\langle \sigma_{0} \sigma_{1} \rangle \langle \sigma_{1} \sigma_{2} \rangle 
\,=\,
\f{-2}{\sinh 2y_{1}\sinh 2y_{2}}\,\Big{(}
\langle (2j_{1})( 2j_{2})\rangle-\langle 2j_{1}\rangle\langle 2j_{2}\rangle
\Big{)}\,.
\ee
More generally, we consider two vertices on the graph $\G$,  that is an initial vertex $v_{0}=v$ and a final vertex $v_{n}=w$ linked by a path $\cP$ consisting in $n$ edges, $e_{1}$ to $e_{n}$.
We consider cuts of this path: the cut $P$ of the path $\cP$ is defined by  $p$ intermediate vertices on $\cP$ numbered $v^{i}_{P}$, $i=1..(p-1)$, and divides the path $\cP$ into $(p+1)$ smaller paths enumerated as $\cP_{P}^{I}$ with $I=1..(p+1)$.
Differentiating the logarithm of the partition function and generating function  with respect to  the $n$ variables $y_{1}$ to $y_{n}$ yields a sum over all such cuts $P$ of the path $\cP$:
\be
\langle \sigma_{v} \sigma_{w} \rangle^{(\cP)}_{c}
\,=\,
\f{-2^{n-1}}{\prod_{e\in\cP}\sinh(2j_{e})}\,\langle \prod_{e\in\cP}(2j_{e})\rangle^{(\cP)}_{c}\,,
\ee
where we define the ``connected'' correlations as:
\be
\langle \sigma_{v} \sigma_{w} \rangle^{(\cP)}_{c}
\,\equiv\,
\sum_{P\,|\,\cP} (-1)^{p} \langle \sigma_{v}\sigma_{v^{1}_{P}} \rangle
\,..\,
\langle \sigma_{v^{p}_{P}} \sigma_{w} \rangle\,,
\qquad
\langle \prod_{e\in\cP}(2j_{e})\rangle^{(\cP)}_{c}
\,\equiv\,
\sum_{P\,|\,\cP} (-1)^{p}\,\prod_{I=1}^{p+1}\langle \prod_{e\in\cP_{P}^{I}}(2j_{e})\rangle\,.
\ee

\subsection{Distribution of the Edge Color}\label{sec:distributionspin}

Now we restrict our attention to the observables on a single edge, namely $\langle (2j_e)^n\rangle$, for which the correspondence with the Ising model leads to explicit expressions.

\begin{theo}\label{teo:genserexpect}
The exponential generating function of the spin averages $\langle (2j_e)^n\rangle$ is
\begin{equation}
j_e(t) = \sum_{n\geq0} \langle (2j_e)^n\rangle \frac{t^n}{n!} = \frac1{\left(1-\langle j_e\rangle (e^t-1)\right)^2}.
\end{equation}
It can be interpreted as the moment generating function $j_e(t) = \sum_{n\geq0} P(2j_e=n)\,e^{nt}$ for the following distribution,
\begin{equation}
P(2j_e=n) = \frac{n+1}{(1+\langle j_e\rangle)^2}\,\left(\frac{\langle j_e\rangle}{1+\langle j_e\rangle}\right)^n,
\end{equation}
where $\langle j_e\rangle$ is given in \eqref{MeanSpin}.
\end{theo}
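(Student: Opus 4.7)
The plan is to package all the moments $\langle (2j_e)^n\rangle$ into one formal identity via the Euler operator $Y_e\partial_{Y_e}$, reduce the resulting quantity to a simple rational function of $e^t$ using Westbury's formula, and finally read off the probability distribution by an elementary series expansion.

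First, since $(Y_e\partial_{Y_e})^n Y_e^{2j_e}=(2j_e)^n Y_e^{2j_e}$, one has
\[
\langle (2j_e)^n\rangle \;=\; \frac{(Y_e\partial_{Y_e})^n Z^{Spin}}{Z^{Spin}}.
\]
Summing over $n$ with weight $t^n/n!$ yields $j_e(t)=e^{tY_e\partial_{Y_e}}Z^{Spin}/Z^{Spin}$, and because $e^{tY_e\partial_{Y_e}}$ generates the multiplicative rescaling $Y_e\mapsto e^tY_e$, one obtains
\[
j_e(t)=\frac{Z^{Spin}(\ldots,e^tY_e,\ldots)}{Z^{Spin}(\ldots,Y_e,\ldots)},
\]
with only the single variable $Y_e$ modified.

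Next, I apply Westbury's formula $Z^{Spin}=P_\Gamma^{-2}$. The key combinatorial observation is that $P_\Gamma=\sum_{c}\prod_{e'\in c}Y_{e'}$ is linear in each individual $Y_e$, because every curve $c$ uses any given edge at most once. Writing $P_\Gamma=A_e+Y_eB_e$ with $A_e$ and $B_e$ independent of $Y_e$, the substitution yields $P_\Gamma(e^tY_e)=P_\Gamma+(e^t-1)Y_eB_e$, whence
\[
j_e(t)=\left(1+(e^t-1)\,\frac{Y_eB_e}{P_\Gamma}\right)^{-2}.
\]
Differentiating $\ln Z^{Spin}=-2\ln P_\Gamma$ directly produces $\langle 2j_e\rangle=-2\,Y_eB_e/P_\Gamma$, so $Y_eB_e/P_\Gamma=-\langle j_e\rangle$; this is consistent with the expression \eqref{MeanSpin} already obtained from the Ising duality. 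Substituting in gives $j_e(t)=1/\bigl(1-\langle j_e\rangle(e^t-1)\bigr)^{2}$.

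Finally, to recognize this as a moment generating function I factor the denominator as $1-\langle j_e\rangle(e^t-1)=(1+\langle j_e\rangle)\bigl(1-\tfrac{\langle j_e\rangle}{1+\langle j_e\rangle}e^t\bigr)$ and expand via $(1-x)^{-2}=\sum_{n\ge0}(n+1)x^n$ with $x=\tfrac{\langle j_e\rangle}{1+\langle j_e\rangle}e^t$; the coefficient of $e^{nt}$ then reads off as the claimed $P(2j_e=n)$. I do not foresee a serious obstacle beyond spotting the Euler-operator trick and the linearity of $P_\Gamma$ in each edge variable; a genuine subtlety worth flagging is that $\langle j_e\rangle$ need not be non-negative (indeed $\langle j_e\rangle=-Y_eB_e/P_\Gamma$ has the opposite sign to the naive positivity regime), so the ``probability distribution'' should really be read as an equality of formal moment generating functions.
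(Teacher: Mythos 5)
Your proof is correct, and it takes a genuinely different route from the paper's. The paper works with ordinary derivatives $\partial_{\tanh y_e}^n$, converts the moments to falling factorials via Stirling numbers of the second kind, and then computes the cumulants $\kappa_n=\partial_{\tanh y_e}^n\ln Z^{Spin}$ by induction; the closed form $\kappa_n=2(n-1)!(\kappa_1/2)^n$ is obtained there by passing through the Ising duality and exploiting the identity $\partial_{y_e}g_e=1-g_e^2$ (a consequence of $\sigma_v^2=1$), so the Ising model is an essential computational ingredient. You instead package everything with the Euler operator $Y_e\partial_{Y_e}$, recognize $e^{tY_e\partial_{Y_e}}$ as the dilation $Y_e\mapsto e^tY_e$, and then invoke Westbury's formula $Z^{Spin}=P_\Gamma^{-2}$ together with the multilinearity of $P_\Gamma$ (each curve uses an edge at most once). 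This is shorter, stays entirely on the spin network side until the final identification of $\langle j_e\rangle$ with \eqref{MeanSpin}, and makes transparent both why the answer depends only on the single number $\langle j_e\rangle$ and why it would generalize to any generating function of the form $Q^{-k}$ with $Q$ multilinear in the edge variables. What the paper's longer route buys is the explicit dictionary with the Ising nearest-neighbor correlation $g_e$ and the structural role of $\partial_{y_e}g_e=1-g_e^2$, which is the theme of that section. Your closing caveat about $\langle j_e\rangle$ failing to be non-negative (so that $P(2j_e=n)$ is a formal distribution) matches the remark the authors make before their own proof; indeed your expression $\langle j_e\rangle=-Y_eB_e/P_\Gamma$ shows it is negative for small positive couplings, consistent with the sign of the leading integral evaluations.
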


Before proceeding to the proof, we emphasize that this is not strictly speaking a probability distribution, since $\langle j_e\rangle$ can be negative as we will see in the next section. This can be traced back to the fact that we are considering spin averages weighted by the spin network evaluations (which can be negative), which is not the expectation values of spin operators on coherent states, as underlined earlier in section \ref{state}.

\begin{proof}
Before beginning the proof {\it per se}, we want to emphasize that the simplicity of the result relies on fact that the derivatives of the Ising free energy $\ln Z^{Ising}$ with respect to a given coupling $y_e$ are all simple functions of the nearest-neighbor correlation. This is due to the fact that $\sigma_v^2 =1$ for all Ising spins. In particular, this implies $\partial_{y_e}^2 Z^{Ising}/Z^{Ising}=1$, so that the second derivative of the free energy, which also is the derivative of the nearest-neighbor correlation, reads
\begin{equation} \label{PartialG_e}
\partial_{y_e} g_e = \partial_{y_e}^2 \ln Z^{Ising} = \frac{\partial_{y_e}^2 Z^{Ising}}{Z^{Ising}} - \left(\frac{\partial_{y_e} Z^{Ising}}{Z^{Ising}} \right)^2 = 1-g_e^2.
\end{equation}

The first thing to do is to relate the expectation $\langle (2j_e)^n \rangle$ to the derivatives of the spin network free energy $\ln Z^{spin}$. We start with the standard expansion of $(2j)^n$ in terms of falling factorials,
\begin{equation}
\langle (2j_e)^n\rangle = \sum_{k=1}^n S(n,k) \langle (2j_e)^{\underline{n}} \rangle,
\end{equation}
where $x^{\underline{n}}= x(x-1)\dotsm (x-n+1)$ is the falling factorial and $S(n,k)$ are the Stirling numbers of the second kind. Notice that
\begin{equation}
\langle (2j_e)^{\underline{n}} \rangle =  \frac{\tanh^n y_e}{Z^{spin}}\,\partial_{\tanh y_e}^n Z^{spin}.
\end{equation}
The derivatives of $Z^{spin}$ can be related to those of $\ln Z^{spin}$ using Fa\`a Di Bruno's formula\footnote{This gives here
\begin{equation}
\partial_{y_e}^n \ln Z^{spin} = \sum_{\substack{\{\nu_p\geq0\}_{p\geq1} \\ \sum_p p\nu_p=n}} (-1)^{1+\sum \nu_p} \frac{n!\,(\sum_p \nu_p -1)!}{\prod_p \nu_p!\,p!^{\nu_p}} \prod_p \left(\frac{\partial_{y_e}^p Z^{spin}}{Z^{spin}}\right)^{\nu_p}.
\end{equation}}, but it is simpler to use generating functions. Defining the shorthand notations
\begin{equation}
\mu_n = (\partial_{\tanh y_e} Z^{spin})/Z^{spin}\quad \text{and } \kappa_n = \partial_{\tanh y_e}^n \ln Z^{spin},
\end{equation}
it is well-known that if
\begin{equation}
m(t) = \sum_{n\geq 0} \mu_n \frac{t^n}{n!},\quad \text{then }\kappa(t) = \sum_{n\geq 1} \kappa_n \frac{t^n}{n!} = \ln m(t).
\end{equation}
The strategy is therefore to first find $\kappa_n$ using the correspondence with the Ising model, then take the exponential of $\kappa(t)$ to get $m(t)$ and extract its coefficients $\mu_n$. We prove by induction that
\begin{equation}
\kappa_n = 2\,(n-1)!\,\left(\kappa_1/2\right)^n.
\end{equation}
This clearly holds true for $n=1$. In addition, we recall from the definition of $\kappa_n$ and the equations \eqref{DefMeanSpin} and \eqref{MeanSpin}
\begin{equation}
\kappa_1 = \langle 2j_e \rangle/\tanh y_e = 2\,\cosh^2 y_e\, (\tanh y_e - g_e).
\end{equation}
Then the following holds:
\begin{equation} \label{Kappa_{n+1}}
\kappa_{n+1} = \partial_{\tanh y_e} \kappa_n = 2 n!\,(\kappa_1/2)^{n-1} \partial_{\tanh y_e} (\kappa_1/2),
\end{equation}
where the last equality makes use of the induction hypothesis. We therefore have to evaluate $\kappa_2$,
\begin{equation}
\begin{aligned}
\kappa_2 &= \partial_{\tanh y_e} \kappa_1 = 2 \partial_{\tanh y_e}\left( \cosh^2 y_e (\tanh y_e - g_e)\right)\\
&= 4 \cosh^3 y_e \sinh y_e (\tanh y_e - g_e) + 2 \cosh^2 y_e (1-\cosh^2 y_e \partial_{y_e} g_e)\\
&= 2\,\cosh^4 y_e\,(\tanh y_e - g_e)^2 = 2 (\kappa_1/2)^2.
\end{aligned}
\end{equation}
(using $\partial_{\tanh y_e} = \cosh^2 y_e \partial_{y_e}$). From the second line to the third, we have used \eqref{PartialG_e} and some simple algebraic manipulations. Plugging this expression into \eqref{Kappa_{n+1}} proves the formula for $\kappa_n$.
The generating function $\kappa(t)$ is then
\begin{equation}
\kappa(t) = 2\sum_{n\geq 1} \frac1n\,\left(\frac{\kappa_1 t}{2}\right)^n = -2 \ln \left(1-\frac{\kappa_1 t}{2}\right),
\end{equation}
so that
\begin{equation}
m(t) = e^{\kappa(t)} = \frac1{\left(1-\frac{\kappa_1 t}{2}\right)^2}, \quad \text{and}\quad \mu_n = [t^n/n!]m(t) = (n+1)!\,(\kappa_1/2)^n.
\end{equation}
Up to a factor $\tanh^n y_e$, this is the expectation of $(2j_e)^{\underline{n}}$. We can now form the generating function of the expectations $\langle (2j_e)^n\rangle$,
\begin{equation}
\begin{aligned}
j(t) = \sum_{n\geq 0} \langle (2j_e)^n\rangle \frac{t^n}{n!} &= 1+\sum_{n\geq 1} \sum_{k=1}^n S(n,k) (k+1)! (\tanh y_e \kappa_1/2)^k\,\frac{t^n}{n!}\\
&= 1 + \sum_{k\geq 1} \Bigl(\sum_{n\geq k}^n S(n,k)\frac{t^n}{n!}\Bigr) (k+1)! (\tanh y_e \kappa_1/2)^k
\end{aligned}
\end{equation}
One recognizes the exponential generating function of the Stirling numbers of the second kind, $\sum_{n\geq k} S(n,k)t^n/n! = (e^t -1)^k/k!$. Therefore, together with $\langle 2j_e\rangle = \tanh y_e \kappa_1$,
\begin{equation}
j(t) = \sum_{k\geq 0} (k+1) \left(\langle j_e\rangle (e^t-1)\right)^k = \frac1{\left(1-\langle j_e\rangle (e^t-1)\right)^2}.
\end{equation}
In order to identify a discrete probability distribution, this expression has to be expanded onto powers of $e^t$, which is readily done,
\begin{equation}
j(t) = \frac1{(1+\langle j_e\rangle)^2} \sum_{n\geq0} (n+1)\,\left(\frac{\langle j_e\rangle}{1+\langle j_e\rangle}\right)^n\,e^{nt},
\end{equation}
and the coefficient of $e^{nt}$ is interpreted as the probability $P(2j_e=n)$.
\end{proof}

We can furthermore write the expectations as polynomials of order $n$ in the mean color $\langle j_e\rangle$,
\begin{equation}
\langle (2j_e)^n\rangle = \sum_{k=1}^n S(n,k)\,(k+1)!\,\langle j_e\rangle^k,
\end{equation}
or as
\begin{equation}
\begin{aligned}
\langle (2j_e)^n\rangle &= \sum_{k\geq0} P(2j_e=k) k^n\\
&= \frac1{(1+\langle j_e\rangle)^2} \left(\operatorname{Li}_{-n-1}\left(\frac{\langle j_e\rangle}{1+\langle j_e\rangle}\right) + \operatorname{Li}_{-n}\left(\frac{\langle j_e\rangle}{1+\langle j_e\rangle}\right)\right),
\end{aligned}
\end{equation}
where $\operatorname{Li}_n(z) = \sum_{k\geq1} k^{-n}z^k$ is the polylogarithm.

\subsection{$O(n)$ Models,  Critical Ising Model and Phase Diagram}
\label{O(n)}

Let us look at the potential critical behavior of the spin network generating function from the point of view of its duality with the 2D Ising model. For the sake of simplicity,  we restrict our attention to homogeneous couplings, $Y_e=Y,\ \forall e\in \Edges$. While exploring the range of all edge couplings is desirable from the point of view of spin networks and quantum geometry states, we see that it is not possible within the frame of the Ising model since $Y$ is restricted to be smaller than 1, $Y=\tanh y \leq 1$.  
To explore the regime $Y\geq 1$, we match the spin network generating function with the squared inverse partition function of the $O(1)$ model,
\begin{equation}
Z^{spin}(Y) = 1/(Z_{O(1)}(Y))^2.
\end{equation}
The $O(n)$ model, with $n$ an integer, is defined as follows. Some $n$-component spins $\vec{S}_i$ sit on the vertices of the graph and each equipped with a normalized measure $d\vec{S}$ such that $\int d\vec{S}\,S^\mu S^\nu = \delta^{\mu\nu}$. The partition function reads
\begin{equation}
Z_{O(n)}(Y) = \int \prod_v d\vec{S}_v\ \prod_e \bigl(1 + Y\,\vec{S}_{s(e)}\cdot \vec{S}_{t(e)}\bigr)
\end{equation}
By expanding the product of the edge weights and performing the integral, a formulation as a sum over loop configurations is obtained
\begin{equation}
Z_{O(n)}(Y) = \sum_{\gamma\in \mathcal{G}} n^{C(\gamma)} \, Y^{\#\Edges(\gamma)}
\end{equation}
where $C(\gamma)$ is the number of connected components of the loop configuration $\gamma$ and $\#\Edges(\gamma)$ the total number of edges it covers.

In the scaling limit, the $O(n)$ model gives rise for $-2<n\leq 2$ to the following the phase diagram (adapted from \cite{Dubail}),
\begin{equation*}
\begin{array}{c} \includegraphics[scale=.75]{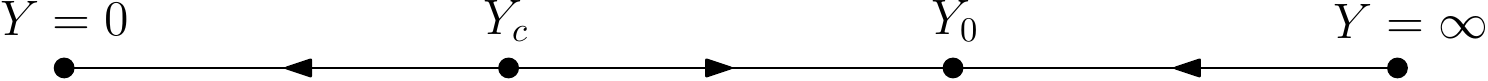} \end{array}
\end{equation*}
Introducing $\gamma\in [0,\pi)$ such that $n=2\cos \pi\gamma$, it is found that \cite{NienhuisPRL, NienhuisCG, JacobsenCG}
\begin{itemize}
\item[--] $Y=Y_c$ is critical, with central charge $c=1-6\frac{(g-1)^2}{g}$, for $g=1+\gamma$. Moreover $Y_c = 1/\sqrt{2+\sqrt{2-n}}$ on the hexagonal lattice. This is a second order phase transition with dilute loops.
\item[--] The region above $Y_c$, i.e. $Y>Y_c$, is called the dense phase and flows towards $Y_0$ whose location is $Y_0=1/\sqrt{2-\sqrt{2-n}}$ on the hexagonal lattice.
\item[--] At $Y=\infty$, one gets the fully packed $O(n)$ loop model (whose universality class is lattice dependent).
\end{itemize}
Our interest is the case $n=1$, $\gamma = 1/3$. Then, $Y=0$ is the infinite temperature Ising model, with no loops at all. $Y=Y_c = 1/\sqrt{3}$ is the Ising critical point with central charge $c=1/2$, while $Y_0=Y$ corresponds to the Ising model at zero temperature.

We recall the expression which was found for the average spin of an edge in the Ising case, i.e. $Y=\tanh y$,
\begin{equation*}
\langle j_e\rangle = \sinh^2 y - g_e(y) \sinh y \cosh y,
\end{equation*}
where $g_e(y)$ is the nearest-neighbor correlation at coupling $y$. Notice that the expectation of $j_e+1/2$ takes the simple form
\begin{equation} \label{j+1/2}
\langle j_e + \frac12 \rangle = \frac12 \bigl[\cosh 2y - g_e(y) \sinh 2y\bigr] = \frac14 \Bigl[e^{2y}(1-g_e(y)) + e^{-2y}(1+g_e(y))\Bigr]
\end{equation}
At small coupling $Y=\tanh y\sim \sinh y$, $g_e(y)$ goes to zero as a power law (at least like $y^2$ if there is no 2-cycle), and the expected spin behaves like
\begin{equation}
\langle j_e\rangle \sim Y(Y-g_e(Y)),
\end{equation}
which obviously goes to zero, and thus $\langle j_e+1/2\rangle$ goes to $1/2$.

When $Y$ goes to $Y_0=1$, i.e. $y\to\infty$, the correlation $g_e(y)$ goes to 1, and therefore the contribution $e^{-2y}(1+g_e(y))$ goes to $2e^{-2y}$. To evaluate the behavior of $1-g_e(y)$, one uses the low temperature expansion of the Ising model\footnote{It is an expansion with respect to the two configurations where all Ising spins are aligned, $Z^{Ising} = 2e^{y\#\Edges} (1 + C_2 e^{-4y} + \mathcal{O}(e^{-6y}))$. Corrections like $e^{-2py}$ come from flipping some Ising spins such that $p$ edges have opposite spins at their ends. There is no $p=0$ correction since that would require a single edge to have opposite spins on its vertices and this is impossible if $\G$ is bridgeless. Then, $C_2$ is the number of pairs of 2-cut edges (i.e. such that cutting 2 edges disconnects $\G$). Similarly, $\sum_{\{\sigma_v\}} \sigma_1 \sigma_2 e^{\sum_e y_e \sigma_{s(e)} \sigma_{t(e)}} = 2 e^{y\#\Edges} (1 + C_2' e^{-4y} + \mathcal{O}(e^{-6y}))$ with $C_2'\leq C_2$.}: it is easily checked that $1-g_e(y)$ decays at least like $e^{-4y}$. Therefore,
\begin{equation}
\langle j_e + \frac12 \rangle \sim \frac12\,e^{-2y} \sim \frac{1-Y}{4},
\end{equation}
which also goes to zero.

Furthermore, explicit formula exist for the the nearest-neighbor correlations of the Ising model. For instance, on the hexagonal lattice with isotropic coupling $Y$, one has (suppressing the edge dependence) \cite{Baxter399, BaxterBook}
\begin{equation}
g(y) = \coth \bigl(2L(y)\bigr)\,\bigl(a(k(y)) A(y) - b(k(y)) B(y)\bigr),
\end{equation}
with $L(y) = \frac14 \ln \frac{\cosh(3y)}{\cosh(y)}$ and $k(y)=1/(\sinh 2L(y)\,\sinh 2y)$, and
\begin{align}
a(k) &= \frac1{\pi}\bigl( (1+k) E(k_1) + (1-k)K(k_1)\bigr)\\
b(k) &= \frac2{\pi} (1-k) K(k_1).
\end{align}
Here $K(k), E(k)$ are respectively the complete elliptic integrals of the first and second kinds of modulus $k$, and $k_1 = 2\sqrt{k}/(1+k)$. Finally,
\begin{align}
A(y) &= F(\arctan \sinh 2L(y), 1-k^2(y)) \\
B(y) &= \frac1{1-k^2(y)}\Bigl( F(\arctan \sinh 2L(y), 1-k^2(y)) - E(\arctan \sinh 2L(y), 1-k^2(y))\Bigr),
\end{align}
where $F(\phi,k), E(\phi, k)$ are the incomplete elliptic integrals of the first and second kinds of modulus $k$.

\begin{figure}
\includegraphics[scale=.45]{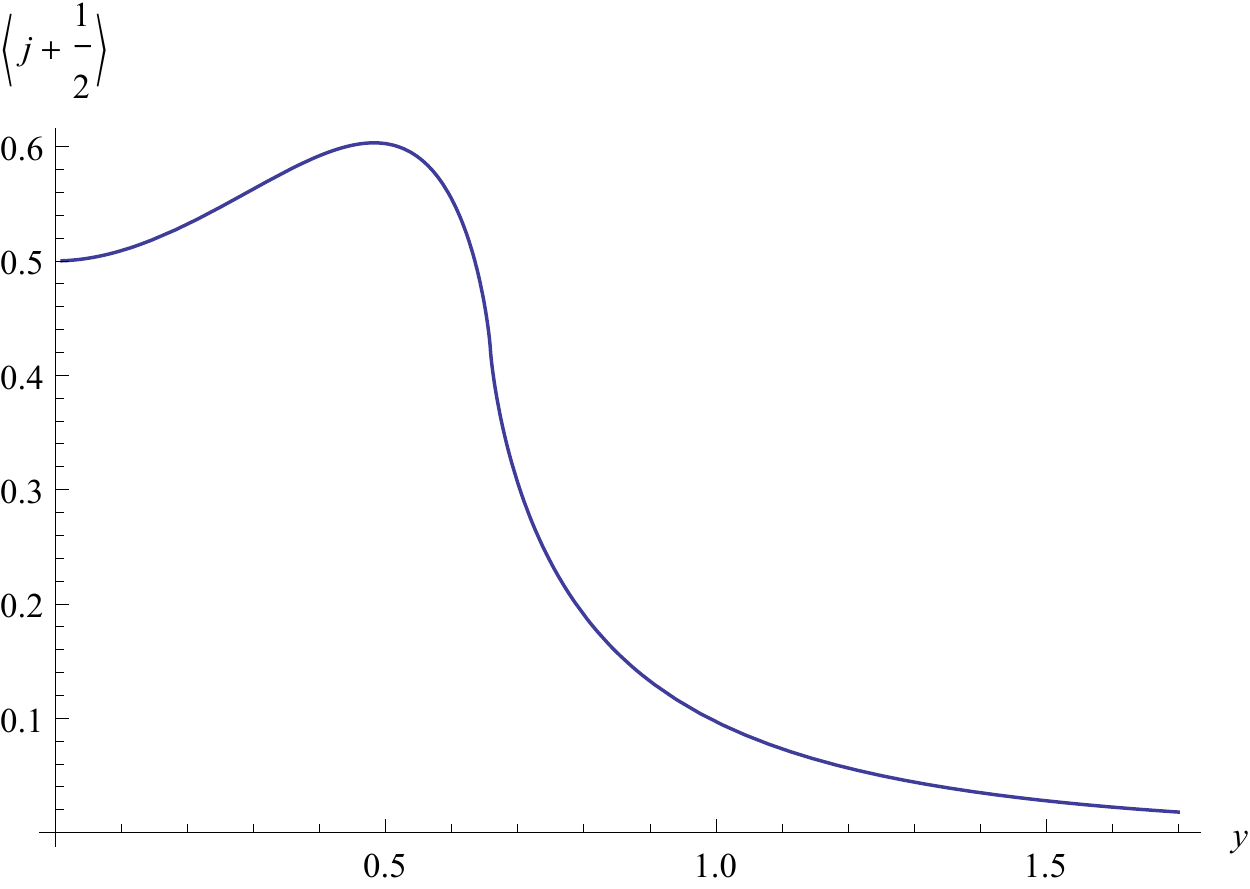} 
\includegraphics[scale=.45]{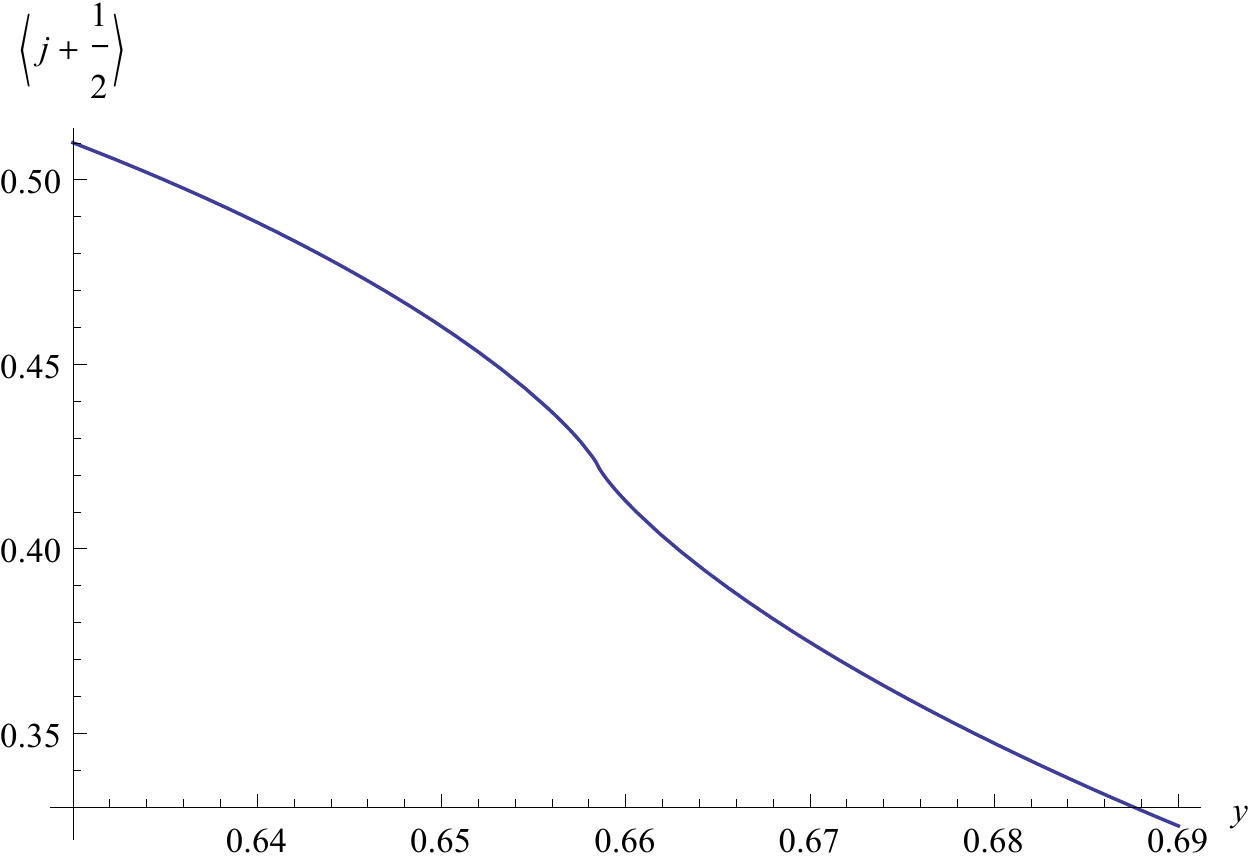}
\includegraphics[scale=.45]{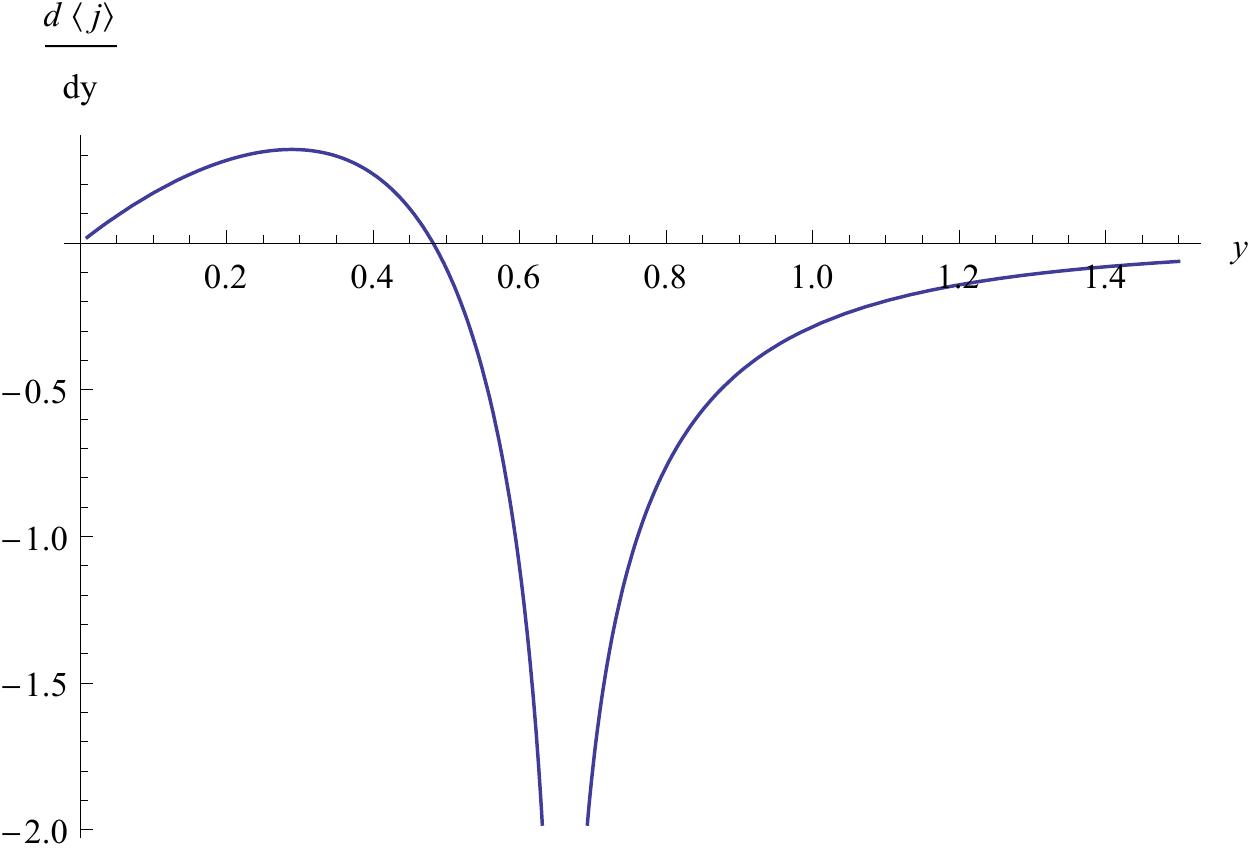}
\caption{\label{fig:PlotMeanJ} The two plots on the left are plots of $\langle j+1/2\rangle$ as a function of $y$ on the hexagonal lattice. The leftmost plot shows the behavior from $y=0$ to $1.7$ (it goes to 0 at infinity). The phase transition takes place at $y_c\sim .658$ around which the middle plot is centered. The rightmost plot shows the logarithmic singularity of the derivative of $\langle j\rangle$ at $y_c$.}
\end{figure}

The quantity $\langle 2j\rangle$ as well as its derivative with respect to the temperature can be rewritten in terms of standard thermodynamical quantities. Let us rescale $y$ by the inverse temperature $\beta$ explicitly. Then the internal energy $U(\beta y)$ is such that
\begin{equation}
g(\beta y) = - \frac{U(\beta y)}{y\ \#\Edges}.
\end{equation}
Further, the derivative of $\langle 2j\rangle$ reads
\begin{equation}
\partial_\beta \langle 2j\rangle = 2y \biggl(\sinh (2\beta y) - \cosh(2\beta y)\,\frac{\partial_\beta \ln Z^{Ising}}{y\ \#\Edges} \biggr) - \sinh (2\beta y) \frac{\partial^2_{\beta}\ln Z^{Ising}}{y\ \#\Edges}.
\end{equation}
and introducing the heat capacity $C(\beta y) = \beta^2 \partial^2_\beta \ln Z^{Ising}$, we get
\begin{equation}
\partial_\beta \langle 2j\rangle = 2y \biggl(\sinh (2\beta y) + \cosh(2\beta y)\,\frac{U(\beta y)}{y\ \#\Edges} \biggr) - \sinh (2\beta y) \frac{C(\beta y)}{\beta^2\,y\ \#\Edges}.
\end{equation}

Plots of $\langle j+1/2\rangle$ are given in the figure \ref{fig:PlotMeanJ}, using the above formula for $g(y)$. The Ising phase transition occurs at $\tanh y_c = 1/\sqrt{3}$ (hence $y_c\sim .658$) which corresponds to $k(y_c)=1$. The singularity comes from the $k\sim 1$ behavior of $a(k), b(k)$, as for instance
\begin{equation*}
b(k) \underset{k\sim 1}{\sim} \frac{1-k^2}{\pi} \ln \frac{16}{|1-k^2|}.
\end{equation*}
In particular \cite{BaxterBook}, one finds around $t = 1-T/T_c$ close to zero
\begin{equation}
\langle 2j\rangle \sim \text{Cst}\times t\,\ln |t| + \text{analytic},
\end{equation}
which gives a singularity of the type $\ln |t|$ for $\partial_\beta \langle 2j\rangle$.

To conclude this section, the duality of the spin network evaluations with the 2D Ising model allows to identify a phase transition in the behavior of their generating  function and corresponding coherent spin network state. This is clearly visible in the behavior of the average of the edge spin $\la j_{e}\ra$, which we computed from the 2-point correlation of the Ising model between nearest neighbors, which shows a discontinuity in its derivative. Let us point out that this average can be negative: as explained earlier in section \ref{state}, here $\la j_{e}\ra$ is not the expectation value of the spin operator on the coherent spin network state but the average of the spin $j_{e}$ weighted by the spin network evaluations (which can be negative and do not define a true probability distribution). In this context,  $\la j_{e}\ra$ can not actually be interpreted as the average edge length of a dual triangulation to our graph, as suggested earlier in section \ref{state} by the analysis of the stationary point approximation of the expectation value $\la j_{e}\ra_{coh}$. However, it does plays the same role as the nearest-neighbors correlations in the Ising model. It would nevertheless be enlightening to understand the geometrical interpretation of $\la j_{e}\ra$, in particular from the  perspective of this stationary point approximation of the spin network generating function and its geometrical interpretation; doing this could shed light on a potential relation to a phase transition and continuum limit of discrete geometry in quantum gravity.

To this purpose, we conclude this paper by the introduction of slightly different coherent spin network states, and thus of generating function for spin network evaluations, as defined in the context of quantum gravity \cite{coherent}, which admit a different (and maybe better behaved) stationary point with a clear geometric interpretation, and which could be relevant for future investigation of this phase transition.

\section{Vertex Integrals and Coherent Spin Network States}
\label{vertexint}

We introduce a reformulation of the Gaussian integral for the spin network generating function over half-edge complex variables as a Gaussian integral over complex variables living at the vertices of the graph $\G$. This will allow us to define new coherent spin network states with an improved behavior for the stationary point analysis. But it should also allow us in future investigations to tackle the case of graphs with nodes of arbitrary valency.

\subsection{The Generating Function as a Vertex Integral}

We use the spinor notations introduced in \cite{spinor} and defined earlier in section \ref{state}.
We recall the expression of the generating function for spin network evaluations in terms of the angle couplings $X_{\alpha}$ (Theorem \ref{teo:complexgaussian}):
\be
Z^{Spin}(\{X_{\alpha}\})
\,=\,
\int_{\C^{4\#\Edges}}\prod_{e,v}\f{e^{-\la z_{e}^{v}|z_{e}^{v}\ra}\,d^{4}z_{e}^{v}}{\pi^{2}}\,
e^{\sum_{e} \la z_{e}^{s(e)}|z_{e}^{t(e)}]}\,e^{\sum_{\alpha}X_{\alpha}[z_{s(\alpha)}|z_{t(\alpha)}\ra}
\ee
where we changed the sign in front of $ \sum_{e}\la z_{e}^{s(e)}|z_{e}^{t(e)}]$ by a linear symmetry sending $\overline{z}_e^v\to -\overline{z}^v_z$ and having a trivial Jacobian equal to $1$. 
This formula suggests considering angle couplings of the type $X_{\alpha}=[\zeta_{s(\alpha)}|\zeta_{t(\alpha)}\ra$ where $\zeta_{e}^{v}$ are arbitrary fixed spinors living on the half-edges similarly to the integration variables $z_{e}^{v}$. In that case, we can apply the following proposition, developed in the spinor formulation of loop quantum gravity:
\begin{prop}
Let $\Om =(1,\,0)$, $\zeta_i,z_i, i\in I$ be vectors in $\C^2$ and $I$ be a finite set; then the following holds:
\be
\int_{\C^{2}} \f{e^{-\la z|z\ra}\,d^{4}z}{\pi^{2}}\,
e^{\sum_{i\in I}[\zeta_{i}|z\ra\la \Om|z_{i}\ra+[\zeta_{i}|z][ \Om|z_{i}\ra}
\,=\,
e^{\sum_{i< j}[\zeta_{i}|\zeta_{j}\ra[z_{i}|z_{j}\ra}.
\ee
\end{prop}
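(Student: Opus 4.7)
The plan is a direct Gaussian integration, with the only conceptual step being the recognition that the source in the exponent factors nicely through the symplectic bracket. Because $[\cdot|\cdot\rangle$ is bilinear (in particular linear in its left argument) and $\langle\Omega|z_i\rangle$ and $[\Omega|z_i\rangle$ are scalars, one rewrites
\begin{equation*}
\sum_i [\zeta_i|z\rangle\,\langle\Omega|z_i\rangle \;=\; [\Xi|z\rangle,\qquad \sum_i [\zeta_i|z]\,[\Omega|z_i\rangle \;=\; [H|z],
\end{equation*}
where $\Xi:=\sum_i \langle\Omega|z_i\rangle\,\zeta_i$ and $H:=\sum_i [\Omega|z_i\rangle\,\zeta_i$ are $\C^2$-valued. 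With the standing convention $\Omega=(1,0)$ these two combinations are nothing but the first and second components of $z_i$ respectively, so $\Xi$ and $H$ package the data $\{z_i\}$ into two fixed spinors against which $z$ is paired.

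After this reformulation the integral is the elementary complex Gaussian
\begin{equation*}
\int_{\C^2}\frac{d^4z}{\pi^2}\,e^{-\langle z|z\rangle + [\Xi|z\rangle + [H|z]} .
\end{equation*}
Writing $|z\rangle=(a,b)^T$ and expanding each source term in components, one reads off the coefficients of $a,\bar a,b,\bar b$ and applies the scalar identity $\int \tfrac{dz\,d\bar z}{\pi}e^{-|z|^2+\lambda z+\mu\bar z}=e^{\lambda\mu}$ coordinate by coordinate. The two contributions combine into a single term proportional to $[\Xi|H\rangle$. This is the step where signs must be tracked carefully, since the conventions $|z]=\bigl(\begin{smallmatrix}0&1\\-1&0\end{smallmatrix}\bigr)\overline{|z\rangle}$ and $[z_1|z_2\rangle=z_1 w_2-z_2 w_1$ produce a relative minus sign between the $a$-integral and the $b$-integral; matching the statement may require the same linear symmetry $\overline{z}^v\mapsto-\overline{z}^v$ used earlier in the excerpt to flip the sign of $\sum_e\langle z^{s(e)}|z^{t(e)}]$, and I expect this bookkeeping to be the only real obstacle of the proof.

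The final step is purely algebraic: using the bilinearity of the symplectic form,
\begin{equation*}
[\Xi|H\rangle \;=\; \sum_{i,j}\langle\Omega|z_i\rangle\,[\Omega|z_j\rangle\,[\zeta_i|\zeta_j\rangle.
\end{equation*}
The diagonal $i=j$ vanishes by antisymmetry of $[\cdot|\cdot\rangle$, and the off-diagonal pairs assemble (again by antisymmetry) into
\begin{equation*}
\sum_{i<j}\bigl(\langle\Omega|z_i\rangle[\Omega|z_j\rangle-\langle\Omega|z_j\rangle[\Omega|z_i\rangle\bigr)\,[\zeta_i|\zeta_j\rangle \;=\;\sum_{i<j}[z_i|z_j\rangle\,[\zeta_i|\zeta_j\rangle,
\end{equation*}
where in the last equality one recognizes that $[z_i|z_j\rangle=\alpha_i\beta_j-\alpha_j\beta_i$ is exactly the combination of the two components of $z_i$ and $z_j$ picked out by $\Omega$. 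This reproduces the right-hand side of the claim and concludes the proof.
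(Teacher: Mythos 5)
Your proof is correct in substance but follows a genuinely different route from the paper's. The paper does not perform the Gaussian integral directly: it assembles the source as $\sum_i[\zeta_i|g|z_i\rangle$ with $g=|z\rangle\langle\Omega|+|z][\Omega|$ and then combines two lemmas, namely the closed-form $\SU(2)$ Haar integral
$\int_{\SU(2)}dg\,e^{\sum_i[\zeta_i|g|z_i\rangle}=\sum_{J}\frac{1}{J!(J+1)!}\bigl(\sum_{i<j}[\zeta_i|\zeta_j\rangle[z_i|z_j\rangle\bigr)^J$
and the degree-by-degree identity $\int_{\SU(2)}dg\,P(g)=\frac{1}{(J+1)!}\int_{\C^2}\frac{e^{-\langle z|z\rangle}d^4z}{\pi^2}P(z)$ for $P$ homogeneous of degree $2J$; the $(J+1)!$ factors cancel and the Bessel-type series re-exponentiates into the claimed Gaussian. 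That route buys a conceptual link to the group-averaged coherent states used later in Section \ref{coherent} (where the uncancelled $(J+1)!^{-1}$ reappears), whereas your direct computation is more elementary and self-contained: the quadratic form $\langle z|z\rangle$ has no cross term between the two components of $|z\rangle$, so the integral factorizes into two one-dimensional complex Gaussians, and your final antisymmetrization $[\Xi|H\rangle=\sum_{i<j}\bigl(\langle\Omega|z_i\rangle[\Omega|z_j\rangle-\langle\Omega|z_j\rangle[\Omega|z_i\rangle\bigr)[\zeta_i|\zeta_j\rangle=\sum_{i<j}[z_i|z_j\rangle[\zeta_i|\zeta_j\rangle$ is exactly right. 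One caveat on your sign discussion: a substitution $\bar z\mapsto-\bar z$ of the \emph{integration} variable leaves the value of the integral unchanged, so it cannot repair a sign in the final answer (in the paper that trick is used to rewrite the integrand before integrating, not to alter a result); any residual sign in your $e^{\pm[\Xi|H\rangle}$ must instead be traced to the convention for the square-square pairing $[\zeta|z]$, which the paper never writes out explicitly and whose choice (e.g. $[z_1|z_2]=\langle z_2|z_1\rangle$ versus its negative) is precisely what flips the relative sign between your two one-dimensional Gaussians. With the convention the authors intend, your computation lands on the stated right-hand side, so this is a bookkeeping point rather than a gap.
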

Let us point out that due to the anti-symmetry of the scalar products $[\zeta_{i}|\zeta_{j}\ra$ and $[z_{i}|z_{j}\ra$, the condition $i<j$ is not particularly relevant. It only serves to avoid over-counting and does not reflect the necessity of a linear ordering.
This proposition can be proved by putting together the following beautiful integral over $\SU(2)$,
\begin{lemma}
\be
\int_{\SU(2)}dg\,
e^{\sum_{i}[\zeta_{i}|g|z_{i}\ra}
\,=\,
\sum_{J\in\N}\f1{J!(J+1)!}\left(
\sum_{i< j}[\zeta_{i}|\zeta_{j}\ra[z_{i}|z_{j}\ra
\right)^{J}
\ee
\end{lemma}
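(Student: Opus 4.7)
The plan is to reduce the integral to a standard character computation on $\SU(2)$, via an ``$\SU(2)\times\SU(2)$--invariance plus Catalan numbers'' argument. First, I would observe that the linear form in the exponent can be written as a trace of a single matrix. Setting
$$
A \,=\, \sum_{i} |z_{i}\ra[\zeta_{i}|\,\in M_{2}(\C),
$$
cyclicity of the trace gives $\sum_{i}[\zeta_{i}|g|z_{i}\ra = \tr(gA)$. A short direct computation (expanding the $2\times 2$ determinant and antisymmetrizing the summation indices $i,j$) shows
$$
\det A \,=\, \sum_{i<j}[\zeta_{i}|\zeta_{j}\ra\,[z_{i}|z_{j}\ra,
$$
so the right-hand side of the lemma is simply $\sum_{J\geq 0}\f{(\det A)^{J}}{J!(J+1)!}$. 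The lemma is therefore equivalent to the statement $\int_{\SU(2)}dg\,e^{\tr(gA)}=\sum_{J\geq 0}\f{(\det A)^{J}}{J!(J+1)!}$, which I would prove by expanding the exponential term by term.

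Next I would analyze $I_{n}(A):=\int_{\SU(2)}dg\,(\tr gA)^{n}$ as a polynomial function of $A$. By the left/right invariance of the Haar measure, $I_{n}$ is invariant under $A\mapsto h_{1}Ah_{2}$ with $h_{1},h_{2}\in \SU(2)$. Since $I_{n}$ is holomorphic and polynomial of degree $n$ in the entries of $A$, and $\SU(2)$ is Zariski-dense in $\SL(2,\C)$, the invariance extends to $\SL(2,\C)\times\SL(2,\C)$; the only such invariant holomorphic polynomials are the powers of $\det A$. Consequently $I_{n}=0$ unless $n=2J$ is even (which can also be seen from $g\mapsto -g$), and $I_{2J}=c_{J}\,(\det A)^{J}$ for some constant $c_{J}$.

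The constant is fixed by evaluating at $A=\id$, where $\det A=1$ and $\tr(gA)=\chi_{1/2}(g)$ is the character of the fundamental representation. Hence
$$
c_{J}\,=\,\int_{\SU(2)}dg\,\chi_{1/2}(g)^{2J}\,=\,\dim\bigl((V_{1/2})^{\otimes 2J}\bigr)^{\SU(2)},
$$
which is the multiplicity of the trivial representation in $V_{1/2}^{\otimes 2J}$. This multiplicity is the $J$-th Catalan number,
$$
c_{J}\,=\,\binom{2J}{J}-\binom{2J}{J+1}\,=\,\f{(2J)!}{J!(J+1)!}.
$$
Putting everything together,
$$
\int_{\SU(2)}dg\,e^{\tr(gA)}\,=\,\sum_{J\geq 0}\f{1}{(2J)!}\cdot\f{(2J)!}{J!(J+1)!}(\det A)^{J}\,=\,\sum_{J\geq 0}\f{(\det A)^{J}}{J!(J+1)!},
$$
which gives the lemma after substituting the expression for $\det A$.

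The only genuine obstacle is the identification of the invariant multiplicity with $1/(J!(J+1)!)$ (after accounting for the $(2J)!$ from the exponential); everything else is either a direct calculation or a soft invariance argument. An alternative route, bypassing the Catalan-number identification, would be to diagonalize: parametrize $g\in\SU(2)$ by its eigenvalues $e^{\pm i\theta}$ and use the Weyl integration formula with the Weyl measure $\f{2}{\pi}\sin^{2}\theta\,d\theta$, which reduces $c_{J}=\f{2}{\pi}\int_{0}^{\pi}(2\cos\theta)^{2J}\sin^{2}\theta\,d\theta$ to a classical beta integral yielding $\f{(2J)!}{J!(J+1)!}$ directly.
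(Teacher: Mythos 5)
Your proof is correct. The paper itself does not prove this lemma: it is quoted as a known identity from the spinor/coherent-intertwiner literature (\cite{Bargmann}, \cite{spinor}, \cite{bonzom}, \cite{laurent}), so there is no in-text argument to compare against; your derivation is a clean, self-contained substitute. The three ingredients all check out: (i) writing the exponent as $\tr(gA)$ with $A=\sum_i|z_i\ra[\zeta_i|$ and verifying $\det A=\sum_{i<j}[\zeta_i|\zeta_j\ra[z_i|z_j\ra$ (this is the standard Pl\"ucker-type identity for rank-one sums in $M_2(\C)$; just be careful that the sign works out with the paper's convention $[z_1|z_2\ra=z_1w_2-z_2w_1$, which it does); (ii) the bi-invariance of Haar measure plus Zariski density of $\SU(2)$ in $\SL(2,\C)$, reducing the degree-$n$ moment to a multiple of $(\det A)^{n/2}$ by the first fundamental theorem of invariant theory for $\SL_2\times\SL_2$ acting on $\C^2\otimes\C^2$ (and the vanishing for odd $n$ via $g\mapsto -g$); and (iii) the evaluation $c_J=\int_{\SU(2)}\chi_{1/2}(g)^{2J}\,dg=\frac{(2J)!}{J!(J+1)!}$ as the Catalan number counting the trivial isotypic multiplicity in $V_{1/2}^{\otimes 2J}$, which your Weyl-integration alternative confirms independently. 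The references the paper leans on instead establish the identity by Gaussian/generating-function manipulations over spinors (essentially Bargmann's method), so your invariant-theoretic route is genuinely different and arguably more conceptual: it isolates exactly where the combinatorial factor $1/(J!(J+1)!)$ comes from, at the cost of invoking the classification of $\SL_2\times\SL_2$ invariants rather than a purely computational expansion.
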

with the following relation between the Haar measure on $\SU(2)$ and the Gaussian measure over spinors:
\begin{lemma}
From \cite{spinor,bonzom}, the integral of a homogeneous polynomial $P(g)$ in $g\in\SU(2)$ of even degree $2J $ can be expressed as a Gaussian integral over $\C^{2}$:
\be
\int_{\SU(2)}dg\,P(g)
\,=\,
\f1{(J+1)!}\int_{\C^{2}} \f{e^{-\la z|z\ra}\,d^{4}z}{\pi^{2}}\,P(z)\,,
\ee
with $g=|z\ra\la \Om|+|z][ \Om|$, with unit spinor $\Om =(1\,\,0)$.
\end{lemma}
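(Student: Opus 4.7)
The plan is to polar-decompose $\C^2$, identify the unit sphere $S^3$ with $\SU(2)$ via the given parametrization, and then reduce the Gaussian integral to the Haar integral.

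First, I would verify by a direct computation that for $|z\ra=(z,w)^t$ one has $g(z):=|z\ra\la\Om|+|z][\Om|=\left(\begin{smallmatrix} z & -\bar w \\ w & \bar z\end{smallmatrix}\right)$, so that $\det g(z)=\la z|z\ra$ and $g(z)\in\SU(2)$ if and only if $\la z|z\ra=1$; moreover every element of $\SU(2)$ arises this way for a unique unit spinor. The map $z\mapsto g(z)$ is $\SU(2)$-equivariant for left multiplication: since $|hz\ra=h|z\ra$ and $|hz]=h|z]$, one has $g(hz)=h\,g(z)$ for any $h\in\SU(2)$. Consequently the bijection $S^3\simeq\SU(2)$ identifies the rotation-invariant Lebesgue measure $d\sigma$ on $S^3$ (of total mass $2\pi^2$) with a left-invariant measure on $\SU(2)$; comparing total masses, this measure is $2\pi^2\,dg$, where $dg$ is the normalized Haar measure.

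Next, I would decompose $z=r\,u$ with $r=\|z\|\ge 0$ and $u\in S^3$, so that $d^4z=r^3\,dr\,d\sigma(u)$. Since the entries of $g(z)$ are linear in $(z,w,\bar z,\bar w)$, a polynomial $P$ that is homogeneous of degree $2J$ in those entries satisfies $P(g(ru))=r^{2J}P(g(u))$. The integral then factorizes as
\be
\int_{\C^{2}} \f{e^{-\la z|z\ra}\,d^{4}z}{\pi^{2}}\,P(g(z))=\f{1}{\pi^2}\Bigl(\int_0^\infty r^{2J+3}e^{-r^2}\,dr\Bigr)\Bigl(\int_{S^3}P(g(u))\,d\sigma(u)\Bigr).
\ee
The radial integral equals $(J+1)!/2$ after the substitution $s=r^2$, and by the previous step the spherical integral equals $2\pi^2\int_{\SU(2)}P(g)\,dg$. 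Collecting the three factors gives $(J+1)!\int_{\SU(2)}P(g)\,dg$, which rearranges into the claimed identity.

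The only nontrivial ingredient is the compatibility of the sphere parametrization with the Haar measure, which is immediate from the $\SU(2)$-equivariance above. The even-degree hypothesis $d=2J$ plays two roles: it ensures that $P\circ g$ descends to a well-defined polynomial in the matrix entries of $g$, and it is what makes the radial Gaussian integral evaluate to the integer factorial $(J+1)!$ rather than to a half-integer Gamma value.
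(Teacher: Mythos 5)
Your argument is correct and complete. Note that the paper itself does not prove this lemma: it is quoted from the references \cite{spinor,bonzom}, so there is no in-text proof to compare against. Your route --- polar decomposition $z=ru$, the observation that $g(hz)=hg(z)$ forces the pushforward of the round measure on $S^3$ to be left-invariant and hence $2\pi^2\,dg$, homogeneity to pull out $r^{2J}$, and the radial integral $\int_0^\infty r^{2J+3}e^{-r^2}dr=\tfrac{1}{2}(J+1)!$ --- is the standard derivation and all the normalizations check out (for $P=1$, $J=0$ both sides equal $1$). One caveat worth flagging: you are implicitly correcting two sign typos in the paper's conventions. As literally printed, $\la z_1|z_2\ra=\bz_1z_2-\bw_1w_2$ is not positive definite (incompatible with the convergent Gaussian weight $e^{-\la z|z\ra}$ used throughout, e.g.\ in Theorem \ref{teo:complexgaussian}), and the displayed $|z]=(\bw,-\bz)^t$ would give $\det g(z)=-\la z|z\ra$, so $g(z)\notin\SU(2)$. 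Your reading, $\la z|z\ra=|z|^2+|w|^2$ and $|z]=(-\bw,\bz)^t$ so that $g(z)=\left(\begin{smallmatrix} z & -\bw\\ w& \bz\end{smallmatrix}\right)$, is the intended one. A minor quibble with your closing remark: the even-degree hypothesis is not needed for $P\circ g$ to be a well-defined polynomial in $(z,w,\bz,\bw)$ --- that is automatic since the entries of $g(z)$ are linear in those variables --- it only guarantees the clean factorial $(J+1)!$ (and for odd degree both sides vanish anyway by the parity $z\mapsto -z$, $g\mapsto -g$).
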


The above results allow to reformulate the spin network generating function, introducing intermediate spinor variables $\xi_{v}\in\C^{2}$, living at the vertices $v$:
\be
Z^{Spin}(\G,\{X_{\alpha}
=[\zeta_{s(\alpha)}|\zeta_{t(\alpha)}\ra\})
\,=\,
\int_{\C^{4\#\Edges}}\int_{\C^{2\#\Ver}}
\prod_{e,v}\f{e^{-\la z_{e}^{v}|z_{e}^{v}\ra}\,d^{4}z_{e}^{v}}{\pi^{2}}\,
\prod_{v}\f{e^{-\la \xi_{v}|\xi_{v}\ra}\,d^{4}\xi_{v}}{\pi^{2}}\,
\,
e^{\sum_{e} \la z_{e}^{s(e)}|z_{e}^{t(e)}]}\,
e^{\sum_{v,e\ni v}[\zeta_{e}^{v}|\xi_{v}\ra\la \Om|z_{e}^{v}\ra+[\zeta_{e}^{v}|\xi_{v}][ \Om|z_{e}^{v}\ra}
\ee
We now use the following beautifully simple lemma of complex integration:
\begin{lemma}
Considering an arbitrary holomorphic function $f$ of spinors $z\in\C^2$, the following integration identity using the Gaussian measure over the space of spinors holds:
$$
\int_{\C^{2}} \f{e^{-\la z|z\ra}\,d^{4}z}{\pi^{2}}\,
f(z)e^{\la z|\xi\ra}=f(\xi)\,.
$$
\end{lemma}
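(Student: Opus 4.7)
The plan is to prove the identity by expanding both $f$ and the exponential $e^{\la z|\xi\ra}$ as power series in their natural variables and then reducing to the elementary one-variable Gaussian integral
\be
\int_{\C} \f{e^{-|z|^{2}}\,z^{n}\,\bar z^{p}\,d^{2}z}{\pi} = n!\,\delta_{n,p},
\ee
which itself follows from writing $z = r e^{i\theta}$, integrating the angular part to get $2\pi\delta_{n,p}$, and using $\int_{0}^{\infty} r^{2n+1} e^{-r^{2}}dr = n!/2$.

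Concretely, writing $|z\ra = (z,w)^{t}$ and $|\xi\ra=(\xi_{1},\xi_{2})^{t}$, so that $\la z|\xi\ra = \bar z \xi_{1} + \bar w \xi_{2}$, I would first Taylor expand the holomorphic function as $f(z) = \sum_{n,m\geq 0} c_{nm}\, z^{n} w^{m}$ and then the exponential as
\be
e^{\la z|\xi\ra} = \sum_{p,q\geq 0} \f{\bar z^{p}\,\bar w^{q}\,\xi_{1}^{p}\,\xi_{2}^{q}}{p!\,q!}.
\ee
Substituting into the left-hand side, factorizing the Gaussian measure as $d^{4}z = d^{2}z\, d^{2}w$ and using the one-variable Gaussian integral in each spinor component gives $n!\,m!\,\delta_{n,p}\delta_{m,q}$, which exactly cancels the $1/(p!q!)$ factors. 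Summing over $n,m$ yields $\sum_{n,m}c_{nm}\xi_{1}^{n}\xi_{2}^{m} = f(\xi)$, as claimed.

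The only genuine subtlety will be justifying the interchange of the infinite sum with the Gaussian integral. The cleanest way to handle this is to observe that the statement is precisely the reproducing kernel property of the Bargmann--Fock space on $\C^{2}$: the monomials $z^{n}w^{m}/\sqrt{n!m!}$ form an orthonormal basis for the inner product $(f,g)=\int \overline{f(z)}g(z)\,e^{-\la z|z\ra}d^{4}z/\pi^{2}$, and the kernel $K(z,\xi)=e^{\la z|\xi\ra}$ reproduces $f(\xi)$. For the holomorphic integrands actually encountered in the paper (exponentials of polynomial expressions of degree at most two in $z$, possibly multiplied by polynomials), the interchange is justified by dominated convergence against the Gaussian weight, since all partial sums are dominated by $e^{C|z|}$ for some constant $C$. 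The main obstacle, if any, is therefore purely bookkeeping: making sure the convention $\la z|\xi\ra = \bar z\xi_{1}+\bar w\xi_{2}$ is used consistently and that the holomorphic dependence on $z$ is respected so that no $\bar\xi$ terms appear after integration.
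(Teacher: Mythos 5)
Your proof is correct. The paper itself states this lemma without proof (it is introduced as a ``beautifully simple lemma of complex integration'' and immediately applied), so there is no in-text argument to compare against; your computation --- Taylor-expanding $f$ and the kernel $e^{\la z|\xi\ra}$, factorizing into the two spinor components, and invoking the orthogonality $\int_{\C} e^{-|z|^{2}} z^{n}\bar z^{p}\,d^{2}z/\pi = n!\,\delta_{n,p}$ --- is exactly the standard Bargmann--Fock reproducing-kernel derivation the authors implicitly rely on. Two minor remarks: you are right to use $\la z|\xi\ra=\bar z\xi_{1}+\bar w\xi_{2}$; the displayed convention $\la z_{1}|z_{2}\ra=\bz_{1}z_{2}-\bw_{1}w_{2}$ in the paper is evidently a sign typo, since the Gaussian weight $e^{-\la z|z\ra}$ must be $e^{-(|z|^{2}+|w|^{2})}$ as used everywhere else. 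And your caveat about interchanging sum and integral is well placed: for a literally arbitrary holomorphic $f$ the left-hand side need not converge, so the lemma should be read as holding for $f$ in the Fock space (or, as you note, for the exponentials of quadratics actually encountered in the paper, where dominated convergence applies).
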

This allows to integrate over all the half-edge spinors $z_{e}^{v}$, leaving the generating function as an integral over vertex spinors only:
\be\label{snvertexspinors}
Z^{Spin}(\G,\{X_{\alpha}
=[\zeta_{s(\alpha)}|\zeta_{t(\alpha)}\ra\})
\,=\,
\int_{\C^{2\#\Ver}}
\prod_{v}\f{e^{-\la \xi_{v}|\xi_{v}\ra}\,d^{4}\xi_{v}}{\pi^{2}}\,
\,
e^{-\sum_{e} [ \zeta_{e}^{s(e)}|\, \left(
|\xi_{s(e)}\ra\la\xi_{t(e)}|+|\xi_{z(e)}][\xi_{t(e)}|
\right)
\,|\zeta_{e}^{t(e)}\ra}\,.
\ee
This is the spinfoam amplitude, recently considered in \cite{laurent,bonzom}. These few steps allow us to bridge between the formulas for the spin network generating function as an integral over variables living on vertices as in \cite{laurent, bonzom} and on half-edges as in \cite{CoMa}.

\smallskip

This naturally leads to two questions. First, at  a technical level, this vertex integral reformulation assumes the definition of the angle couplings from half-edge spinors as $X_{\alpha}=[\zeta_{s(\alpha)}|\zeta_{t(\alpha)}\ra$. This is different from the couplings defined from the Ising edge couplings,  $X_{\alpha}=\sqrt{Y_{s(\alpha)}Y_{t(\alpha)}}$. How compatible are those two definitions? Since we already know that edge couplings are equivalent to angle couplings from the perspective of loop observables as explained earlier in Section \ref{sec:mappings}, we can focus on asking how generic is the form $X_{\alpha}=[\zeta_{s(\alpha)}|\zeta_{t(\alpha)}\ra$ for angle couplings.
We claim that it is completely general. Indeed around a 3-valent vertex, with angle couplings $X_{12},X_{23},X_{31}$ in $\C$, there always exist three spinors $|\zeta_{1}\ra, |\zeta_{2}\ra, |\zeta_{3}\ra\in(\C^{2})^{3}$ such that $X_{i,i+1}=[\zeta_{i}|\zeta_{i+1}\ra$. Indeed possible choices are:
$$
|\zeta_{1}\ra=
\mat{c}{1 \\ 0},\qquad
|\zeta_{2}\ra=
\mat{c}{0 \\ X_{12}},\qquad
|\zeta_{3}\ra
=-\mat{c}{X_{23}/X_{12} \\ X_{31}}.
$$
Then we can act on these three spinors with an arbitrary $\SL(2,\C)$ matrix without changing their products $[\zeta_i,\zeta_j\ra$.

The second question is more intricate. This expression of the spin network generating function as an integral over vertex variables hints towards a similar reformulation of the Ising model possibly as an integral over odd-Grassmann variables living on vertices. We can also wonder if there exists a supersymmetry relating the new vertex variables $\xi_{v}$ to the Ising integration variables $\psi_{e}^v, \eta_{e}^v$. We leave these very interesting issues for future investigation.

\subsection{Another Class of Coherent States and Generating Function}
\label{coherent}

We have reformulated the spin network generating function alternatively as a complex Gaussian integral over spinors associated to half-edges of the graph (Theorem \ref{teo:complexgaussian}) or associated to vertices (Formula \eqref{snvertexspinors}), for angle couplings defined as $X_{\alpha}=[\zeta_{s(\alpha)}|\zeta_{t(\alpha)}\ra$ in terms of fixed half-edge spinors $\zeta_{e}^{v}$:
\beq
Z^{Spin}(\G,\{X_{\alpha}\})
&=&
\int_{\C^{4\#\Edges}}\prod_{e,v}\f{e^{-\la z_{e}^{v}|z_{e}^{v}\ra}\,d^{4}z_{e}^{v}}{\pi^{2}}\,
e^{\sum_{e} \la z_{e}^{s(e)}|z_{e}^{t(e)}]}\,e^{\sum_{\alpha}X_{\alpha}[z_{s(\alpha)}|z_{t(\alpha)}\ra}
\nn\\
&=&
\int_{\C^{2\#\Ver}}
\prod_{v}\f{e^{-\la \xi_{v}|\xi_{v}\ra}\,d^{4}\xi_{v}}{\pi^{2}}\,
\,
e^{-\sum_{e} [ \zeta_{e}^{s(e)}|\, \left(
|\xi_{s(e)}\ra\la\xi_{t(e)}|+|\xi_{z(e)}][\xi_{t(e)}|
\right)
\,|\zeta_{e}^{t(e)}\ra}\,.\nn
\eeq
We can generalize this re-expression to the spin network state defined as a gauge-invariant function on $\SU(2)^{\#\Edges}/\SU(2)^{\#\Ver}$:
\beq
\phi_{\G,\{X_{\alpha}\}}(g_{e})
&=&
\int_{\C^{4\#\Edges}}\prod_{e,v}\f{e^{-\la z_{e}^{v}|z_{e}^{v}\ra}\,d^{4}z_{e}^{v}}{\pi^{2}}\,
e^{\sum_{e} \la z_{e}^{s(e)}|g_{e}|z_{e}^{t(e)}]}\,e^{\sum_{\alpha}X_{\alpha}[z_{s(\alpha)}|z_{t(\alpha)}\ra}
\nn\\
&=&
\int_{\C^{2\#\Ver}}
\prod_{v}\f{e^{-\la \xi_{v}|\xi_{v}\ra}\,d^{4}\xi_{v}}{\pi^{2}}\,
\,
e^{-\sum_{e} [ \zeta_{e}^{s(e)}|\, 
H_{s(e)}\,g_{e}H_{t(e)}^\dagger
\,|\zeta_{e}^{t(e)}\ra}\,,\nn
\eeq
where we have defined the $2\times 2$ matrices $H_{v}$'s in terms of the integration spinors $\xi_{v}$'s as:
\be
H_v=|\xi_{v}\ra\la \Om|+|\xi_{v}][ \Om|\,.
\ee
This expression naturally suggests to switch to $\SU(2)$ integrations instead of Gaussian integrals over the spinor space $\C^2$ and we introduce the other class of gauge-invariant coherent states:
\beq
\phi^{cl}_{\G,\{X_{\alpha}\}}(g_{e})
&\equiv&
\int_{\SU(2)^{\#\Ver}}
\prod_{v}dh_{v}\,
\,
e^{-\sum_{e} [ \zeta_{e}^{s(e)}|\,h_{s(e)}^{-1}g_{e}h_{t(e)}\,|\zeta_{e}^{t(e)}\ra} \\
&=&
 \int_{\C^{4\#\Edges}}\prod_{e,v}\f{e^{-\la z_{e}^{v}|z_{e}^{v}\ra}\,d^{4}z_{e}^{v}}{\pi^{2}}\,
e^{\sum_{e} \la z_{e}^{s(e)}|g_{e}|z_{e}^{t(e)}]}\,
\prod_{v}\sum_{J\in\N}\f1{J!(J+1)!}\left(\sum_{\alpha \ni v}
[\zeta_{s(\alpha)}|\zeta_{t(\alpha)}\ra[z_{s(\alpha)}|z_{t(\alpha)}\ra
\right)^{J}\,,\nn\\
&=&
 \int_{\C^{4\#\Edges}}\prod_{e,v}\f{e^{-\la z_{e}^{v}|z_{e}^{v}\ra}\,d^{4}z_{e}^{v}}{\pi^{2}}\,
e^{\sum_{e} \la z_{e}^{s(e)}|g_{e}|z_{e}^{t(e)}]}\,
\prod_{v}\sum_{J\in\N}\f1{J!(J+1)!}\left(\sum_{\alpha \ni v}
X_{\alpha}[z_{s(\alpha)}|z_{t(\alpha)}\ra
\right)^{J}\,,
\nn
\eeq
These are actually the coherent spin network states for loop quantum gravity and spinfoam models, introduced earlier in \cite{coherent}, further developed in \cite{bonzom} and recently used in studying the coarse-graining of the spinfoam path integral in \cite{dupuislivine}.
As one can see, the difference with the coherent states defined in Equation \eqref{spin:coherent1} and the above is that we have been using is the extra-factors $(J+1)!^{-1}$, turning the exponentials into Bessel functions. The two coherent states are actually related to each other by a Borel transform or reversely by a inverse Laplace transform. 

These new coherent states were introduced in the context of loop quantum gravity (in $3+1$ dimensions) as describing good semi-classical states of 3d geometry. Here the geometrical meaning of those states is different since we seek for an interpretation in terms of 2D triangulations dual to our planar graph $\Gamma$. We will show below that they define slightly different weights for the generating function of spin network evaluations, which leads to a non-scalable stationary point uniquely fixed by the angle couplings $X_{\alpha}$, or equivalently the fixed half-edge spinor data $\zeta_{e}^v$.
The introduction of the $(J+1)!^{-1}$ factor for every vertex leads to  a modified generating functional, to be compared with \eqref{eq:genser}:
\be\label{eq:newspin}
\tZ^{Spin}
\,=\,
\sum_{\{j_e\}} \sqrt{\frac{1}{\prod_v (J_v+1)!\prod_{e\ni v} (J_v-2j_e)!}} s(\{j_e\}) \prod_e Y_e^{2j_e}\,,
\ee
and the associated modified probability distribution for the edge spins, to be compared with \eqref{rho}:
\be
\trho(\{j_e\})
\,=\,
\prod_v{\frac{ 1}{(J_v+1)!\prod_{e\ni v} (J_v-2j_e)!}} \prod_e \f{(|Y_e|^{2})^{2j_e}}{(2j_{e}+1)}\,.
\ee
Here the edge couplings $Y_{e}$ are given in terms of the angle couplings $X_{\alpha}$ according to Equation \eqref{angletoedge}:
\be
\label{XtoY}
Y_{e}^{2}=\f{X_{ee_{1}}X_{ee_{2}}X_{e\te_{1}}X_{e\te_{2}}}{X_{e_{1}e_{2}}X_{\te_{1}\te_{2}}}\,,
\ee
where  $e_{1}$ and $e_{2}$ are the other two edges attached to the source vertex $s(e)$, while $\te_{1}$ and $\te_{2}$ are the other two edges attached to the target vertex $t(e)$.
If we were to plug back the relation $X_{\alpha}=\sqrt{Y_{s(\alpha)}Y_{t(\alpha)}}$ in that definition of the $Y$'s, then it would be simply a consistency check. But here the $X$'s are defined instead in terms of the half-edge spinor variables $\zeta_{e}^{v}$.

The stationary point analysis can be done exactly as earlier in Section \ref{stationary}. Assuming that the spins $j$'s are large and using the Stirling approximation for the factorials, we get the conditions at leading order in the spins:
\be
\label{newJe}
|Y_{e}|^{4}\approx
\f{J_{s(e)}(J_{s(e)}-2j_{e_{1}})(J_{s(e)}-2j_{e_{2}})}{(J_{s(e)}-2j_{e})}\,
\f{J_{t(e)}(J_{t(e)}-2j_{\te_{1}})(J_{t(e)}-2j_{\te_{2}})}{(J_{t(e)}-2j_{e})}\,.
\ee
These equations are solved again if the spins $j_{e}$ are related to the edge lengths $l_{e}=2j_{e}$ of a dual triangulation to our planar graph $\Gamma$  with the condition that the edge couplings $Y_{e}$ are related to the geometric data  by:
\be
Y_{e}^{2}=L_{s(e)}\tan\f{\gamma_{e}^{s(e)}}2\,\,L_{t(e)}\tan\f{\gamma_{e}^{t(e)}}2\,,
\ee
in terms of the half-perimeters $L_{v}$ and the opposite angles $\gamma_{e}^{v}$. Unlike before, the $Y_{e}$ scale with the triangulation lengths and are not scale-invariant anymore. Given admissible couplings admitting a stationary point, we can not rescale the spins of that stationary point: we thus have an actual stationary point and not a stationary line as in Section \ref{stationary}. 

\smallskip

Having started with the spinor data $\zeta_{e}^{v}$ allows a finer analysis of the geometry of the fixed point. Indeed, comparing the two equations above \eqref{newJe} and \eqref{XtoY}, the conditions on the edge couplings for the stationary point are solved if for every angle $\alpha$ around every vertex $v$, we have the matching:
\be
J_{v}(J_{v}-2j_{\hat{\alpha}})
\,=\,
|X_{\alpha}|^{2}
\,=\,
\Big{|}[\zeta_{s(\alpha)}|\zeta_{t(\alpha)}\ra\Big{|}^{2}\,,
\ee
where  we have written $\hat{\alpha}$ for the edge opposite to the angle $\alpha$ around the vertex $v$.
More explicitly, around every (3-valent) vertex $v$, the spins $j_{e}$ at the fixed point are given in terms of the spinor data $\zeta_{e}^{v}$ by the relations:
\be
J=j_{1}+j_{2}+j_{3},\quad
J(J-2j_{1})=|[\zeta_{2}|\zeta_{3}\ra|^{2}, \quad \dots
\ee
Following the work on coherent intertwiners in the context of loop quantum gravity and its discrete geometry \cite{spinor,sfspinor}, we can easily solve such equations to get the spins $j_{e}$ by introducing the 3-vectors $\vV\in\R^{3}$ corresponding to the spinors $\zeta\in\C^{2}$ by projecting them onto the Pauli matrices:
\be
\vV=\f12\,\la \zeta|\vsigma|\zeta\ra,\qquad
|\vV|=\la \zeta|\zeta\ra\,,
\ee
where the Pauli matrices $\sigma^{a}$, $a=1..3$, are normalized such that they square to the identity. The previous conditions can be entirely re-written in terms of those 3-vectors:
$$
J(J-2j_{1})=\f12(|\vV_{2}||\vV_{3}|-\vV_{2}\cdot\vV_{3})\,.
$$
Then, letting $V_i:=|\vec{V}_i |$ and summing over all 3 such equations around the vertex, we define the total norm $V=\sum_{i }V_{i}$ and the closure vector $\vV=\sum_{i }\vV_{i}$, to get:
\be
4J^{2}= (V^{2}-\vV\cdot \vV)\,,
\qquad
2j_{i}=\f1{\sqrt{V^{2}-\vV\cdot \vV}}\,(VV_{i}-\vV\cdot\vV_{i})\,,
\ee
where the spins at the stationary point are indeed fixed entirely by the value of the generating function couplings $\zeta_{e}^{v}$. If we further assume closure constraints on the initial spinor data, as introduced in \cite{coherent,spinortwisted},
\be
\sum_{i}\vV_{i}=0,\qquad
\textrm{or equivalently}\qquad
\sum_{i} |\zeta_{i}\ra\la\zeta_{i}|\propto \id
\ee
the expression of the spins at the stationary point further simplifies to:
$$
2J=V,\quad 2j_{i}=V_{i}\,.
$$
We obviously have to require norm-matching of the spinors on both ends of every edge $e$, in order to get the same spin $j_{e}$ from the equations at both it source and target vertices.

\smallskip

To summarize the previous analysis, we start with spinor data living on the half-edges $\zeta_{e}^{v}$ and satisfying closure constraints at the vertices and norm-matching on the edges:
\be
\forall v,\quad
\sum_{e\ni v}  |\zeta_{e}^{v}\ra\la\zeta_{e}^{v}|\propto \id,
\qquad
\forall e,\quad
\la \zeta_{s(e)}|\zeta_{s(e)}\ra
=\la \zeta_{t(e)}|\zeta_{t(e)}\ra\,\,.
\ee
In this twisted geometry setting \cite{twisted, spinor, spinortwisted, sfspinor}, there exists a triangulation dual to our planar 3-valent graph such that the edge lengths $l_i$ are the norm squared $V_i$ of the corresponding spinors. Then the new spin probability distribution $\trho$ is peaked on a stationary point, such that the spins $j_i$ are half those edge lengths.

\smallskip

This modified spin probability distribution has a very nice geometrical interpretation, which made it very useful to construct semi-classical states peaked on classical discrete geometries in the framework of loop quantum gravity. From our perspective here, the existence of the stationary point instead of the scale-invariant stationary line suggests that the corresponding spin network generating function may not have poles and therefore may not exhibit critical behavior. 
It would be very  interesting to investigate this further and analyze in details the behavior of both spin network generating functions by a careful stationary point analysis involving not only the spin statistical weight but also the asymptotical behavior of the spin network evaluations (as studied for instance in e.g. \cite{CoMa}), and study the counterpart of those from the viewpoint of the 2D Ising model.

\section{Conclusion \& Outlook}\label{sec:conclusions}

In the present paper we related the partition function of the Ising model on a planar trivalent graph $\Gamma$ to the generating series of the spin network evaluations on the same graph. Although we explored some of the first consequences of this duality, many open questions remain which we will explore in the future. 
\begin{itemize}
\item Is it possible to extend the correspondence to non-planar graphs and/or graphs with vertices of arbitrary degrees? In \cite{CoMa} a general formula was obtained for the trivalent non-planar case. In loop quantum gravity, coherent states on graphs with vertices of degrees greater than 3 also exist. We expect that those formulae to be related on the Ising side to formulae like the Kac-Ward formula (see \cite{KacWard}), where Kasteleyn orientations are extended to spin structures. 
\item In \cite{CoMa} a Gaussian integral formulation was provided also for spin network evaluations of graphs equipped with $\mathrm{SL}_2(\C)$ holonomies on the edges. From the point of view of Loop Quantum Gravity, they are natural objects to study. Is there a counterpart of holonomies on the side of the Ising model? What is the link with the spin network evaluations?
\item When the graph is embedded in the 3-sphere or 3-hyperboloid, spin network evaluations have famous deformations known as \emph{quantum spin networks}, based on the q-deformed $\cU_{q}(\SU(2))$. Is there a statistical model which maps to the quantum evaluations via supersymmetry and which deforms the Ising model? In the positive case, what is the interpretation in terms of this model of the conjectures on the asymptotic behavior of quantum spin networks (\cite{CM},\cite{coCV}, \cite{CGV})? 
\item Our analysis in Section \ref{state} is a first analysis of the relations between stationary points of the distribution of the spins induced by spin network coherent states and critical values of the couplings for the Ising model. Can we better understand this relation, which at this stage might just look like a coincidence? Our analysis gives a relation with the critical temperatures for the Ising models only if the graph is isoradial. Can one push further the analysis to non-isoradial graphs?
\end{itemize}

Then there is much to do on the interplay between spin network evaluations and the Ising model in the thermodynamic limit. In particular, we might be able to use methods developed for the Ising model to study the generating function of spin network evaluations. Also comparing the equations which appear in the resolution of both models (Biedenharn-Elliott identity and Yang-Baxter equations for instance) should lead to a fruitful cross-fertilization.
This opens new perspectives on the coarse-graining of spin networks and their spin foam transition amplitudes for quantum gravity, which might meet the intertwiner dynamics models developed in \cite{bianca2} and their interplay with condensed matter models. Here is an outlook for potential future investigation:

\begin{itemize}
\item The fact that the stationary points found in Section \ref{state} for the distribution induced by the coherent spin network state on the length operator are the same as the critical values for the Ising model in the thermodynamic limit seems a bit mysterious. The stationary points we found should be compared with those of the stationary phase approximation of the generating function of spin network evaluations $Z^{spin}$ itself. As the latter takes the form $Z^{spin} = 1/P_\Gamma^2$ where $P_\Gamma$ is a polynomial, we expect the stationary phase method to reveal the singularities of $Z^{spin}$. They obviously are the zeroes of $P_\G$, hence the Fisher's zeroes of the corresponding Ising partition function, as $Z^{Ising} \sim P_\Gamma$. This is a first way to study the thermodynamic limit, as the distribution of Fisher's zeroes in this limit carries information on the critical behavior \cite{Fisher}.

\item Coarse-graining techniques are well-developed for the Ising model. The duality we exhibited suggests that they could be applied to spin network evaluations. We expect the star-triangle transformation to be directly the Ising equivalent of the behavior of spin network evaluations under Pachner moves. For instance, is there an equivalence between the Yang-Baxter equation for the Ising model and the Biedenharn-Elliott identity for Wigner symbols?

\item The duality of the 2D Ising model between low and high temperatures is expected to carry over to the spin network side. Does it then correspond to some UV-IR duality in 3D quantum gravity amplitudes? This typically requires to use non-trivalent graphs, as the duality pushes a model from the primal graph to its dual.

\item Finally, properties of spin network evaluations might be described in the continuum as a conformal field theory, dual to that of the continuum Ising model.
Indeed, on the one hand, the critical 2D Ising model is conformally invariant and can be described as a Wess-Zumino-Witten model for the coset $\SU(2)_{1}\times\SU(2)_{1}/\SU(2)_{2}$ (were the indices denotes the level of the theory) (see e.g. \cite{Ising CFT}). One can consider its ``inverse'' by going to negative level or equivalently replacing the gauge group $\SU(2)$ by the coset $\SL(2,\C)/\SU(2)$  \cite{gawedzki}.
On the other hand, the spin network evaluation live on the boundary of the spinfoam model for 3D quantum gravity, which is described in the continuum by a Chern-Simons theory in the bulk for the Poincar\'e group inducing a Wess-Zumino-Witten theory on its boundary. It would be enlightening if the two sides of the story would meet.
Could this fit in the paradigm of gravity/CFT correspondence?

\end{itemize}

\end{document}